\documentclass[sigconf]{acmart}

%\settopmatter{authorsperrow=4}

%\usepackage[lmargin=0.75in,rmargin=0.75in,tmargin=.9in,bmargin=.9in]{geometry}

\def\fullversion

\clubpenalty=10000
\widowpenalty = 10000

\usepackage{caption}
\usepackage{bm}
\usepackage{tikz}
\usepackage{style}
\usepackage{bigstrut}
\usepackage{microtype}
\usepackage{tabularx}
\usepackage{multirow}
\usepackage{multicol}
\usepackage{rotating}
\usepackage{balance}
\usepackage{titlecaps}%\titlecap
\usepackage{soul}

\usepackage{pifont}
%\newcommand{\cmark}{\ding{51}}%
%\newcommand{\xmark}{\ding{55}}%
%\usepackage{wasysym}
%\usepackage{utfsym}
%\usepackage{bbding}
%
%
%\usepackage{fontawesome}

%\usepackage{enumitem}

% the following removes the right margin in algorithm2e

%\usepackage{thmtools}
%\usepackage{thm-restate}
%\declaretheorem[name=Theorem,numberwithin=section]{thm}

%\usepackage[compact]{titlesec}
% \titlespacing*{\section}
% {0pt}{.7ex plus .5ex minus .2ex}{.3ex plus .1ex}
% \titlespacing*{\subsection}
% {0pt}{.5ex plus .5ex minus .2ex}{.3ex plus .1ex}
%\setlength\textheight{9.15in}

\usepackage{titlesec}
%\titleformat{\section}{\large\bfseries}{\thesection}{1em}{}
%\titleformat{\subsection}{\large\bfseries}{\thesubsection}{1em}{}
%\titleformat{\subsubsection}[runin]{\bfseries}{\thesubsubsection}{1em}{}

\setlength{\textfloatsep}{5pt}

\usepackage[font=small,labelfont=bf,skip=2pt]{caption}

\usepackage{etoolbox}
\makeatletter
% Remove right hand margin in algorithm
\patchcmd{\@algocf@start}% <cmd>
  {-1.5em}% <search>
  {0pt}% <replace>
  {}{}% <success><failure>

\patchcmd{\maketitle}{\@copyrightpermission}{
	\begin{minipage}{0.3\columnwidth}
		\href{https://creativecommons.org/licenses/by/4.0/}{\includegraphics[width=0.90\textwidth]{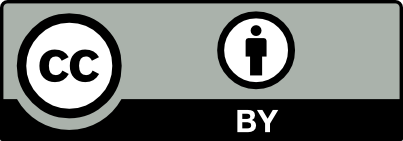}}
	\end{minipage}\hfill
	\begin{minipage}{0.7\columnwidth}
		\href{https://creativecommons.org/licenses/by/4.0/}{This work is licensed under a Creative Commons Attribution International 4.0 License.}
	\end{minipage}
	
	\vspace{5pt}
}{}{}

\makeatother
% Removing the right margin in algorithm2e, up to here

% PENALTY
\hide{
\binoppenalty=700
\brokenpenalty=0 %100
\clubpenalty=0   %150
\displaywidowpenalty=0   %50
\exhyphenpenalty=50
\floatingpenalty=20000
\hyphenpenalty=50
\interlinepenalty=0
\linepenalty=10
\postdisplaypenalty=0
\predisplaypenalty=0 %10000
\relpenalty=500
\widowpenalty=0  %150
}
%\def\@titlefont{\huge\sffamily\bfseries} % THIS LINE CHANGES THE FONT OF THE TITLE

%%%%%%%%%%%%%%%%%%%%
% FORMAT
%%%%%%%%%%%%%%%%%%%%
%    Title:
  \newcommand{\codeskip}{{\vspace{.03in}}}
  
%  \newcommand{\myparagraph}[1]{\noindent\emp{#1} \quad}

%    Others:

%  \newcommand*\circled[1]{\tikz[baseline=(char.base)]{
%            \node[shape=circle,draw,inner sep=2pt] (char) {#1};}}
  \newcommand{\nodecircle}[1]{{\textcircled{\raisebox{-.05em}{\footnotesize{#1}}}}}
  %\newcommand{\nodecircle}[1]{{\textcircled{\raisebox{-.05em}{\footnotesize{#1}}}}}

%    Math

%%%%%%%%%%%%%%%%%%%%
% general
%%%%%%%%%%%%%%%%%%%%

%%%%%%%%%%%%%%%%%%%%
% Our terms
%%%%%%%%%%%%%%%%%%%%
%    Common
  \newcommand{\pre}{\mathit{pre}\xspace}

  \newcommand{\dg}{dependence graph\xspace}
  \newcommand{\DG}{DG\xspace}

  \newcommand{\val}{\mathit{value}}
  
  \newcommand{\ff}{\mathcal{F}}

%    Problem
  \newcommand{\phaseparallel}{phase-parallel}

  \newcommand{\rank}{\mathit{rank}}

  %\newcommand{\ind}{\mathit{ind}}
  
  %\newcommand{\curset}{T\xspace}
  %\newcommand{\dpvalue}{DP value\xspace}

%    LIS
  \newcommand{\mathdp}{\mathit{dp}\xspace}
  \newcommand{\dpvalue}{$\mathdp$ value}
  \newcommand{\DPvalue}{\dpvalue}

  \newcommand{\leftmin}{\mathit{LMin}\xspace}

  \newcommand{\lislength}{k}
  \newcommand{\prefixminfunc}{\textsc{PrefixMin}}
  \newcommand{\processfrontier}{\textsc{ProcessFrontier}}
  \newcommand{\prefixmin}{{prefix-min}}
  \newcommand{\swgs}{SWGS}
  \newcommand{\relevant}{relevant}
  \newcommand{\ranklist}{\mathcal{F}}

%    Data Structure
  
  \newcommand{\vebfull}{\textsc{van Emde Boas}\xspace}
  \newcommand{\veb}{\textsc{vEB}\xspace}
  \newcommand{\monoveb}{\textsc{Mono-vEB}\xspace}
  \newcommand{\rangeveb}{\textsc{Range-vEB}\xspace}
  \newcommand{\dommax}{dominant-max\xspace}
  \newcommand{\Dommax}{DominantMax\xspace}
  \newcommand{\updatefunc}{\textsc{Update}\xspace}
  \newcommand{\domby}{\textsc{CoveredBy}\xspace}
  \newcommand{\dominate}{cover\xspace}
  \newcommand{\dominated}{covered\xspace}
  \newcommand{\dominates}{covers\xspace}
  \newcommand{\batchinsert}{\textsc{BatchInsert}\xspace}
  \newcommand{\batchdelete}{\textsc{BatchDelete}\xspace}
  \newcommand{\batchdeletehelper}{\textsc{BatchDeleteRecursive}}
  \newcommand{\insertop}{\textsc{Insert}\xspace}
  \newcommand{\deleteop}{\textsc{Delete}\xspace}
  \newcommand{\minop}{\textsc{Min}\xspace}
  \newcommand{\maxop}{\textsc{Max}\xspace}
  \newcommand{\predop}{\textsc{Predecessor}\xspace}
  \newcommand{\succop}{\textsc{Successor}\xspace}
  \newcommand{\memberop}{\textsc{Member}}
  \newcommand{\twin}{\mathcal{T}}
  \newcommand{\trange}{\mathcal{R}}
  \newcommand{\rangequery}{\textsc{Range}}

%%%%%%%%%%%%%%%%%%%%
% Just for veb tree
%%%%%%%%%%%%%%%%%%%%
  \newcommand{\vt}{\mathcal{V}}
  \newcommand{\tveb}{\vt}
  \newcommand{\setveb}{\mathit{\{\vt\}}}
  \newcommand{\high}{\mathit{high}}
  \newcommand{\low}{\mathit{low}}
  \newcommand{\summary}{\mathit{summary}}
  \newcommand{\cluster}{\mathit{cluster}}
  \newcommand{\idx}{\mathit{index}}
  \newcommand{\predessor}{\textsc{Pred}}
  \newcommand{\successor}{\textsc{Succ}}
  \newcommand{\vebpred}{\predessor}
  \newcommand{\vebsucc}{\successor}
  
  \newcommand{\univ}{\mathcal{U}}
  \newcommand{\bin}{B_{\mathit{in}}}
  \newcommand{\bout}{B_{\mathit{out}}}
  \newcommand{\treenode}{\tau}
  \newcommand{\midd}{\mathit{mid}}
  \newcommand{\kl}{k_L}
  \newcommand{\kr}{k_R}
  \newcommand{\mmin}{\mathit{min}}
  \newcommand{\mmax}{\mathit{max}}
  \newcommand{\aboldsymbol}[1]{{#1}}

%%%%%%%%%%%%%%%%%%%%
% Function names:
%%%%%%%%%%%%%%%%%%%%

%%%%%%%%%%%%%%%%%%%%
% Binary Forking / other cost models:
%%%%%%%%%%%%%%%%%%%%
  \newcommand{\forkins}{\texttt{fork}}

  \newcommand{\thread}{thread}

%%%%%%%%%%%%%%%%%%%%
% Math
%%%%%%%%%%%%%%%%%%%%
%    integer/real number Sets:

%    Some letters:

%    Thm environment:
  
  \newtheorem{lemma}{Lemma}[section]
  \newtheorem{definition}{Definition}[section]

%%%%%%%%%%%%%%%%%%%%
% notes and comments
%%%%%%%%%%%%%%%%%%%%
  
  \newcommand{\ifconference}[1]{{{\ifx\fullversion\undefined{#1}\fi}}}
  \newcommand{\iffullversion}[1]{{{\ifx\conference\undefined{#1}\fi}}}

%\makeatletter
%\def\dfnt@space@setup{%
%  \dfnt@preskip=\parskip
%    \dfnt@postskip=0pt}
%\makeatother

%%\let\proof\relax
%%\let\endproof\relax
%%\usepackage{amsthm} %http://ctan.org/pkg/amsthm
%%\newtheorem{theorem}{Theorem}
%\newtheoremstyle{exampstyle}
%  {.05in} % Space above
%  {.05in} % Space below
%  {} % Body font
%  {.5em} % Indent amount
%  {\sc \bfseries} % Theorem head font
%  {.} % Punctuation after theorem head
%  {.5em} % Space after theorem head
%  {} % Theorem head spec (can be left empty, meaning `normal')
%%\theoremstyle{exampstyle} \newtheorem{example}{Example}
%%\theoremstyle{exampstyle} \newtheorem{remark}{Remark}
%\theoremstyle{exampstyle} \newtheorem{compactdef}{Definition}
%\theoremstyle{exampstyle} \newtheorem{compactlem}{Lemma}

%\makeatletter
%\renewenvironment{proof}[1][\proofname]{\par
%  \vspace{-\topsep}% remove the space after the theorem
%  \pushQED{\qed}%
%  \normalfont
%  \topsep0pt \partopsep0pt % no space before
%  \trivlist
%  \item[\hskip\labelsep
%        \itshape
%    #1\@addpunct{.}]\ignorespaces
%}{%
%  \popQED\endtrivlist\@endpefalse
%  \addvspace{3pt plus 3pt} % some space after
%}

%cref
\crefname{section}{Sec.}{Sec.}
\crefname{theorem}{Thm.}{Thm.}
\crefname{algorithm}{Alg.}{Alg.}
%\crefname{lemma}{Lemma}{Lemma}
%\crefname{corollary}{Corollary}{Corollary}
\crefname{table}{Tab.}{Tab.}
\crefname{figure}{Fig.}{Fig.}
%\Crefname{table}{Tab.}{Tab.} 
\copyrightyear{2023}
\acmYear{2023}
\setcopyright{rightsretained}
\acmConference[SPAA '23]{Proceedings of the 35th ACM Symposium on Parallelism in Algorithms and Architectures}{June 17--19, 2023}{Orlando, FL, USA}
\acmBooktitle{Proceedings of the 35th ACM Symposium on Parallelism in Algorithms and Architectures (SPAA '23), June 17--19, 2023, Orlando, FL, USA}
\acmDOI{10.1145/3558481.3591069}
\acmISBN{978-1-4503-9545-8/23/06}

% Authors, replace the red X's with your assigned DOI string during the rightsreview eform process.
%% Your DOI link will become active when the proceedings appears in the DL.
\settopmatter{printacmref=true}

%\setcopyright{none}
%\renewcommand\footnotetextcopyrightpermission[1]{} % This line removes the footnote about the conference and year.

% Overwrite copyright field outside from cls file.
\usepackage{etoolbox}
\makeatletter
\patchcmd{\maketitle}{\@copyrightpermission}{
   \begin{minipage}{0.3\columnwidth}
     \href{https://creativecommons.org/licenses/by/4.0/}{\includegraphics[width=0.90\textwidth]{figures/cc_by4acm.png}}
   \end{minipage}\hfill
   \begin{minipage}{0.7\columnwidth}
     \href{https://creativecommons.org/licenses/by/4.0/}{This work is licensed under a Creative Commons Attribution International 4.0 License.}
   \end{minipage}

   \vspace{5pt}
}{}{}

\makeatother

\usepackage{array}
\settopmatter{authorsperrow=5}

\begin{document}
\fancyhead{}
%\title[A Work-Efficient Parallel Algorithm for Longest Increasing Subsequence]{A Work-Efficient Parallel Algorithm for Longest Increasing Subsequence}
%\title{Work-efficient Parallel Algorithms for Longest Increasing Subsequence Using Range Queries}
\title{Parallel Longest Increasing Subsequence and \\van Emde Boas Trees}
  \author{Yan Gu}
  \affiliation{\institution{UC Riverside}\country{}}
  \email{ygu@cs.ucr.edu}
  \author{Ziyang Men}
  \affiliation{\institution{UC Riverside}\country{}}
  \email{zmen002@ucr.edu}
   \author{Zheqi Shen}
  \affiliation{\institution{UC Riverside}\country{}}
  \email{zshen055@ucr.edu}
  \author{Yihan Sun}
  \affiliation{\institution{UC Riverside}\country{}}
  \email{yihans@cs.ucr.edu}
    \author{Zijin Wan}
  \affiliation{\institution{UC Riverside}\country{}}
  \email{zwan019@ucr.edu}

% needs to change
% \begin{CCSXML}
% 	<ccs2012>
% 	<concept>
% 	<concept_id>10003752</concept_id>
% 	<concept_desc>Theory of computation</concept_desc>
% 	<concept_significance>500</concept_significance>
% 	</concept>
% 	<concept>
% 	<concept_id>10003752.10003809</concept_id>
% 	<concept_desc>Theory of computation~Design and analysis of algorithms</concept_desc>
% 	<concept_significance>500</concept_significance>
% 	</concept>
% 	<concept>
% 	<concept_id>10003752.10003809.10010170</concept_id>
% 	<concept_desc>Theory of computation~Parallel algorithms</concept_desc>
% 	<concept_significance>500</concept_significance>
% 	</concept>
% 	<concept>
% 	<concept_id>10003752.10003809.10010170.10010171</concept_id>
% 	<concept_desc>Theory of computation~Shared memory algorithms</concept_desc>
% 	<concept_significance>300</concept_significance>
% 	</concept>
% 	</ccs2012>
% \end{CCSXML}

% \ccsdesc[500]{Theory of computation}
% \ccsdesc[500]{Theory of computation~Design and analysis of algorithms}
% \ccsdesc[500]{Theory of computation~Parallel algorithms}
% \ccsdesc[300]{Theory of computation~Shared memory algorithms}

% needs to change
% \keywords{parallel algorithms, longest increasing subsequence, van Emde Boas tree, dynamic programming, parallel data structure}

\begin{abstract}
	This paper studies parallel algorithms for the longest increasing subsequence (LIS) problem.
	Let $n$ be the input size and $k$ be the LIS length of the input.
	Sequentially, LIS is a simple problem that can be solved using dynamic programming (DP) in $O(n\log n)$ work.
	However, parallelizing LIS is a long-standing challenge.
	We are unaware of any parallel LIS algorithm that has optimal $O(n\log n)$ work
	and non-trivial parallelism (i.e., $\tilde{O}(k)$ or $o(n)$ span).
	
	This paper proposes a parallel LIS algorithm that costs $O(n\log k)$ work, $\tilde{O}(k)$ span, and $O(n)$ space,
	and is much simpler than the previous parallel LIS algorithms.
	We also generalize the algorithm to a weighted version of LIS, which maximizes the weighted sum for all objects in an increasing subsequence.
	To achieve a better work bound for the weighted LIS algorithm, we designed parallel algorithms for the \vebfull{} (\veb) tree, which
	has the same structure as the sequential \veb{} tree, and supports work-efficient parallel batch insertion, deletion, and range queries.
	
	We also implemented our parallel LIS algorithms.  Our implementation is light-weighted, efficient, and scalable.
	On input size $10^9$, our LIS algorithm outperforms a highly-optimized sequential algorithm (with $O(n\log k)$ cost) on inputs with $k\le 3\times 10^5$.
	Our algorithm is also much faster than the best existing parallel implementation by Shen et al.~(2022) on all input instances.
\end{abstract}

	%Here, the work of a parallel algorithm is the total number of operations, and the span is the longest dependent instruction.
	%Our weighted LIS algorithm has $O(n\log^2 n)$ work and $\tilde{O}(k)$ span.

\maketitle

\makeatletter
\newcommand{\removelatexerror}{\let\@latex@error\@gobble}
\makeatother

\section{Introduction}\label{sec:intro}

This paper studies parallel algorithms for classic and weighted longest increasing subsequence problems (LIS and WLIS, see definitions below).
We propose a \defn{work-efficient parallel LIS algorithm} with $\tilde{O}(k)$ span, where $k$ is the LIS length of the input.
%To achieve efficient work bound for the
Our WLIS algorithm is based on a new data structure that \defn{parallelizes the famous \vebfull{} (\veb) tree}~\cite{van1977preserving}.
Our new algorithms improve existing theoretical bounds on the parallel LIS and WLIS problem, as well as enable simpler and more efficient implementations.
Our parallel \veb{} tree supports work-efficient batch insertion, deletion and range query with polylogarithmic span.

Given a sequence~$A_{1..n}$ and a comparison function on the objects in $A$,
the LIS of $A$ is the longest subsequence (not necessarily contiguous) in $A$ that is strictly increasing (based on the comparison function).
In this paper, we use LIS to refer to both the longest increasing subsequence of a sequence, and the problem of finding such an LIS.
LIS is one of the most fundamental primitives and has extensive applications (e.g.,~\cite{delcher1999alignment,gusfield1997algorithms,crochemore2010fast,schensted1961longest,oprimer,zhang2003alignment,altschul1990basic,deift2000integrable}). %, particularly in bioinformatics.
%Sequentially, LIS can be solved by a simple dynamic programming (DP) algorithm in $O(n\log n)$ work.
%LIS is a classic textbook
In this paper, we use $n$ to denote the input size and $\lislength$ to denote the LIS length of the input.
%Sequentially, LIS can be solved by a simple textbook dynamic programming (DP) algorithm~\cite{CLRS}
%using the following DP recurrence (more details in \cref{sec:prelim}).
LIS can be solved by dynamic programming (DP) using the following DP recurrence (more details in \cref{sec:prelim}).
\vspace{-.05in}
\begin{equation}\label{eqn:lis}
	\mathdp[i]=\max(1,{\max}_{j<i,A_j<A_i}\mathdp[j]+1)
\end{equation}

Sequentially, LIS is a straightforward textbook problem ~\cite{dasgupta2008algorithms,goodrich2015algorithm}.
We can iteratively compute $\mathdp[i]$ using a search structure to find ${\max}_{j<i,A_j<A_i}\mathdp[j]$, which gives $O(n\log n)$ work.
However, in parallel, LIS becomes challenging both in theory and in practice.
%We are unaware of theoretically efficient (work-efficiency
%\emph{or} practically-efficient parallel LIS algorithm.
In theory, we are unaware of parallel LIS algorithms with $O(n\log n)$ work and non-trivial parallelism ($o(n)$ or $\tilde{O}(\lislength)$ span).
%\footnote{The work of a parallel algorithm is the total number of operations in the algorithm, and the span is the longest dependence of instructions. We formally define work and span in \cref{sec:prelim}.}.
In practice, we are unaware of parallel LIS implementations
that outperform the sequential algorithm on general input distributions.
%\defn{The goal of this paper is to combine theory and practice in parallel LIS algorithms.}
\emph{We propose new LIS algorithms with improved work and span bounds in theory, which also lead to
	a more practical parallel LIS implementation. }%\yan{does this repeat the previous sentence?}

Our work follows some recent research~\cite{blelloch2012internally,BFS12,fischer2018tight,hasenplaugh2014ordering,pan2015parallel,shun2015sequential,blelloch2016parallelism,blelloch2018geometry,blelloch2020randomized,gu2022parallel,shen2022many,blelloch2020optimal} %which reveals an interesting observation:
%\emph{many sequential iterative algorithms are inherently parallel} when the \defn{dependences} among iterations (or objects) are carefully analyzed.
that directly parallelizes sequential iterative algorithms.
%Examples include Knuth's shuffle~\cite{shun2015sequential}, list ranking~\cite{shun2015sequential,blelloch2020optimal}, Delaunay triangulation~\cite{blelloch2016parallelism,blelloch2018geometry}, convex hull~\cite{blelloch2020randomized}, and more~\cite{blelloch2016parallelism,blelloch2018geometry,blelloch2020optimal,blelloch2020randomized,jones1993parallel,hasenplaugh2014ordering,BFS12,fischer2018tight,iterative,shun2015sequential,blelloch2012internally,pan2015parallel,tomkins2014sccmulti}.
%Such algorithms are usually practical and easy to understand, given their \emph{simplicity} and \emph{connections to existing sequential algorithms}.
Such algorithms are usually simple and practical, given their connections to sequential algorithms.
To achieve parallelism in a ``sequential'' algorithm,
the key is to identify the \defn{dependences}~\cite{shun2015sequential,blelloch2016parallelism,blelloch2020randomized,shen2022many} among the objects.
In the DP recurrence of LIS,
processing an object $x$ \defn{depends on} all objects $y<x$ before it, but does not need to wait for objects before it with a larger or equal value.
%The dependences can be represented as a \defn{\dg{}} (\defn{\DG{}})~\cite{shun2015sequential,blelloch2020randomized,blelloch2016parallelism}.
%In the \DG{}, the vertices are objects, and an edge from $u$ to $v$ means that $v$ can be executed only when $u$ finishes.
%An example of the \dg{} of the LIS problem is presented in \cref{fig:lis-previous,fig:lis}.

An ``ideal'' parallel algorithm should process all objects in a proper order based on the dependencies---it should 1) process as many objects as possible in parallel (as long as they do not depend on each other), and 2) process an object only when it is \defn{ready} (all objects it depends on are finished) to avoid redundant work.
%To understand the two requirements, we first briefly introduce the work-span model used in this paper (formally defined in \cref{sec:prelim}), which is standard in analyzing parallel algorithms.
%To formalize the two requirements, we use the standard work-span model for parallel algorithms (formally defined in \cref{sec:prelim}).
%The work of a parallel algorithm is the total number of operations in the algorithm, and the span is the longest dependence of instructions.
%For the first requirement, our goal is to achieve \defn{work-efficiency}, i.e., the work of the parallel algorithm is $O(f(n))$, where $f(n)$ is the time complexity of the best sequential algorithm. %We also say an algorithm is \defn{nearly-work-efficient} if its work is $\tilde{O}(f(n))$.
%To formalize the two requirements, we say an algorithm is \defn{round-efficient}~\cite{shen2022many} if its span is $\tilde{O}(D)$ for a \DG{} of depth $D$.
More formally, we say an algorithm is \defn{round-efficient}~\cite{shen2022many} if its span is $\tilde{O}(D)$ for a computation with the longest logical dependence length~$D$.
%which is the LIS length $\lislength$ for the LIS problem as is indicated by \cref{eqn:lis}.
In LIS, the logical dependence length given by the DP recurrence is the LIS length $\lislength$.
We say an algorithm is \defn{work-efficient} if its work is asymptotically the same as the best sequential algorithm.
%if its work $O(f(n))$, %and \defn{nearly-work-efficient} if its work is $\tilde{O}(f(n))$,
%where $f(n)$ is the cost of the best sequential algorithm.
Work-efficiency is \emph{crucial in practice},
%since nowadays, the number of processors on one machine is tens to hundreds (roughly polylogarithmic to memory sizes).
since nowadays, the number of processors on one machine (tens to hundreds) is much smaller than the problem size.
A parallel algorithm is less practical if it significantly blows up the work of a sequential algorithm. %by a polynomial factor.
%Therefore, a parallel algorithm is less practical if it has a polynomial overhead in work than.
%For the second requirement, our goal is to achieve \defn{round-efficiency}~\cite{iterative}, i.e., the span of the algorithm is $\tilde{O}(D)$ for a \DG{} of depth $D$\footnote{We note that round-efficiency does not guarantee optimal span, since round-efficiency is with respect to a given \DG{}. To get a better span, one can re-design a completely different algorithm with a shallower \DG{}. }.
%We say a computation is \defn{work-efficient} if its work is asymptotically the same as the best sequential solution.
%We say a computation is \defn{work-efficient} if its work is $O(W)$, where $W$ is the work of the best sequential solution.
%the LIS length of the input.
%Therefore, a round-efficient parallel LIS algorithm should have $\tilde{O}(\lislength)$ span.

Unfortunately, there exists no parallel LIS algorithm with both work-efficiency and round-efficiency.
%The difficulty is that the number of dependencies can be $\Theta(n^2)$ (\cref{fig:lis-previous}(a)),
%and a work-efficient solution cannot even afford to generate all dependences.
%Sequentially, we can skip some dependencies with certain optimizations (e.g., range queries for LIS),
%but in parallel, it is hard to decide the processing order without evaluating all dependences.
Most existing parallel LIS algorithms are not work-efficient~\cite{galil1994parallel,krusche2009parallel,seme2006cgm,thierry2001work,nakashima2002parallel,nakashima2006cost,krusche2010new,shen2022many},
or have $\tilde{\Theta}(n)$ span~\cite{alam2013divide}.
%Krusche and Tiskin's BSP algorithm~\cite{krusche2010new} translates to $O(n\log^2 n)$ work and $\tilde{O}(n^{2/3})$ span, which is the only $\tilde{O}(n)$ algorithm with $o(n)$ span we know of.
%The only algorithm with $\tilde{O}(n)$ work and $o(n)$ span we know of is from Krusche and Tiskin ($O(n\log^2 n)$ work and $\tilde{O}(n^{2/3})$ span), which
%relies on complicated techniques from~\cite{tiskin2015fast}, and has no implementation.
%There also exist general frameworks to parallelize sequential iterative algorithms~\cite{blelloch2020optimal,blelloch2016parallelism,blelloch2012internally,iterative}.
%However, most of them require to (implicitly or explicitly) traverse the \dg{}.
%As a result, directly applying them to the LIS problem is work-inefficient. %for LIS since the number of edges (dependences) can be $\Theta(n^2)$.
We review more related work in~\cref{sec:related}. %\yan{check}

Our algorithm is based on the parallel LIS algorithm and the \defn{\phaseparallel{} framework} by Shen et al.~\cite{shen2022many}.
%Using this framework, they proposed a parallel LIS algorithm.
%They also proposed a parallel LIS algorithm using this framework.
We refer to it as the \swgs{} algorithm, and review it in \cref{sec:prelim}.
%For the sequential iterative algorithms, the \phaseparallel{} framework assigns a \defn{rank} to each object,
%such that an object only relies on other objects with a lower rank.
%As a result, we can process all elements based on increasing rank order.
%In LIS, the rank of each object is the LIS length ending at it (the DP value computed by \cref{eqn:lis}).
%We also define the \emph{rank} of a sequence as its LIS length.
The \phaseparallel{} framework defines a \defn{rank} for each input object as the length of LIS ending at it (the \dpvalue{} in \cref{eqn:lis}).
Note that an object only depends on lower-rank objects.
%The \phaseparallel{} LIS algorithm (\cref{algo:phaseparallel}) processes all objects with LIS length $i$ (in parallel) in round $i$.
Hence, the \phaseparallel{} LIS algorithm processes all objects based on the increasing order of ranks.
However, the \swgs{} algorithm takes $O(n\log^3 n)$ work \whp{}, $\tilde{O}(\lislength)$ span, and $O(n\log n)$ space, and is quite complicated.
In the experiments, the overhead in work and space limits the performance.
On a 96-core machine and input size of $10^8$, \swgs{} becomes slower than a sequential algorithm when the LIS length $k>100$.
%In addition, their algorithm is also complicated using a carefully-designed wake-up scheme and range query data structures.

In this paper, we propose a parallel LIS algorithm that is \defn{work-efficient} ($O(n\log \lislength)$ work), \defn{round-efficient} ($\tilde{O}(\lislength)$ span) and \defn{space-efficient} ($O(n)$ space), and is \defn{much simpler than previous parallel LIS algorithms}~\cite{shen2022many,krusche2010new}.
%Recall that $n$ is the input sequence size, and $k$ is the LIS length of the input.
%Note that the work of our algorithm is parameterized on the LIS length $k$, which can be $o(n\log n)$ for inputs with small LIS lengths.
%Using the parallel tournament tree, our algorithm process each object exactly once in logarithmic work per object.
Our result is summarized in \cref{thm:main}.

\begin{theorem}[LIS]\label{thm:main}
	Given a sequence $A$ of size $n$ and LIS length $\lislength$,
	the longest increasing subsequence (LIS) of $A$ can be computed in parallel with $O(n\log \lislength)$ work, $O(\lislength\log n)$ span, and $O(n)$ space.
\end{theorem}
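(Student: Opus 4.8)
The plan is to instantiate the \phaseparallel{} framework and reduce the entire computation to repeatedly peeling off \emph{prefix minima}. First I would prove the structural fact that drives everything: after all objects of rank $1,\dots,r-1$ have been removed, the objects whose rank equals $r$ are exactly the prefix minima of the surviving subsequence (ordered by position), i.e.\ the objects $x$ such that no surviving object before $x$ has a smaller value. This follows directly from \cref{eqn:lis}: an object $x$ has rank $r$ iff the maximum rank among its predecessors (objects at an earlier position with a smaller value) is exactly $r-1$; by induction every survivor has rank $\ge r$, so $x$ is a prefix minimum of the survivors iff none of its predecessors survive, iff every predecessor already has rank $\le r-1$, iff its maximum predecessor rank is $\le r-1$. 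Together with the bound $\ge r-1$ coming from $\mathrm{rank}(x)\ge r$, this forces the maximum predecessor rank to equal $r-1$, i.e.\ $\mathrm{rank}(x)=r$. Hence the objects split into exactly $\lislength$ frontiers $F_1,\dots,F_{\lislength}$, each a value-decreasing, position-increasing antichain, and processing them in order recovers all dp values and the LIS itself.

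Next I would give the per-phase routine \processfrontier{} that extracts $F_r$ and deletes it, and bound its \spanc{}. The plan is to keep the surviving objects in a balanced search tree keyed by position, augmented so each node stores the minimum value in its subtree. To extract the prefix minima I run a top-down search that passes to each node a threshold equal to the minimum value among surviving objects strictly to its left; a leaf lies in $F_r$ iff its value beats its threshold. The key idea for parallelism is that the threshold handed to the right child equals the minimum of the incoming threshold and the \emph{precomputed} subtree-minimum of the left child, so the two child searches become independent and recurse in parallel, instead of serializing left before right. A subtree is entered only when its stored minimum is below the incoming threshold, i.e.\ only when it genuinely contains a member of $F_r$; thus the search touches $O(\lvert F_r\rvert\log n)$ nodes and runs in $O(\log n)$ \spanc{}. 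Summing over the $\lislength$ phases yields the claimed $O(\lislength\log n)$ \spanc{}, and the $O(n)$ space is immediate since every structure is linear in the number of survivors.

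The delicate part is the $O(n\log \lislength)$ work bound, since the augmented-tree search above visits $O(\lvert F_r\rvert\log n)$ nodes per phase and only sums to $O(n\log n)$. To replace the $\log n$ by $\log \lislength$, I would move to a two-level structure: cut the positions into contiguous blocks of size $\Theta(\lislength)$, keep a per-block search tree of depth $O(\log \lislength)$, and keep a coarse tree over the $\Theta(n/\lislength)$ block minima. Each phase first computes, in parallel over blocks, the running minimum carried into each block (a prefix-min over the coarse tree) and then extracts, within each block, the newly exposed prefix minima using that block's tree; since each object is extracted exactly once at cost $O(\log \lislength)$, the within-block work totals $O(n\log \lislength)$. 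The main obstacle I anticipate is controlling the cross-block bookkeeping: a block should be touched only when its carried-in threshold actually decreases or it contributes to $F_r$, and I expect to need an amortization argument—charging each block visit either to a freshly extracted object or to a threshold decrease—to keep the coarse-level work inside $O(n\log \lislength)$ and the extra \spanc{} inside $O(\log n)$ per phase. Finally, since $\lislength$ is not known in advance, I would wrap the whole procedure in a standard guess-and-double on $\lislength$, which inflates the bounds by only a constant factor.
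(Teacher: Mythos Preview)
Your structural lemma and your single-level augmented-tree routine are essentially the paper's approach (they use a tournament tree, which is exactly your balanced tree with subtree-minima, and the same ``pass $\min(\text{threshold},\text{left subtree min})$ to the right child'' trick to recurse in parallel). Where you diverge is in the work analysis: you bound the nodes touched in round $r$ by $O(|F_r|\log n)$, conclude this only sums to $O(n\log n)$, and then add a two-level blocking scheme plus guess-and-double on $\lislength$ to recover $O(n\log\lislength)$. That machinery is unnecessary.

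The point you are missing is that your own pruning argument already gives the sharper per-round bound $O(|F_r|\log(n/|F_r|))$, not $O(|F_r|\log n)$. You correctly note that a subtree is entered only when it contains a member of $F_r$; the nodes touched are therefore (up to a factor of two for the skipped siblings) exactly the ancestors of the $|F_r|$ extracted leaves. In a balanced binary tree on $n$ leaves, any set of $m$ leaves has only $O(m\log(n/m))$ distinct ancestors, because at depth $d$ from the leaves there are at most $\min(m,n/2^d)$ of them. Summing over rounds and using concavity of $x\mapsto x\log(n/x)$ (Jensen with $\sum_r m_r=n$ over $\lislength$ terms) gives
\[
\sum_{r=1}^{\lislength} m_r\log(n/m_r)\;\le\;\lislength\cdot\frac{n}{\lislength}\log\!\Big(\frac{n}{n/\lislength}\Big)\;=\;n\log\lislength,
\]
which is exactly the claimed work bound. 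So the simple single-tree algorithm you describe in your second paragraph already meets all three bounds of \cref{thm:main}; the blocking, the cross-block amortization you were worried about, and the doubling on $\lislength$ can all be dropped.
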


We also extend our algorithm to the \defn{weighted LIS (WLIS)} problem, which has a similar DP recurrence as LIS but maximizes the weighted sum instead of the number of objects in an increasing subsequence.
\vspace{-.05in}
\begin{equation}\label{eqn:lisdpweighted}
	\mathdp[i]=w_i + \max(0,{\max}_{j<i,A_j<A_i}\mathdp[j])
\end{equation}

\noindent where $w_i$ is the weight of the $i$-th input object. %The standard LIS problem can be viewed as assigning a unit weight for each object.
%Interestingly, it seems that the weighted version is more complex than the unweighted version.
%By combining our new algorithm and the idea in the \swgs{} algorithm, we achieve a parallel algorithm that is round-efficient and has $\tilde{O}(n)$ work.
We summarize our result in \cref{thm:mainweighted}.
%Interestingly, WLIS becomes more complicated than LIS.

\hide{
	\begin{theorem}[WLIS]\label{thm:mainweighted}
		Given a sequence $A$ of size $n$ and LIS length $\lislength$, the weighted LIS of $A$ can be computed using $O(n\log n\log \log n)$ work, $O(\lislength\log^2 n)$ span, and $O(n\log n)$ space.
	\end{theorem}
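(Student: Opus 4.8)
The plan is to reduce WLIS to a sequence of batched two-dimensional dominance-max queries, orchestrated by the \phaseparallel{} framework and backed by the parallel \veb{} tree. The dependence structure of WLIS is identical to that of LIS---object $i$ depends on object $j$ exactly when $j<i$ and $A_j<A_i$---so the longest dependence chain again has length $\lislength$. I would first invoke the LIS algorithm of \cref{thm:main} to compute, for every object, its (unweighted) rank, i.e., the length of the longest increasing subsequence ending at it. This costs $O(n\log\lislength)$ work and $O(\lislength\log n)$ span, both dominated by the target bounds, and partitions the objects into $\lislength$ phases according to rank.

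I would then sweep the phases in increasing rank order. The crucial observation is that every object of rank $r$ depends only on objects of rank strictly less than $r$, that objects of equal rank are mutually independent, and that for an object $i$ of rank $r$ the relevant maximands $\{\,j : j<i,\ A_j<A_i\,\}$ all have rank $<r$ and were therefore finalized in earlier phases. Hence for phase $r$ I would (i)~\textbf{batch-query} the structure, obtaining $M_i:=\max_{j<i,\,A_j<A_i}\mathdp[j]$ over all previously inserted objects for each rank-$r$ object $i$; (ii)~set $\mathdp[i]=w_i+\max(0,M_i)$ following \cref{eqn:lisdpweighted}; and (iii)~\textbf{batch-insert} the rank-$r$ objects. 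Querying before inserting the current phase excludes the mutually independent same-rank objects, while the position predicate $j<i$ inside the query automatically discards lower-rank objects lying to the right of $i$, so no bookkeeping of positions is needed beyond the query itself.

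The core component is a structure for batched 2D dominance-max. I would use a range tree keyed on position---a static skeleton over the index set $1..n$---where each node, spanning a contiguous block of positions, stores a parallel \veb{} tree keyed on value that maintains the maximum $\mathdp$ among the inserted points in its block and supports range-max queries. A single dominance query descends to the $O(\log n)$ canonical position-blocks covering $[1,i-1]$ and takes a value range-max over $[1,A_i-1]$ in each block's \veb{}, for $O(\log n\log\log n)$ cost; a single insert updates the $O(\log n)$ ancestor blocks, also $O(\log n\log\log n)$. Since each point is inserted into $O(\log n)$ blocks, summing over all $n$ objects yields $O(n\log n\log\log n)$ work and $O(n\log n)$ space. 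To batch a phase with $m_r$ operations, I would advance through the $O(\log n)$ outer levels and call the batched \veb{} primitives once per level, giving span $O(\log^2 n)$ per phase and $O(\lislength\log^2 n)$ over the $\lislength$ sequentially dependent phases.

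The main obstacle is not the framework but the batched 2D structure: I must argue that the batched \veb{} range-query and insert primitives compose across the $O(\log n)$ range-tree levels without inflating the work beyond $O(m_r\log n\log\log n)$ per phase (e.g., avoiding repeated rescans of a block's value set) while holding the per-phase span at $O(\log^2 n)$. Granting the batched parallel \veb{} bounds established earlier for the tree, aggregating the per-phase work and span and adding the $O(n\log n)$ skeleton gives exactly the claimed $O(n\log n\log\log n)$ work, $O(\lislength\log^2 n)$ span, and $O(n\log n)$ space.
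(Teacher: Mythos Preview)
Your high-level plan---compute ranks via the unweighted LIS algorithm, then sweep the $\lislength$ frontiers in order, using a range-tree-over-\veb{} structure for dominance-max---matches the paper's framework, with the roles of position and value swapped between the outer and inner levels. That swap is harmless in principle. The real gap is in the inner structure: you assert that each inner \veb{} tree ``maintains the maximum $\mathdp$ \ldots\ and supports range-max queries,'' but \veb{} trees do not support range-max over associated scores. \cref{thm:veb} gives only batch insert/delete, predecessor/successor, and range \emph{reporting}; there is no subtree-augmentation mechanism analogous to a BST's, because the $\mmin$/$\mmax$ of a \veb{} subtree are stored directly rather than recursively, so an augmented value cannot be maintained in $O(\log\log U)$. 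The paper's essential idea---absent from your proposal---is the \monoveb{} (staircase) trick: keep in each inner tree only the Pareto-optimal points (those not \dominated{} by a point with smaller key and at-least-as-large score). Then the maximum score among keys below $q$ is simply the score of $\vebpred(q)$, a single $O(\log\log U)$ call. Maintaining the staircase under batch updates is where the work analysis lives: each insertion batch must first identify and batch-delete the newly \dominated{} points (the \domby{} routine), and the $O(n\log n\log\log n)$ total work follows from the amortized argument that each point enters and leaves each inner tree at most once. Without this trick your inner structure does not achieve the claimed per-operation cost.

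A secondary issue: keying the inner \veb{} on values $A_j$ requires those values to be integers in a bounded universe. The paper deliberately keys the inner trees on \emph{indices} (integers in $[1,n]$) and the outer tree on values, precisely so that the algorithm stays comparison-based on arbitrary inputs; your swapped layout would need an explicit rank-compression of the values (and per-inner-tree relabeling to keep the universe at the subtree size, for the space bound), which you should state.
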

}

\begin{theorem}[WLIS]\label{thm:mainweighted}
	Given a sequence $A$ of size $n$ and LIS length $\lislength$, the weighted LIS of $A$ can be computed using $O(n\log n\log \log n)$ work, $O(\lislength\log^2 n)$ span, and $O(n\log n)$ space.
\end{theorem}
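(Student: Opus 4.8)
The plan is to realize WLIS inside the same \emph{\phaseparallel} framework used for \cref{thm:main}, replacing the rank-propagation step by a batched two-dimensional dominance-max computation supported by our parallel \veb tree. First I would note that the \emph{rank} of each object --- the length of the LIS ending at it, i.e.\ the \emph{unweighted} $\mathdp$ value --- does not depend on the weights $w_i$ at all, only on $A$ and the comparison. I would therefore compute all ranks as preprocessing using the machinery behind \cref{thm:main}, at $O(n\log\lislength)$ work and $O(\lislength\log n)$ span, which partitions the input into $\lislength$ levels and fixes a schedule of $\lislength$ rounds. The governing invariant is that in \cref{eqn:lisdpweighted} the weighted $\mathdp$ value of an object of rank $r$ depends only on objects of strictly smaller rank (all its valid predecessors have rank $\le r-1$); hence processing levels in increasing rank order is correct and \emph{round-efficient}, which delivers the $\tilde{O}(\lislength)$ span target.

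Within round $r$, every object of rank $<r$ is already \finished and carries its final weighted $\mathdp$ value, and the rank-$r$ objects are mutually independent and can be handled in parallel. For a rank-$r$ object $x_i$, the recurrence needs $\max_{j<i,\,A_j<A_i}\mathdp[j]$, which is precisely a two-dimensional dominance-max query over the \finished objects: maximize $\mathdp$ over points with both smaller index and smaller value. I would coordinate-compress indices and values into the integer universe $\univ=[n]$ and maintain a structure supporting (i) a \emph{batched} \dommax query that answers all rank-$r$ objects at once, and (ii) a \batchinsert that, at the close of round $r$, inserts all rank-$r$ objects together with their just-computed $\mathdp$ values. Because each insertion happens strictly after the round that finishes the predecessors, every query sees exactly the valid candidate set, so correctness reduces to the rank ordering.

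For the structure itself I would invoke the parallel \veb tree, augmented to answer \dommax queries: the \veb recursion over one integer coordinate contributes the $O(\log\log n)$ factor per elementary step, while a secondary dominance augmentation resolves the other coordinate, contributing one logarithmic factor and $O(n\log n)$ space (each point replicated across $O(\log n)$ canonical ranges). A single \dommax query then costs $O(\log n\log\log n)$ work, so over all $n$ objects the total query-and-insert work is $O(n\log n\log\log n)$, and the space is $O(n\log n)$, as claimed. For span, each of the $\lislength$ rounds performs one batched \dommax and one \batchinsert, each composing the $O(\log n)$-depth augmentation with the low-span batch primitives of the parallel \veb tree for $O(\log^2 n)$ span per round, hence $O(\lislength\log^2 n)$ overall. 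Recovering the actual subsequence is the standard back-pointer step and stays within budget.

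The main obstacle is the parallel \veb tree: supporting work-efficient \batchinsert, \batchdelete, and range/\dommax queries with only polylogarithmic span, while keeping the classical \veb structure so that the $O(\log\log n)$ per-operation cost survives batching rather than degenerating to the $O(\log^2 n)$-per-operation cost of an ordinary range tree. The point of the \veb layer is exactly to replace one $\log n$ factor by $\log\log n$ without sacrificing parallel work-efficiency, and establishing that its batch primitives achieve this is the technical heart of the argument. A secondary subtlety is verifying that the batched queries and insertions interleave correctly with the rank schedule, so that no query is answered before all of its predecessors have been inserted; this follows from strict rank ordering but must be argued explicitly to guarantee correctness.
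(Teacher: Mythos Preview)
Your plan is correct and matches the paper's approach: precompute ranks via the unweighted algorithm, process the $\lislength$ frontiers in order, and answer each \dpvalue{} with a 2D \dommax{} query over a range-tree-like outer index whose inner structures are \veb{} trees on the integer coordinate. The work, span, and space accounting you give is exactly the paper's.

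The one mechanism you leave implicit is \emph{how} a \veb{} inner tree answers a \dommax{} query at all: \veb{} trees support predecessor/successor but cannot maintain subtree-max augmentations in $O(\log\log n)$. The paper's fix is to keep each inner tree as a \emph{\monoveb{}}, storing only the \emph{staircase} of its point set (points not \dominated{} by any other), so that ``max score with key ${<}q_y$'' reduces to a single $\vebpred$ call. This is also why \batchdelete{} is essential and not merely listed: each \textsc{Update} must first remove from every affected inner tree the points now \dominated{} by the newly inserted batch (the paper's \domby{} step), and the amortized work argument hinges on each key being inserted and deleted at most once per inner tree. Your sketch is compatible with all this, but the staircase invariant is the load-bearing idea you would need to make explicit.
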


Our primary techniques to support both LIS and WLIS rely on better data structures for 1D or 2D \defn{prefix min/max queries} in the \phaseparallel{} framework.
For the LIS problem, our algorithm efficiently identifies all objects with a certain rank using a \defn{parallel tournament tree}
that supports 1D dynamic prefix-min queries, i.e., given an array of values, find the minimum value for each prefix of the array.
For WLIS, we design efficient data structures for 2D dynamic ``prefix-max'' queries,
which we refer to as \dommax{} queries (see more details in \cref{sec:wlis}).
Given a set of 2D points associated with values, which we refer to as their \emph{scores},
a \dommax{} query returns the largest score to the bottom-left of a query point.
Using \dommax{} queries, given an object $x$ in WLIS, we can find the maximum \dpvalue{} among all objects that $x$ depends on.
We propose two solutions focusing on theoretical and practical efficiency, respectively.
In practice, we use a \defn{parallel range tree} similar to that in \swgs{}, which results in $O(n\log^2 n)$ work and $\tilde{O}(k)$ span for WLIS.
In theory, we parallelize the \defn{\vebfull{} (\veb) tree}~\cite{van1977preserving} and integrate it into range trees to achieve a better work bound for WLIS.
%to maintain the \emph{staircase} of elements and their \dpvalue{s} to facilitate the range-max search.

\begin{figure}
	\centering
	\includegraphics[width=\columnwidth]{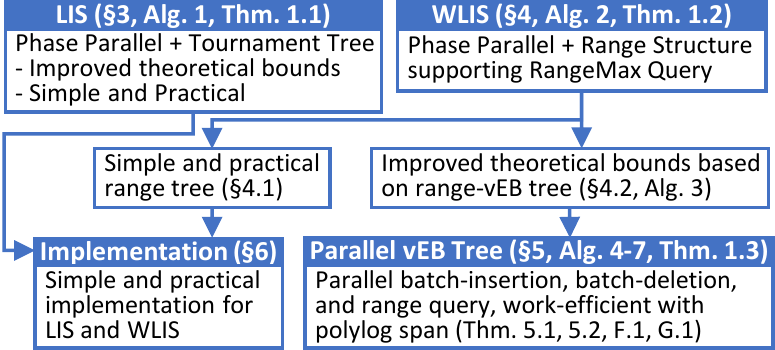}
	\caption{\textbf{Outline and contributions of this paper.}}\label{fig:outline}
	\vspace{-.5em}
\end{figure}

The van Emde Boas (vEB) tree~\cite{van1977preserving} is a famous data structure for priority queues and ordered sets on integer keys, and is introduced in many textbooks (e.g.,~\cite{CLRS}).
%The \veb{} tree is a functional data structure that optimizes the searching complexity on integer keys, and is widely applied in many applications sequentially.
To the best of our knowledge, our algorithm is the \textbf{first parallel version of \veb{} trees}. We believe our algorithm is of independent interest in addition to the application in WLIS.
%The challenge of parallelizing \veb{} trees is that the classic \veb tree algorithms are inherently sequential,
%and thus we need a non-trivial redesign of both the interface and the algorithms. %(see more discussions in \cref{sec:veb-results}).
We note that it is highly non-trivial to redesign and parallelize \veb{} trees because the classic \veb tree interface and algorithms are inherently sequential.
Our parallel \veb{} tree supports a general ordered set abstract data type on integer keys in $[0,U)$ with bounds stated below.
We present more details in \cref{sec:veb}.

\begin{theorem}[Parallel \veb{} Tree]\label{thm:veb}
	Let $\univ$ be a universe of all integers in range $[0,U)$.
	Given a set of integer keys from $\univ$, there exists a data structure
	that has the same organization as the sequential \veb{} tree, and supports:
	\begin{itemize}[leftmargin=*]
		\item single-point insertion, deletion, lookup, reporting the minimum (maximum) key, and reporting the predecessor and successor of an element, all in $O(\log\log U)$ work, using the same algorithms for sequential \veb{} trees;
		\item \batchinsert{}$(B)$ and \batchdelete{}$(B)$ that insert and delete a sorted batch $B\subseteq \univ$ in the \veb tree with $O(|B|\log\log U)$ work %and $O(\log U\log \log U)$ span;
		and $O(\log U)$ span;
		\item \rangequery{}$(k_L,k_R)$ that reports all keys in range $[k_L,k_R]$ in $O((1+m)\log \log U)$ work and $O(\log U\log \log U)$ span, where $m$ is the output size.
	\end{itemize}
\end{theorem}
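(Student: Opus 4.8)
The plan is to exploit the fact that our data structure keeps the exact recursive layout of a sequential \veb{} tree: a node over universe $[0,U)$ stores a lazily-held minimum and maximum, a $\summary$ structure over $[0,\sqrt{U})$, and $\sqrt{U}$ $\cluster$ substructures over $[0,\sqrt{U})$, with $\high(x)$ and $\low(x)$ selecting the cluster and the offset within it. For the first bullet there is essentially nothing to redesign: since the organization is identical to the sequential tree, the textbook routines for \insertop, \deleteop, \memberop, \minop, \maxop, \predop, and \succop{} run verbatim, and the recursion depth is $O(\log\log U)$ because the universe is square-rooted at each level. I would simply record this and move on to the batched operations, where all of the work lies.

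For \batchinsert{}$(B)$ with $B$ sorted, the plan is a recursive partition by high bits. First I would split $B$ into sub-batches $B_h=\{x\in B:\high(x)=h\}$; since $B$ is sorted these are contiguous segments, found by one parallel scan for boundaries. I would then recurse, in parallel, to insert $\{\low(x):x\in B_h\}$ into each nonempty $\cluster[h]$, while simultaneously recursing on the $\summary$ to insert exactly the indices $h$ of clusters that were \emph{previously empty}; the lazily stored min/max of each node are patched in $O(1)$. \batchdelete{} is symmetric: recurse into each cluster to delete, promote a fresh lazy minimum wherever the stored min is removed, and delete from the $\summary$ precisely the indices of clusters that have just become empty. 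The span obeys $S(U)=O(\log U)+S(\sqrt{U})$: the $O(\log U)$ term is the depth of the fork tree spawning up to $\sqrt{U}$ parallel cluster calls plus the summary call, and since the fork width at recursion level $i$ is at most $U^{1/2^{i}}$, the per-level terms $\log U,\ \tfrac12\log U,\ \tfrac14\log U,\dots$ form a geometric series summing to $O(\log U)$.

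The step I expect to be the main obstacle is the \textbf{work bound $O(|B|\log\log U)$} for the batched updates, because the naive recurrence $W(U,b)=O(b)+\sum_h W(\sqrt{U},|B_h|)+W(\sqrt{U},|S|)$ (with $S$ the set of newly occupied clusters) seems to pay for the summary on top of the clusters and blows up to $O(|B|\log U)$. The key is a conservation argument built on the lazy-min trick: when a cluster $h$ is newly occupied, its smallest incoming element is installed as $h$'s lazy minimum in $O(1)$ and is \emph{removed} from the recursive batch sent into $\cluster[h]$, while $h$ itself is the single element that this cluster contributes to the summary batch. Thus each newly occupied cluster "donates" exactly one element out of its own recursion and into the summary recursion, so the \emph{total} batch size handed to the next level, summed over all cluster calls and the one summary call, is at most $b$. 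The recurrence therefore collapses to $W(U,b)=O(b)+\{\text{recursive calls whose batch sizes sum to }\le b\}$ over universe $\sqrt{U}$, which solves to $W(U,b)=O(b\log\log U)$ by induction, since each level adds $O(b)$ and the recursion has depth $O(\log\log U)$. Establishing and carefully justifying this conservation property (and its delete analogue, where an emptied cluster withdraws its summary representative) is where the real care is needed.

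Finally, for \rangequery{}$(k_L,k_R)$ I would recurse structurally. If $\high(k_L)=\high(k_R)$ the whole query lives in one cluster and I recurse there with shrunk bounds. Otherwise I handle the two boundary clusters $\high(k_L)$ and $\high(k_R)$ by recursive range queries on suffix/prefix ranges, and for the interior I first run a range query on the summary to obtain the nonempty middle clusters, then, in parallel, report each such cluster in full (together with its lazily stored min/max). Because every reported element threads through $O(\log\log U)$ levels and every nonempty interior cluster yields at least one output element, the work is $O((1+m)\log\log U)$. The span obeys $R(U)=2R(\sqrt{U})+O(\log U)$ --- one recursion to locate the nonempty clusters through the summary and one to report them, plus an $O(\log U)$ fork --- which solves to $O(\log U\log\log U)$, matching the claim.
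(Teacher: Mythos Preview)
Your \batchinsert{} plan and its conservation argument coincide with the paper's Algorithm~3 and Theorem~5.1, including the span recurrence $S(U)=S(\sqrt{U})+O(\log U)$; that part is fine. Two places diverge from the paper and need more than you have given.

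\textbf{\batchdelete{} is not symmetric.} The paper singles out deletion as the harder direction precisely because of lazy-min promotion: when $\vt.\mmin\in B$, the replacement must be the smallest \emph{survivor} in the subtree, and it must be extracted from the clusters without disrupting the batch bookkeeping. The paper handles this by precomputing, for every $x\in B$, its \emph{survival predecessor and successor} $\mathcal{P}(x),\mathcal{S}(x)$ in $\vt\setminus B$, and threading refined versions of these mappings through every recursive call (functions \textsc{SurvivorLow}, \textsc{SurvivorHigh}, \textsc{SurvivorRedirect} in Algorithm~4). Your one-line ``promote a fresh lazy minimum'' hides exactly this machinery. A ``recurse into clusters first, then sequentially extract the new min'' variant can be made to work, but it forces the summary recursion to wait on the cluster recursions, and the work charging (each sequential extraction billed to the removed lazy-min, each emptied cluster's summary entry billed to that cluster's min) must be spelled out; your symmetric sketch does not address any of this.

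\textbf{\rangequery{}: your route differs from the paper's, and your work argument has a gap.} The paper does \emph{not} recurse on the \veb{} structure. It runs divide-and-conquer on the \emph{key interval}: compute $\midd=\vebpred(\vt,\lceil(k_L+k_R)/2\rceil)$, store it in a result-tree node, and recurse in parallel on $[k_L,\vebpred(\vt,\midd)]$ and $[\vebsucc(\vt,\midd),k_R]$. Each call emits exactly one output node or \texttt{NIL}, so there are $O(m)$ calls at $O(\log\log U)$ each, giving the clean $O((1{+}m)\log\log U)$ work. Your structural recursion instead issues a \emph{recursive range query on the summary} at every level in addition to the cluster reports. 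Your justification (``every nonempty interior cluster yields at least one output element'') only bounds the summary's \emph{output size} by $m$; it does not bound the summary's recursive work, which itself spawns a summary-of-summary query, and so on. The straightforward induction picks up a factor of two at each of the $\log\log U$ levels and does not close at $O(m\log\log U)$. This may be salvageable via a global bound on the number of nonempty \veb{} nodes reachable from the query, but that is not the argument you gave, and the paper's interval-bisection trick sidesteps the difficulty entirely.
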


Our LIS algorithm and the WLIS algorithm based on range trees are simple to program, and we expect them to be the algorithms of choice in implementations in the parallel setting.
%We implemented the LIS algorithm and WLIS algorithm with range trees.
We tested our algorithms on a 96-core machine.
Our implementation is \emph{light-weighted, efficient and scalable}.
%Our LIS algorithm outperforms a highly-optimized sequential algorithm with $O(n\log \lislength)$ cost on sequences with size $10^9$ and rank up to $3\times 10^5$.
%On input size of $10^8$ with LIS length $k\le 10^6$, our algorithm is up to 12-200x faster than \swgs{}, and is up to 6x faster than
%a highly-optimized $O(n\log \lislength)$ sequential algorithm.
Our LIS algorithm outperforms \swgs{} in all tests, and is faster than highly-optimized sequential algorithms ~\cite{Knuth73vol3} on reasonable LIS lengths (e.g., up to $k= 3\times 10^5$ for $n=10^9$).
To the best of our knowledge, this is the \emph{first parallel LIS implementation that can outperform the efficient sequential algorithm in a large input parameter space}.
%competitive to, and can significantly outperform
%the highly-optimized sequential LIS algorithms.
%Our algorithm is up to 10x faster than the sequential algorithm on input size $n=10^9$ with LIS length $k\le 10^8$.
%and is up to 2x faster than \swgs{} on weighted version.
On WLIS, our algorithm is up to 2.5$\times$ faster than \swgs{} and 7$\times$ faster than the sequential algorithm for small $k$ values.
We believe the performance is enabled by the \emph{simplicity} and \emph{theoretical-efficiency} of our new algorithms.
%We note that because of space-inefficiency, even though our machine has 1.5TB main memory,
%the \swgs{} algorithm cannot run on input size $10^9$ because of high memory usage.
%Our LIS algorithm outperforms a highly-optimized sequential algorithm with $O(n\log \lislength)$ cost on sequences with size $10^9$ and rank up to $3\times 10^5$.

We note that there exist parallel LIS algorithms~\cite{krusche2010new,cao2023nearly} with better worst-case span bounds than our results in theory.
We highlight the \emph{simplicity, practicality, and work-efficiency} of our algorithms.
We also highlight our contributions on \emph{parallel \veb{} trees and the extension to the WLIS problem}.
We believe this paper has mixed contributions of both theory and practice, summarized as follows.
%We summarize the contributions of this paper as follows.

\noindent \textbf{Theory: } 1) Our LIS and WLIS algorithms improve the existing bounds. Our LIS algorithm is the first work- and space-efficient parallel algorithm with non-trivial parallelism ($\tilde{O}(k)$ span). 2) We design the first parallel version of \veb{} trees, which supports work-efficient batch-insertion, batch-deletion and range queries with polylogarithmic span.

\noindent \textbf{Practice: } Our LIS and WLIS algorithms are highly practical and simple to program. Our implementations outperform the state-of-the-art parallel implementation \swgs{} on all tests, due to better work and span bounds. We plan to release our code.

%\begin{description}
%	\item [Theory] 1) Our LIS and WLIS algorithms improve the existing bounds. Our LIS algorithm is the first work- and space-efficient parallel algorithm with non-trivial parallelism ($\tilde{O}(k)$ span). 2) We design the first parallel version of \veb{} tree, which supports work-efficient batch-insertion, batch-deletion and range queries with polylogarithmic span.
%	\item [Practice] Our new LIS and WLIS algorithms are highly practical and simple to program. We engineered them and showed that they outperform the state-of-the-art parallel implementation \swgs{} on all tests, due to improved work and span bounds.
%\end{description}

\section{Preliminaries} \label{sec:prelim}

\hide{
	\begin{figure*}[t]
		\centering
		\includegraphics[width=\textwidth]{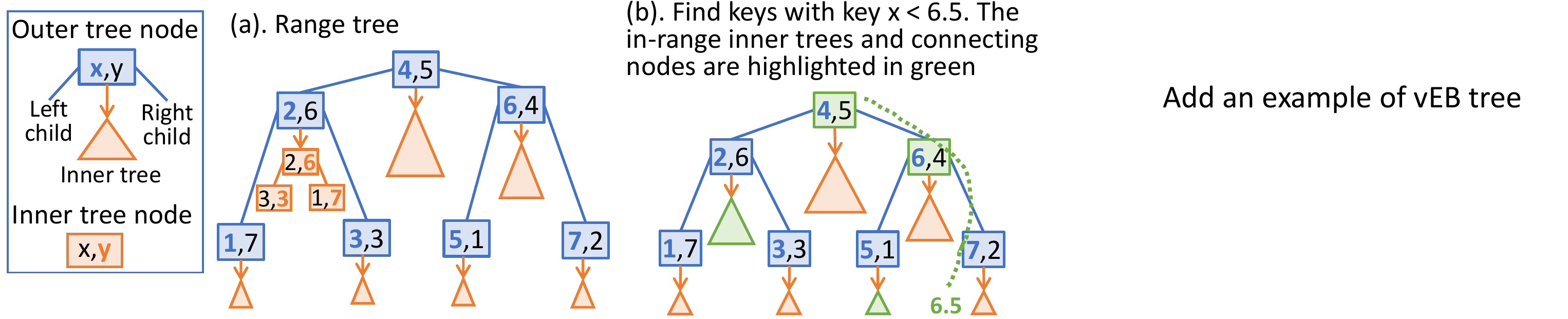}
		\caption{\small\textbf{Illustration of the range tree and \veb{} tree. }
			(a). range tree.
			(b). range tree and in-range inner trees and connecting nodes.
			(c). \veb{} tree.
			%An illustration that plots all objects in (b) as points $(i,A_i)$ on a 2D planar. Each object only depends on other objects in its lower-left corner.
			\vspace{-.2in}
		}\label{fig:range}
	\end{figure*}
}
\myparagraph{Notation and Computational Model.}
We use $O(f(n))$ \emph{with high probability (\whp{})} (in $n$) to mean $O(cf(n))$ with probability at least $1-n^{-c}$ for $c \geq 1$.
$\tilde{O}(f(n))$ means $O(f(n)\cdot\mathit{polylog}(n))$.
We use $\log n$ as a short form for $1+\log_2(n+1)$.
%We use $A_i$ and $A[i]$ interchangeably as the $i$-th object in an array (or sequence).
%$A[i..j]$ or $A_{i..j}$ denotes the $i$-th to the $j$-th objects in an array (or sequence) $A$.
For an array or sequence $A$, we use $A_i$ and $A[i]$ interchangeably as the $i$-th object in $A$, and
use $A[i..j]$ or $A_{i..j}$ to denote the $i$-th to the $j$-th objects $A$.
%A \defn{prefix-min} of an array $A$ is represented in an array $B$, where $B_i=\min_{j\le i} A_j$.
%The \defn{prefix-max} is defined similarly by replacing $\min$ with $\max$.

%\myparagraph{Computational Model.}
We use the \defn{work-span model} in the classic multithreaded model with \defn{binary-forking}~\cite{BL98,arora2001thread,blelloch2020optimal}. % to analyze the parallel algorithms.
We assume a set of \thread{}s that share the memory.
Each \thread{} acts like a sequential RAM plus a \forkins{} instruction
that forks two child threads running in parallel.
%In this model, a process can fork two childprocesses, which work in parallel and when both are com-plete, the parent process continues.
%When a thread performs a fork, the two child threads can both start by running their next instructions, and the original thread is suspended until both children terminate.
When both child threads finish, the parent thread continues. %, implying a synchronization (called a \joinins{} operation).
%In our pseudocode, we use $\boldsymbol{s_1||s_2}$ to mean the statements $s_1$ and $s_2$ can run in parallel.
A parallel-for is simulated by \forkins{} for a logarithmic number of steps.
%A computation starts with a single root thread and finishes when the root finishes.
%The model uses atomic operation \TAS{} to synchronize threads, which can be used to implement a \mf{join} (synchronizing two previously forked threads) as in the regular fork-join model.
%For simplicity, we use the binary fork-join model in this proposal
%A \joinins{} operation synchronizes two previously forked threads.
A computation can be viewed as a DAG (directed acyclic graph).
The \defn{work $\boldsymbol{W}$} of a parallel algorithm is the total number of operations in this DAG,
and the \defn{span (depth) $\boldsymbol{S}$} is the longest path in the DAG.
An algorithm is \defn{work-efficient} if its work is asymptotically the same as the best sequential algorithm.
%$O(f(n))$, and \defn{near-work-efficient} if its work is $\tilde{O}(f(n))$, where $f(n)$ is the best sequential time complexity.
The randomized work-stealing scheduler can execute such a computation in $W/P+O(S)$ time \whp{} in $W$ on $P$ processor cores~\cite{BL98,arora2001thread,gu2022analysis}.
Our algorithms can also be analyzed on PRAM and have the same work and span bounds.

\hide{Different from the PRAM~\cite{SV81} model, this model assumes loose synchronization and is a practical model supported by most existing libraries~\cite{Openmp,Cilk11,TBB,Java-fork-join,X10}.
	%such as OpenMP, Cilk, TBB, Java fork-join, and X10.
	It is used in recent papers for shared-memory parallel algorithms (a short list:~\cite{agrawal2014batching,bender2004fly,Acar02,blelloch2010low,BlellochFiGi11,Cole17,dhulipala2020semi,BBFGGMS18,blelloch2020randomized,dhulipala2019low,dhulipala2022pac,goodrich2021atomic,ahmad2021low}).
	%and is the model for parallel algorithms in textbooks~\cite{CLRS}.
	%For the open problems to study, we are unaware of algorithms with good theoretical guarantees even on PRAM.
	Our algorithms can also be analyzed on PRAM and have the same work and span bounds.
	%Directly simulating our algorithms on PRAM will provide the same work and span bounds.
}

\myparagraph{Longest Increasing Subsequence (LIS).} Given a sequence $A_{1..n}$ of $n$ input objects and a comparison function $<$ on objects in $A$,
$A_{1\ldots m}'$ is a subsequence of $A$ if $A'_i=A_{s_i}$, where $1\le s_1< s_2<\dots s_m\le n$.
The \defn{longest increasing subsequence} (LIS) of $A$ is the longest subsequence $A^{*}$ of $A$ where $\forall i<n, A^{*}_i<A^{*}_{i+1}$.
Throughout the paper, we use $n$ to denote the input size, and $\lislength$ to denote the LIS length of the input.

LIS can be solved using dynamic programming (DP) with the DP recurrence in \cref{eqn:lis}.
Here $\mathdp[i]$ (called the \defn{$\boldsymbol{\mathit{dp}}$ value} of object $i$) is the LIS length of $A_{1\ldots i}$ ending with $A_i$.

The LIS problem generalizes to the \defn{weighted LIS (WLIS) problem} with DP recurrence in \cref{eqn:lisdpweighted}.
Sequentially, both LIS and weighted LIS %we can iteratively compute $\mathdp[j]$ and maintain a search structure to find ${\max}_{i<j,a_i<a_j}\mathdp[i]$,
can be solved in $O(n\log n)$ work.
This is also the lower bound~\cite{fredman1975computing} w.r.t.\ the number of comparisons.
For (unweighted) LIS, there exists an $O(n\log \lislength)$ sequential algorithm~\cite{Knuth73vol3}.
When the input sequence only contains integers in range $[1,n]$, one can compute the LIS in $O(n\log\log n)$ work using a \veb{} tree.
In our work, we assume general input and only use comparisons between input objects.
Note that although we use \veb{} trees in WLIS,
we will only use it to organize the indexes of the input sequence (see details in \cref{sec:veb}).
Therefore, our algorithm is still comparison-based and work on any input type.

\hide{
	\subsection*{Data Structures} %We now introduce the data structures used in this paper. Our first data structure is the \defn{tournament tree}.
	%We use three data structures: the \defn{tournament tree} and the \defn{range tree}.
	In this section, we present the three data structures used in our algorithms: the \defn{tournament tree}, the \defn{range tree},
	and the \defn{van Emde Boas (\veb{})} tree.
	
	\myparagraph{Tournament tree.} A tournament tree $T$ on $n$ records is a complete binary tree with $2n-1$ nodes (see \cref{fig:tree}).
	It can be represented implicitly as an array $T[1..(2n-1)]$.
	The last $n$ elements are the leaves, where $T[i]$ stores the $(i-n+1)$-th record in the dataset.
	The first $n-1$ elements are internal nodes, each storing the minimum value of its two children.
	The left and right children of $T[i]$ are $T[2i]$ and $T[2i+1]$, respectively.
	%The data are stored in $n$ leaves. %Node $i$ store the $(i-n+1)$-th record in the dataset.
	%An internal node stores the minimum value of its two children (see an example in \cref{fig:tree}).
	%Node $i$ is stored in the $i$-the
	%Constructing a tournament tree from $n$ values takes linear work and $O(\log n)$ span using divide-and-conquer:
	%we can construct both subtrees recursively in parallel, and update the root's key based on the two children's keys.
	We will use a parallel tournament tree to maintain a dynamic prefix-min structure.
	We will use the following theorem about the tournament tree.
	%For parallel tournament trees, we have the following theorem~\cite{}.
	
	\begin{theorem}(Parallel Tournament Trees~\cite{rhostepping,blelloch2020optimal})
		\label{thm:tourtree}
		A tournament tree can be constructed from $n$ elements in $O(n)$ work and $O(\log n)$ span.
		Given a set $S$ of $m$ leaves in the tournament tree with size $n$, the number of ancestors of all the nodes in $S$ is $O(m\log (n/m))$.
	\end{theorem}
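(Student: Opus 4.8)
The plan is to treat the two claims separately, since the construction bound is a routine parallel reduction while the ancestor-counting bound is the combinatorial crux. For the construction, I would build the implicit array $T[1..2n-1]$ by a divide-and-conquer recursion that mirrors the tree shape. To build the subtree rooted at a node $v$: if $v$ is one of the $n$ leaves, copy the corresponding input record; otherwise \forkins{} to build the subtrees rooted at the two children $T[2v]$ and $T[2v+1]$ in parallel, and on join set $T[v]$ to the minimum of its two children. Exactly $O(1)$ work is charged to each of the $2n-1$ nodes, so the total work is $O(n)$. The span satisfies $S(h) = S(h-1) + O(1)$, where $h = O(\log n)$ is the height, giving $O(\log n)$ span. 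Equivalently, this is just the standard cost of a parallel reduce that records every partial minimum at its corresponding internal node.

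For the second claim, let the leaves sit at level $0$ and the root at level $\log n$, so level $\ell$ contains exactly $n/2^\ell$ nodes. The key observation is that, at any fixed level $\ell$, the marked leaves in $S$ have at most $\min(m,\, n/2^\ell)$ distinct ancestors: each of the $m$ marked leaves has exactly one ancestor on level $\ell$ (giving the bound $m$), while level $\ell$ contains only $n/2^\ell$ nodes in total (giving the bound $n/2^\ell$). Summing over levels, the total number of (strict) ancestors is at most $\sum_{\ell=1}^{\log n}\min(m,\, n/2^\ell)$; including the leaves themselves would only add $m$ and is harmless asymptotically.

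The main step is to evaluate this sum by splitting it at the threshold level $\ell^\ast = \log(n/m)$, where the two bounds cross. For the $O(\log(n/m))$ levels with $\ell \le \ell^\ast$ we have $n/2^\ell \ge m$, so each term is bounded by $m$, contributing $O(m\log(n/m))$ in total. For the remaining levels $\ell > \ell^\ast$ we have $n/2^\ell < m$, and these terms form a decreasing geometric series $\sum_{\ell > \ell^\ast} n/2^\ell \le n/2^{\ell^\ast} = O(m)$. Adding the two regimes yields the claimed $O(m\log(n/m))$ bound. I expect the two-regime split to be the only delicate point: once the per-level $\min$ bound is established, the remainder is a routine geometric-series estimate, and the construction bounds follow immediately from the reduction argument.
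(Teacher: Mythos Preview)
Your proposal is correct. The paper does not actually prove this theorem---it is cited from prior work, and the only justification given is the one-sentence remark that construction ``can be performed by recursively constructing the left and right trees in parallel and updating the root value,'' which is exactly your construction argument; your level-by-level $\min(m, n/2^{\ell})$ counting for the ancestor bound is the standard proof and fills in what the paper omits.
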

	
	Implementing a parallel tournament tree is straightforward.  For instance, construction can be performed by recursively constructing the left and right trees in parallel and updating the root value.
}

%\subsection*{Parallelizing Sequential Iterative Algorithms}
\myparagraph{Dependence Graph~\cite{shun2015sequential,blelloch2016parallelism,blelloch2020randomized,shen2022many}. } In a sequential iterative algorithm, we can analyze the logical \emph{dependences} between iterations (objects) to achieve parallelism.
%For example, based on the DP recurrence (\cref{eqn:lis}) of the LIS problem, an object must wait for all objects before it and smaller than it.
Such dependences can be represented in a DAG, called a \emph{\dg{}} (\DG{}).
In a \DG{}, each vertex is an object in the algorithm.
An edge from $u$ to $v$ means that $v$ can be processed only when $u$ has been finished.
We say $v$ \defn{depends} on $u$ in this case.
\cref{fig:lis-previous} illustrates the dependences in LIS.
We say an object is \defn{ready} when all its predecessors have finished.
When executing a \DG{} with depth $D$, we say an algorithm is \defn{round-efficient} if its span is $\tilde{O}(D)$.
In LIS, the dependence depth given by the DP recurrence is the LIS length $\lislength$.
We note that round-efficiency does not guarantee optimal span, since round-efficiency is with respect to a given \DG{}. One can design a different algorithm with a shallower \DG{} and get a better span.

\begin{figure}[t]
	\centering
	\includegraphics[width=\columnwidth]{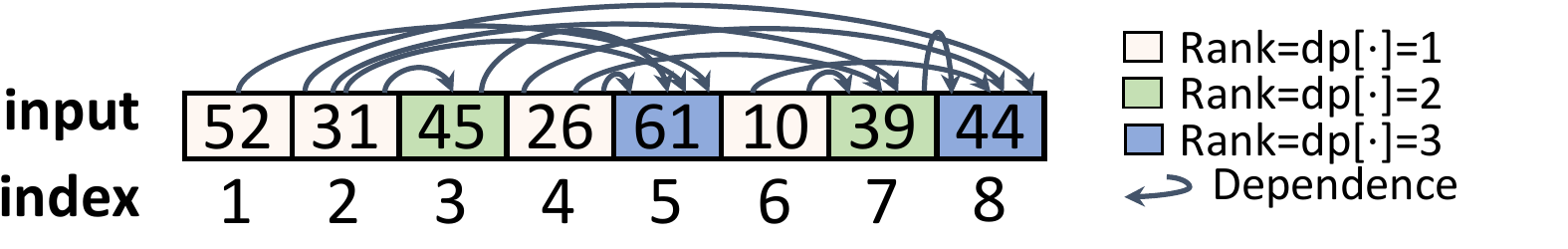}
	\caption{\small
		\textbf{An input for LIS, the dependences and ranks.} An object depends on all objects before it and is smaller than it.
		The rank of an object is the LIS length ending at it, which is also its \dpvalue{}.
		%\vspace{-.2in}
	}\label{fig:lis-previous}
\vspace{-.5em}
\end{figure}

\myparagraph{Phase-Parallel Algorithms and \swgs{} Algorithm~\cite{shen2022many}.}
%The \phaseparallel{} algorithm~\cite{iterative} is a framework to parallelize sequential iterative algorithms.
The high-level idea of the \phaseparallel{} algorithm is to assign each object $x$ a \defn{rank}, denoted as $\rank(x)$,
indicating the earliest phase when the object can be processed.
%For simplicity, here we only explain the concepts based on the LIS problem.
%With certain conditions, the rank of an object is its depth in the \dg{}.
%The framework then processes all elements based on increasing order of rank.
In LIS, the rank of each object is the length of the LIS ending with it (the \DPvalue{} computed by \cref{eqn:lis}).
We also define the \defn{rank of a sequence} $A$ as the LIS length of $A$.
An object only depends on other objects with lower ranks.
The \phaseparallel{} LIS algorithm~\cite{shen2022many} processes all objects with rank $i$ (in parallel) in round $i$.
We call the objects processed in round $i$ the \defn{frontier} of this round.
An LIS example is given in \cref{fig:lis-previous}. % presents the ranks in the LIS problem.
%The problem then boils down to efficiently identifying all objects with a certain rank.

\newcommand\mycommfont[1]{\textit{\textcolor{blue}{#1}}}
%\definecolor{commentgreen}{RGB}{0,128,0}
%\newcommand\mycommfont[1]{\textit{\textcolor{commentgreen}{#1}}}
\SetCommentSty{mycommfont}
\hide{
	\begin{algorithm}
		%\begin{minipage}{0.5\columnwidth}
		\caption{\small \phaseparallel{} LIS algorithm framework\label{algo:phaseparallel}}
		%\KwIn{A set $S$ of objects}
		\DontPrintSemicolon \small
		$i\gets 1$\\
		\While(\tcp*[f]{\emph{$A$ is the input sequence}}){$A\ne \emptyset$\label{step:frameworkwhile}} {
			$\ff_i\gets$ \{$x\in A:$ the LIS ending at $x$ has length $i$\} \\%all objects $x$ such that the LIS ending at $x$ has length $i$\\
			Process all $x\in\ff_i$ in parallel (compute the DP values) and set them as finished. \\
			$A\gets A\setminus\ff_i$\\
			$i\gets i+1$
		}
	\end{algorithm}
}

%Although the idea looks simple, the challenge lies in \emph{how to find the frontiers}.
The \swgs{} algorithm uses a \emph{wake-up scheme}, where each object can be processed $O(\log n)$ times \whp{}.
It also uses a range tree to find the frontiers both in LIS and WLIS.
In total, this gives $O(n\log^3 n)$ work \whp{}, $O(\lislength\log^2 n)$ span, and $O(n\log n)$ space.
\hide{
	An object $A_i$ is viewed as a 2D point $(i,A_i)$,
	and it only depends on objects in its lower-left corner (\cref{fig:lis-previous}(c)).
	%Therefore, an object is ready if all objects in its lower-left corner is finished, which can be checked using a 2D range tree.
	Hence, the readiness of an object $x$ can be checked using a range tree in $O(\log^2 n)$ cost.
	%without checking all its predecessors in the \dg{}.
	%The DP value of an object $x$ can also be computed by the maximum DP value of all points in its lower-left corner (plus one).
	%To avoid exhaustive readiness checking, each object $x$ is attached to a pivot, which is a random unfinished ancestor $p_x$ (see \cref{fig:lis-previous}(b)).
	Each object $x$ is attached to a \emph{pivot} $p_x$, which is a random unfinished ancestor (\cref{fig:lis-previous}(b)).
	%We attempt to ``wake up'' $x$ (check whether $x$ is ready) only when $p_x$ is finished, and if $x$ is not ready, we assign a new pivot to $x$.
	%An object $x$ is ``waked up'' (checking its readiness) only when $p_x$ is finished.
	When $p_x$ is finished, it attempts to wake up $x$ by checking the readiness of $x$.
	If $x$ is not ready, we assign a new pivot to $x$ and let it sleep again.
	%The initial frontier contains all ready objects with rank 1 (which can be found by 2D range queries).
	%In each round, each object $x$ in the current frontier identifies the objects $y$ that use $x$ as the pivot,
	%and attempts to wake up $y$ by checking the readiness of $y$ using a 2D range query.
	%If $y$ is ready, we compute its DP value from the range tree.
	%Otherwise, we let $y$ select another unfinished predecessor.
	%Although the number of dependencies can be large, the number of edges evaluated by the algorithm (the pivot-object pairs) is bounded by $O(\log n)$ per object \whp{}.
	%The number of wake-ups per object can be bounded by $O(\log n)$ \whp{}.
}
Our algorithm is also based on the \phaseparallel{} framework but avoids the wake-up scheme to achieve better bounds and performance.
%This is the only near-work-efficient and round-efficient parallel LIS algorithm we know of.
%However, the overhead in work and space limits the practicability and scalability of the algorithm.
%In the experiments with $10^8$ elements, on an input sequence with rank (LIS length) 300, the algorithm using 96 cores performs slower than the standard sequential algorithm.

%is ready iff. all objects in its left-bottom corner are finished.
%Similarly, the DP value of an object is also the maximum DP value in its lower-left corner (plus its weight).
%The readiness can thus be checked by using a 2D range query in $O(\log^2 n)$ time using a parallel 2D range tree \cite{sun2018parallel}.
%Let $\pred(x)$ be the predecessors of object $x$ (those it relies on).
%Hence, instead of checking all ancestors of an object $x$, we can know its readiness by
%a parallel range tree \cite{sun2019parallel} in $O(\log^2 n)$ cost.
% (but note that it needs to be updated accordingly in parallel every round)
%However, doing so every round is costly since many checks may fail,
%so we want to perform the check only when we are somehow confident that $x$ is ready.
%To avoid exhaustive readiness checking, each object $x$ is attached to an unfinished ancestor $p_x$ (called the \emph{pivot}, Fig. \ref{fig:lis}(b-d)).
%We attempt to ``wake up'' $x$ only when $p_x$ is finished.
%As such, the total number of edges to check in the DG can be bounded by $O(n\log n)$ \whp{}.

%\section{Our Algorithms}\label{sec:alg}

\section{Longest Increasing Subsequence}
\label{sec:lis}

We start with the (unweighted) LIS problem. Our algorithm is also based on the \phaseparallel{} framework~\cite{shen2022many} but uses a much simpler idea to make it work-efficient.
The work overhead in the \swgs{} algorithm comes from two aspects: range queries on a range tree and the wake-up scheme.
The $O(\log n)$ space overhead comes from the range tree.
%Our new idea is based on two intuitions here.
%First, the range query needed in LIS is whether the region is empty or not, instead of finding the maximum value in a 2D range.
%Therefore, we wish to use a more efficient (and simpler) data structure than the range tree to reduce both work and space.
%Second, we wish to identify the exact frontier in each round to avoid waking up each object $O(\log n)$ times.
Therefore, we want to 1) use a more efficient (and simpler) data structure than the range tree to reduce both work and space, and 2)
wake up and process an object only when it is ready to avoid the wake-up scheme.
%In other words, we only want to access an object when it is ready.

Our algorithm is based on a simple observation in \cref{lem:prefixmin} and the concept of \defn{\prefixmin}
objects (\cref{def:prefixmin}).
Recall that the \emph{rank} of an object $A_i$ is exactly its \emph{\dpvalue{}},
which is the \emph{length of LIS ending at $A_i$}.
%We start with defining the \defn{\prefixmin} object in a sequence.

\begin{definition}[Prefix-min Objects]\label{def:prefixmin}
	Given a sequence $A_{1..n}$, we say $A_i$ is a \defn{\prefixmin} object if for all $j<i$, we have $A_i\le A_j$,
	%i.e., $A_i$ is smaller than all objects before it.
	i.e., $A_i$ is (one of) the smallest object among $A_{1..i}$.
\end{definition}

%In other words, for a \prefixmin{} object $A_i$, the prefix-min at $a_i$ is the same as its own value.

\begin{lemma}
	\label{lem:prefixmin}
	%In a sequence $a$, let $b_i=\min_{j\le i} a_j$ (the prefix-min at $a_i$). An object $a_i$ has rank $1$ iff. $b_i=a_i$.
	%An object $a_i$ has rank $r'$ iff. after removing all objects with ranks smaller than $r'$ and recomputing the prefix-min array $b_i$,
	%$b_i=a_i$.
	In a sequence $A$, an object $A_i$ has rank $1$ iff. $A_i$ is a \prefixmin{} object.
	An object $A_i$ has rank $r$  iff.  $A_i$ is a \prefixmin{} object after removing all objects with ranks smaller than $r$.
\end{lemma}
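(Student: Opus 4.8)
The plan is to prove the two biconditionals in order, using the characterization of rank as the length of the longest increasing subsequence (LIS) ending at an object. For the first claim, I would prove both directions. To see that a rank-$1$ object is a \prefixmin{} object, suppose for contradiction that $A_i$ has rank $1$ but is not \prefixmin{}, so there exists $j<i$ with $A_j<A_i$. Then the single-element subsequence at $j$ can be extended by $A_i$, giving an increasing subsequence of length $2$ ending at $A_i$, contradicting $\rank(A_i)=1$. Conversely, if $A_i$ is a \prefixmin{} object, then for every $j<i$ we have $A_i\le A_j$, so no earlier object is strictly smaller than $A_i$; by the DP recurrence in \cref{eqn:lis} the inner maximum is empty and $\mathdp[i]=1$, giving rank $1$.

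For the second claim, the natural approach is induction on the rank $r$, with the base case $r=1$ already established. The key structural fact I would isolate is that the rank of $A_i$ equals one plus the maximum rank among earlier, strictly smaller objects (equivalently, $\rank(A_i)=r$ iff $A_i$ depends on at least one object of rank $r-1$ but on no object of rank $\ge r$). This follows directly from \cref{eqn:lis} together with the observation that an optimal increasing subsequence ending at $A_i$ must, just before $A_i$, end at some $A_j$ with $j<i$ and $A_j<A_i$ whose own rank is exactly $\rank(A_i)-1$. I would then translate the statement ``$A_i$ is a \prefixmin{} object after removing all objects with rank smaller than $r$'' into the condition that among the surviving objects (those of rank $\ge r$) there is no $j<i$ with $A_j<A_i$.

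Carrying out the inductive step, suppose the claim holds for all ranks below $r$. Consider an object $A_i$ and let $A'$ be the sequence obtained by deleting all objects of rank $<r$. For the forward direction, if $\rank(A_i)=r$, then every earlier strictly-smaller object $A_j$ has $\rank(A_j)\le r-1<r$ (since $A_j<A_i$ and $j<i$ means $A_j$ is a predecessor of $A_i$, so its rank is strictly less than $r$), hence all such $A_j$ are removed; therefore no surviving earlier object is strictly smaller than $A_i$, making $A_i$ a \prefixmin{} object in $A'$. For the reverse direction, if $A_i$ survives (so $\rank(A_i)\ge r$) and is \prefixmin{} in $A'$, then every earlier strictly-smaller object has rank $<r$; thus the longest increasing subsequence ending at $A_i$ uses only objects of rank $\le r-1$ before it, so its predecessor has rank at most $r-1$, forcing $\rank(A_i)\le r$, and combined with $\rank(A_i)\ge r$ we get equality.

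The main obstacle I anticipate is making the ``removal'' argument fully rigorous rather than merely intuitive: I must verify that deleting lower-rank objects does not spuriously lower the rank of a surviving object, i.e., that the relative ranks of the surviving objects of rank $\ge r$ are unaffected by the deletion. This requires the monotonicity fact that an object's rank is determined entirely by the subposet of predecessors that are strictly smaller and earlier, and that removing objects of strictly smaller rank cannot remove any object that an optimal witness for a rank-$\ge r$ object actually uses. I would state this invariant explicitly and prove it by the same predecessor-rank characterization, which keeps the induction clean and avoids circularity between the two halves of the lemma.
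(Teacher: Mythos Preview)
Your proof is correct and follows essentially the same inductive approach as the paper's, arguing both directions of the biconditional directly from the DP recurrence. Your final concern about ranks changing under deletion is unnecessary: both the lemma and your own argument refer only to ranks in the original sequence $A$, so the ``removal'' is purely for checking prefix-min status among survivors, and no invariant about recomputed ranks is required.
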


%\newcommand\mycommfont[1]{\textit{\textcolor{blue}{#1}}}
%\definecolor{commentgreen}{RGB}{0,128,0}
%\newcommand\mycommfont[1]{\textit{\textcolor{commentgreen}{#1}}}
%\SetCommentSty{mycommfont}

\begin{algorithm}[t]
\fontsize{8pt}{8pt}\selectfont
\caption{The parallel (unweighted) LIS algorithm\label{algo:lis}}
\SetKwFor{parForEach}{parallel\_for\_each}{do}{endfor}
\KwIn{A sequence $A_{1..n}$}
%\KwOut{$\mathdp[1..n]$, where $\mathdp[i]$ is the LIS length ending at $a[i]$}
\KwOut{All \DPvalue{s} (ranks) of $A_{1..n}$}
\SetKwFor{inParallel}{in parallel:}{}{}
    \vspace{0.5em}
\SetKwProg{MyStruct}{Struct}{ contains}{end}
\SetKwProg{RangeTree}{RangeTree$\langle$Point$\rangle$}{ with}{end}
\SetKwProg{NestedTree}{NestedTree<pair<int>{}>}{ with}{end}
\SetKwProg{AugFunc}{}{}{end}
\SetKwProg{MyFunc}{Function}{}{end}
\newcommand{\Int}{\KwSty{int}}
\newcommand{\Bool}{\KwSty{bool}}
\newcommand{\A}{\KwSty{A}}
\newcommand{\lisarray}{\mathit{rank}}
\DontPrintSemicolon

\Int{} $\lisarray[1..n]$ \tcp*[f]{$\lisarray[i]$: the LIS length ending at $A_i$.}\\
%\MyStruct{TreeNode}{
%\Int{} $\val$ \\
%\Bool{} $\flag$ \tcp*{Initially $\update=\false$}
%}
%\tcp{$T$: the (implicit) tournament tree. } %The left and right children of $T[i]$ are $T[2i]$ and $T[2i+1]$.
%\tcp{For $n\le i < 2n$, $T[i]=A_{i-n+1}$. Otherwise, $T[i]$ is the smallest value in its subtree. }
\Int{} $\twin[1..(2n-1)]$ \tcp*[f]{$\twin$: the (implicit) tournament tree. } \\
$r\gets 0$\tcp*[f]{$r$ is the current round. }\\
\codeskip
%\tcp{Initialize the winning tree}
%\lparForEach{$i\in [n..2n-1]$}{
%%$T[i].x\gets i-n+1$\\
%%$T[i].y\gets a[i-n+1]$\\
%%$T[i].\augv \gets a[i-n+1]$\\
%$T[i]\gets a[i-n+1]$
%}
%BuildTree$(1)$
Initialize the tournament tree $\twin$

%\tcp{Process frontiers in order}
\MyFunc{\upshape\mf{LIS}(sequence $A_{1..n}$) }{
%$r\gets 0$
\While (\tcp*[f]{$\twin$ is not empty.}){$\twin[1]\ne +\infty$\label{line:whilebegin}} {
  $r\gets r+1$\\
%  \tcp{
%  Find all $a[k]$ s.t. the prefix min up to $a[k]$ equals to $a[k]$. Set $\lisarray[k]=r$ and remove $a[k]$.
%  %Find all objects with rank $r$ remove them
%  }
  %$\ff_r\gets$\processfrontier$()$\label{line:whileend} \tcp*[f]{get the $r$-th frontier}\\
  \processfrontier$()$\label{line:whileend} \tcp*[f]{process the $r$-th frontier}\\
  %$\lisarray[x]\gets r$ for all $x\in \ff_r$ \tcp*[f]{set their $\mathdp$ values to $r$}
}
%$L[i]\gets$ the first element in $a$ with rank $i$\\
\Return{$\lisarray[1..n]$}
}
%\tcp*{We can output an LIS by extracting the first element with rank $i$}

\codeskip
\MyFunc{\upshape\mf{\processfrontier}() }{
  %$t\gets$\prefixminfunc(1, $+\infty$)\tcp*[f]{collect all relevant objects in a tree}\\
  \prefixminfunc(1, $+\infty$)\tcp*[f]{Process all \prefixmin{} objects}\\
  %\Return{\textit{flatten}$(t)$}\tcp*[f]{flatten the relevant objects in an array} \label{line:lis:flatten}
}

\codeskip
\tcp{Deal with subtree rooted at $\twin[i]$. Find objects $x$ s.t.: 1) $x\le$ any object before it, and 2) $x\le \leftmin$. Collect such objects in a binary tree.}
%\tcp{Then set the DP values of such objects as $r$. }
\MyFunc{\upshape\mf{PrefixMin}(\Int{} $i$, \Int{} $\leftmin$)}{
  \lIf {$\twin[i]> \leftmin$} {\Return{\texttt{NIL}}\label{line:skip}} %(\tcp*[f]{Skip if the min of the subtree is larger than $\leftmin$.})
  \If(\tcp*[f]{Found a leaf node in the frontier.}){$i\ge n$\label{line:leaf:begin}}{
  $\lisarray[i]\gets r$ \label{line:leaf:setdp}\tcp*{Set its rank as $r$.}
  $\twin[i]\gets +\infty$ \label{line:leaf:remove}\tcp*{Remove the object.}
  %\Return{\textit{tree\_leaf}$(i)$} \tcp*{return a tree leaf with value $i$}
  %\Return{}
  } \Else(\tcp*[f]{An internal node. Process two children in parallel. }\label{line:internal:begin}){
  %Recurse on both children. Pass the min value of the left subtree to the right subtree.
  %\tcp{Recurse on both children. Pass the min value of the left subtree to the right subtree }
    \inParallel{\label{line:recurse}} {
    $L\gets$\prefixminfunc($2i$, $\mathit{\leftmin}$)\\
    $R\gets$\prefixminfunc($2i+1$, $\min(\leftmin,\twin[2i])$)
    }
    $\twin[i]\gets \min(\twin[2i],\twin[2i+1])$ \label{line:compmin}\\%\tcp*{After the recursive calls, update the value.}
    %\tcp{return a tree node with left child $L$ and right child $R$}
    %\Return {\textit{tree\_node}$(L,R)$ }
    }
  }
\end{algorithm}

\hide{
\begin{algorithm}[t]
\fontsize{8pt}{9pt}\selectfont
\caption{The parallel (unweighted) LIS algorithm\label{algo:lis}}
\SetKwFor{parForEach}{parallel\_for\_each}{do}{endfor}
\KwIn{A sequence $A_{1..n}$ with comparison function $<$}
%\KwOut{$\mathdp[1..n]$, where $\mathdp[i]$ is the LIS length ending at $a[i]$}
\KwOut{All \DPvalue{s} (ranks) of $A_{1..n}$}
    \vspace{0.5em}
\SetKwProg{MyStruct}{Struct}{ contains}{end}
\SetKwProg{RangeTree}{RangeTree$\langle$Point$\rangle$}{ with}{end}
\SetKwProg{NestedTree}{NestedTree<pair<int>{}>}{ with}{end}
\SetKwProg{AugFunc}{}{}{end}
\SetKwProg{MyFunc}{Function}{}{end}
\newcommand{\Int}{\KwSty{int}}
\newcommand{\Bool}{\KwSty{bool}}
\newcommand{\A}{\KwSty{A}}
\newcommand{\lisarray}{\mathit{rank}}
\DontPrintSemicolon

\Int{} $\lisarray[1..n]$ \tcp*[f]{$\lisarray[i]$: the LIS length ending at $A_i$.}\\
%\MyStruct{TreeNode}{
%\Int{} $\val$ \\
%\Bool{} $\flag$ \tcp*{Initially $\update=\false$}
%}
%\tcp{$T$: the (implicit) tournament tree. } %The left and right children of $T[i]$ are $T[2i]$ and $T[2i+1]$.
%\tcp{For $n\le i < 2n$, $T[i]=A_{i-n+1}$. Otherwise, $T[i]$ is the smallest value in its subtree. }
\Int{} $\twin[1..(2n-1)]$ \tcp*[f]{$\twin$: the (implicit) tournament tree. } \\
$r\gets 0$\tcp*[f]{$r$ is  the current round. }\\
\medskip
%\tcp{Initialize the winning tree}
%\lparForEach{$i\in [n..2n-1]$}{
%%$T[i].x\gets i-n+1$\\
%%$T[i].y\gets a[i-n+1]$\\
%%$T[i].\augv \gets a[i-n+1]$\\
%$T[i]\gets a[i-n+1]$
%}
%BuildTree$(1)$
Initialize the tree $\twin$

%\tcp{Process frontiers in order}
\MyFunc{\upshape\mf{LIS}(sequence $A_{1..n}$) }{
%$r\gets 0$
\While (\tcp*[f]{$\twin$ is not empty.}){$\twin[1]\ne +\infty$\label{line:whilebegin}} {
  $r\gets r+1$\\
%  \tcp{
%  Find all $a[k]$ s.t. the prefix min up to $a[k]$ equals to $a[k]$. Set $\lisarray[k]=r$ and remove $a[k]$.
%  %Find all objects with rank $r$ remove them
%  }
  \processfrontier$()$\label{line:whileend} %\tcp*[f]{Find all $A_j$ s.t. it is the smallest object among $A_{1..j}$. Set $\lisarray[j]=r$ and remove $A_j$.}
}
%$L[i]\gets$ the first element in $a$ with rank $i$\\
\Return{$\lisarray[1..n]$}
}
%\tcp*{We can output an LIS by extracting the first element with rank $i$}

\medskip
\MyFunc{\upshape\mf{\processfrontier}() }{
  \prefixminfunc(1, $+\infty$)\\
}

\medskip
\tcp{Deal with subtree rooted at $\twin[i]$. Find objects $x$ s.t.: 1) $x\le$ any object before it, and 2) $x\le \leftmin$.}
%\tcp{Then set the DP values of such objects as $r$. }
\MyFunc{\upshape\mf{PrefixMin}(\Int{} $i$, \Int{} $\leftmin$)}{
  \lIf {$\twin[i]> \leftmin$} {\Return\label{line:skip}} %(\tcp*[f]{Skip if the min of the subtree is larger than $\leftmin$.})
  \If(\tcp*[f]{Found a leaf node in the frontier.}){$i\ge n$\label{line:leaf:begin}}{
  $\mathdp[i]\gets r$ \label{line:leaf:setdp}\tcp*{Set its rank as $r$.}
  $\twin[i]\gets +\infty$ \label{line:leaf:remove}\tcp*{Remove the object.}
  } \Else(\tcp*[f]{An internal node. Process two children in parallel. }\label{line:internal:begin}){
  %Recurse on both children. Pass the min value of the left subtree to the right subtree.
  %\tcp{Recurse on both children. Pass the min value of the left subtree to the right subtree }
    \prefixminfunc($2i$, $\mathit{\leftmin}$) $||$ \prefixminfunc($2i+1$, $\min(\leftmin,\twin[2i])$)\label{line:recurse}\\
    $\twin[i]\gets \min(\twin[2i],\twin[2i+1])$ \label{line:compmin}%\tcp*{After the recursive calls, update the value.}
  }
  }
\end{algorithm}
}

\iffullversion{We use \cref{fig:lis} to illustrate the intuition of \cref{lem:prefixmin}, and prove it in \cref{app:prefixminproof}.}
\ifconference{We use \cref{fig:lis} to illustrate the intuition of \cref{lem:prefixmin}, and the proof is illustrated in the full version of this paper.}
Based on \cref{lem:prefixmin}, we can design an efficient yet simple \phaseparallel{} algorithm for LIS (\cref{algo:lis}).
For simplicity, we first focus on computing the \DPvalue{s} (ranks) of all input objects.
\iffullversion{We show how to output a specific LIS for the input sequence in \cref{app:outputlis}.}
\ifconference{How to output a specific LIS for the input sequence is shown in the full version of this paper.}
%We will maintain the prefix-min structure of the input sequence throughout the algorithm.
The main loop of \cref{algo:lis} is in Lines~\ref{line:whilebegin}--\ref{line:whileend}.
%The algorithm maintains $r$ as the current round.
In round $r$, we identify the frontier $\ff_r$ as all the \prefixmin{} objects and set their \dpvalue{s} to $r$.
%Their DP values will be set as $r$.
We then remove the objects in $\ff_r$ and repeat.
%An illustration of this algorithm is presented in \cref{fig:lis}.
\cref{fig:lis} illustrates \cref{algo:lis} by showing the ``prefix-min'' value $\pre_i$ for each object,
which is the smallest value up to each object. Note that this sequence $\pre_i$ is not maintained in our algorithm
but is just used for illustration.
In each round, we find and remove all objects $A_i$ with $A_i=\pre_i$.
Then we update the prefix-min values $\pre_i$, and repeat.
In round $r$, all identified \prefixmin{} objects have rank $r$.
\hide{
	\begin{figure*}[t]
		\centering
		\includegraphics[width=\textwidth]{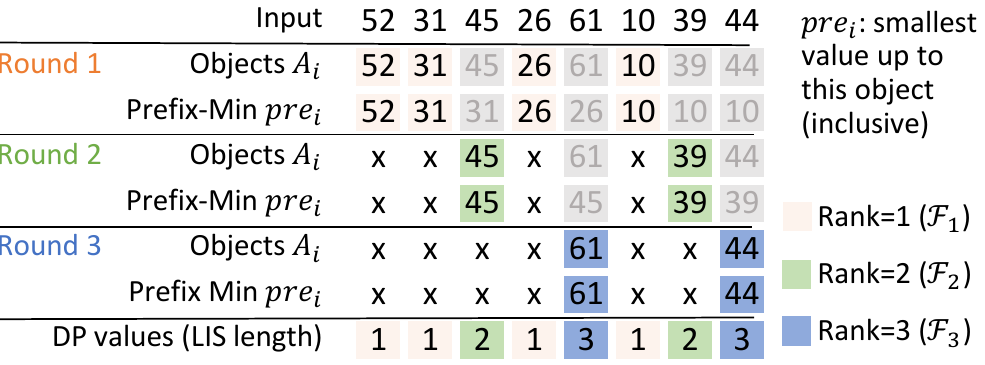}
		\caption{\small \textbf{An illustration of \cref{algo:lis}.}
			The figure also shows $\pre_i$ for each object, which is the smallest object up to this object (inclusive).
			If $A_i=\pre_i$, it is a \prefixmin{} object.
			In round $r$, \cref{algo:lis} finds all \prefixmin{} objects, sets their DP values as $r$, removes them, and updates the $\pre_i$ values.
			The objects identified in round $r$ all have rank $r$ (i.e., the LIS length ending at this object is $r$).
		}
		\label{fig:lis}
	\end{figure*}
}

To achieve work-efficiency, we cannot re-compute the prefix-min values of the entire sequence after each round.
Our approach is to design a parallel \emph{tournament tree} to help identify the frontiers.
Next, we briefly overview the tournament tree and then describe how to use it to find the \prefixmin{} objects efficiently.

\myparagraph{Tournament tree.} A tournament tree $\twin$ on $n$ records is a complete binary tree with $2n-1$ nodes (see \cref{fig:tree}).
It can be represented implicitly as an array $\twin[1..(2n-1)]$.
The last $n$ elements are the leaves, where $\twin[i]$ stores the $(i-n+1)$-th record in the dataset.
The first $n-1$ elements are internal nodes, each storing the minimum value of its two children.
The left and right children of $\twin[i]$ are $\twin[2i]$ and $\twin[2i+1]$, respectively.
%The data are stored in $n$ leaves. %Node $i$ store the $(i-n+1)$-th record in the dataset.
%An internal node stores the minimum value of its two children (see an example in \cref{fig:tree}).
%Node $i$ is stored in the $i$-the
%Constructing a tournament tree from $n$ values takes linear work and $O(\log n)$ span using divide-and-conquer:
%we can construct both subtrees recursively in parallel and update the root's key based on the two children's keys.
%We will use a parallel tournament tree to maintain a dynamic prefix-min structure.
We will use the following theorem about the tournament tree.
%For parallel tournament trees, we have the following theorem~\cite{}.

\vspace{-.5em}

\begin{theorem}(Parallel Tournament Trees~\cite{dong2021efficient,blelloch2020optimal})
	\label{thm:tourtree}
	A tournament tree can be constructed from $n$ elements in $O(n)$ work and $O(\log n)$ span.
	Given a set $S$ of $m$ leaves, in the tournament tree with size $n$, the number of ancestors of all the nodes in $S$ is $O(m\log (n/m))$.
\end{theorem}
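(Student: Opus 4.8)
The plan is to handle the two claims separately, as they concern different aspects of the tournament tree $\twin$.

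For the construction bounds, I would build $\twin$ by a recursive divide-and-conquer that mirrors the tree's structure. To build the subtree rooted at an internal node, I would recursively build its left and right subtrees \emph{in parallel} (via \forkins{}), and once both return, set the node's value to the minimum of its two children in $O(1)$ work; the leaves simply store the input records. Letting $W(n)$ and $S(n)$ denote the work and span to build a subtree with $n$ leaves, this gives the recurrences $W(n) = 2W(n/2) + O(1)$ and $S(n) = S(n/2) + O(1)$, which solve to $W(n) = O(n)$ and $S(n) = O(\log n)$. Equivalently, since there are $n-1$ internal nodes each costing $O(1)$ and the tree has height $O(\log n)$, the total work is $O(n)$ and the critical path is $O(\log n)$.

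For the ancestor-counting claim, I would argue level by level. Index the levels so that level $0$ is the leaf level and level $\ell$ contains the $n/2^\ell$ nodes at height $\ell$, with the root at level $\log n$. The key observation is that the marked set $S$ of $m$ leaves has at most $\min(m,\, n/2^\ell)$ distinct ancestors at level $\ell$: each of the $m$ leaves contributes exactly one ancestor at that level (so at most $m$ distinct nodes), while the level itself contains only $n/2^\ell$ nodes in total. Summing this per-level bound over all levels bounds the total number of ancestors by $\sum_{\ell=0}^{\log n} \min(m,\, n/2^\ell)$.

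To evaluate the sum, I would split it at the threshold level $\ell^\star = \log(n/m)$, where $n/2^\ell$ crosses below $m$. For the $O(\log(n/m))$ levels with $\ell \le \ell^\star$, each term equals $m$, contributing $O(m\log(n/m))$. For the levels with $\ell > \ell^\star$, each term equals $n/2^\ell$, and these form a geometric series bounded by $n/2^{\ell^\star} = m = O(m)$. Adding the two parts yields $O(m\log(n/m)) + O(m) = O(m\log(n/m))$, as claimed. The construction part is routine; the step requiring the most care is the ancestor count, specifically establishing the per-level bound $\min(m,\, n/2^\ell)$ exactly — it is essential that a shared ancestor of several marked leaves is counted only once, which the level-by-level framing enforces automatically — and handling the degenerate regime $m = \Theta(n)$, where $\log(n/m)$ would naively vanish; here the paper's convention $\log x = 1 + \log_2(x+1) \ge 1$ keeps the bound meaningful and lets it degrade gracefully to $O(n)$, matching the $O(n)$ total number of nodes.
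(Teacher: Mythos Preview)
Your proposal is correct and matches the paper's approach. The paper itself does not give a full proof of this theorem (it is cited from prior work~\cite{dong2021efficient,blelloch2020optimal}); it only sketches the construction exactly as you do---``recursively constructing the left and right trees in parallel, and updating the root value''---and does not spell out the ancestor-counting argument at all, so your level-by-level $\min(m,\,n/2^\ell)$ bound with the split at $\ell^\star=\log(n/m)$ is precisely the standard argument one would supply.
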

\vspace{-.5em}

A tournament tree can be constructed by recursively constructing the left and right trees in parallel, and updating the root value.

\hide{\begin{figure*}[t]
		\begin{minipage}[c]{0.68\textwidth}
			\includegraphics[width=\textwidth]{figures/lis-algo.pdf}
		\end{minipage}\hfill
		\begin{minipage}[c]{0.30\textwidth}
			\caption{\small \textbf{An illustration of \cref{algo:lis}.}
				The figure also shows $\pre_i$ as the smallest object up to this object (inclusive).
				If $A_i=\pre_i$, it is a \prefixmin{} object.
				In round $r$, \cref{algo:lis} finds all \prefixmin{} objects, sets their DP values as $r$, removes them, and updates the $\pre_i$ values.
				%The objects identified in round $r$ all have rank $r$ (i.e., the LIS length ending at this object is $r$).
			} \label{fig:lis}
		\end{minipage}
		\vspace{-1.5em}
	\end{figure*}
}

\begin{figure}[t]
	\vspace{-1em}\includegraphics[width=\columnwidth]{figures/lis-algo.pdf}
	\caption{\small \textbf{An illustration of \cref{algo:lis}.}
		The figure also shows $\pre_i$ as the smallest object up to this object (inclusive).
		If $A_i=\pre_i$, it is a \prefixmin{} object.
		In round $r$, \cref{algo:lis} finds all \prefixmin{} objects, sets their DP values as $r$, removes them, and updates the $\pre_i$ values.
		%The objects identified in round $r$ all have rank $r$ (i.e., the LIS length ending at this object is $r$).
		\label{fig:lis}}
	\vspace{-.5em}
\end{figure}

\myparagraph{Using Tournament Tree for LIS.} We use a tournament tree $\twin$ to efficiently identify the frontier and dynamically remove objects (see \cref{algo:lis}).
$\twin$ stores all input objects in the leaves.
We always round up the number of leaves to a power of 2 to make it a full binary tree.
Each internal node stores the minimum value in its subtree.
When we traverse the tree at $\twin[i]$, if the smallest object to its left is smaller than $\twin[i]$,
we can skip the entire subtree.
%}\zheqi{I think we need to add a more clear context here, or delete it}
%For example, starting from the root, if the value at its left child $v_l$ is smaller than its right child, the entire right subtree can be skipped.
%This is because for each object $x$ in the right subtree, the smallest value before it must be $v_l$, which must be smaller than $x$ itself.
Using the internal nodes, we can maintain the minimum value before any subtree and skip irrelevant subtrees to save work.
%carry the information of the left subtrees and skip irrelevant subtrees to save work.

%In particular, we can identify the frontier of each round in parallel using a divide-and-conquer algorithm.
%The function \processfrontier{} traverses the tournament tree $\twin$ by calling \prefixminfunc{} starting at the root.
In particular, the function \processfrontier{} finds all \prefixmin{} objects from $\twin$ by calling \prefixminfunc{} starting at the root.
\prefixminfunc{}$(i, \leftmin{})$ traverses the subtree at node $i$,
and finds all leaves $v$ in this subtree s.t. 1) $v$ is no more than any leaf before $v$ in this subtree, and 2) $v$ is no more than $\leftmin$.
%We collect all such leaves in another tree, call it the \defn{result tree}.
%These objects are the \prefixmin{} objects which should be included in the current frontier.
The argument $\leftmin$ records the smallest value in $\twin$ before the subtree at $\twin[i]$.
If the smallest value in subtree $\twin[i]$ is larger than $\leftmin$, we can skip the entire subtree (\cref{line:skip}),
because no object in this subtree can be a \prefixmin{} object (they are all larger than $\leftmin$).
Otherwise, there are two cases.
The first case is when $\twin[i]$ is a leaf (Lines~\ref{line:leaf:begin}--\ref{line:leaf:remove}).
Since $\twin[i]\le \leftmin$, it must be a \prefixmin{} object.
%Therefore, we remove it by setting its value as $+\infty$ (\cref{line:leaf:remove}), and return a result tree leaf with this value (\cref{line:returnleaf}).
Therefore, we set its \DPvalue{} as the current round number $r$ (\cref{line:leaf:setdp}) and remove it by setting its value as $+\infty$ (\cref{line:leaf:remove}).
In second case, when $\twin[i]$ is an internal node (Lines~\ref{line:internal:begin}--\ref{line:compmin}), we can recurse on both subtrees in parallel to find the desired objects (\cref{line:recurse}).
For the left subtree, we directly use the current $\leftmin$ value.
For the right subtree, we need to further consider the minimum value in the left subtree.
Therefore, we take the minimum of the current $\leftmin$ and the smallest
value in the left subtree ($\twin[2i]$), and set it as the $\leftmin$ value of the right recursive call.
After the recursive calls return, we update $\twin[i]$ (\cref{line:compmin}) because some values in the subtree may have been removed (set to $+\infty$).
%These two recursive calls can be executed in parallel.
%After both recursive calls return, we create an internal node to connect the tree nodes returned by the two recursive calls.
%Finally, we return an internal result tree node to connect the tree nodes returned by the two recursive calls.
%After we collect all \prefixmin{} objects in the result tree, we will flatten them into an array as the frontier (\cref{line:lis:flatten}).
We present an example in \cref{fig:tree}, which illustrates finding the first frontier for the input in \cref{fig:lis}.

\hide{
	\begin{figure*}[t]
		\centering
		\begin{minipage}[c]{0.58\textwidth}
			\includegraphics[width=\columnwidth]{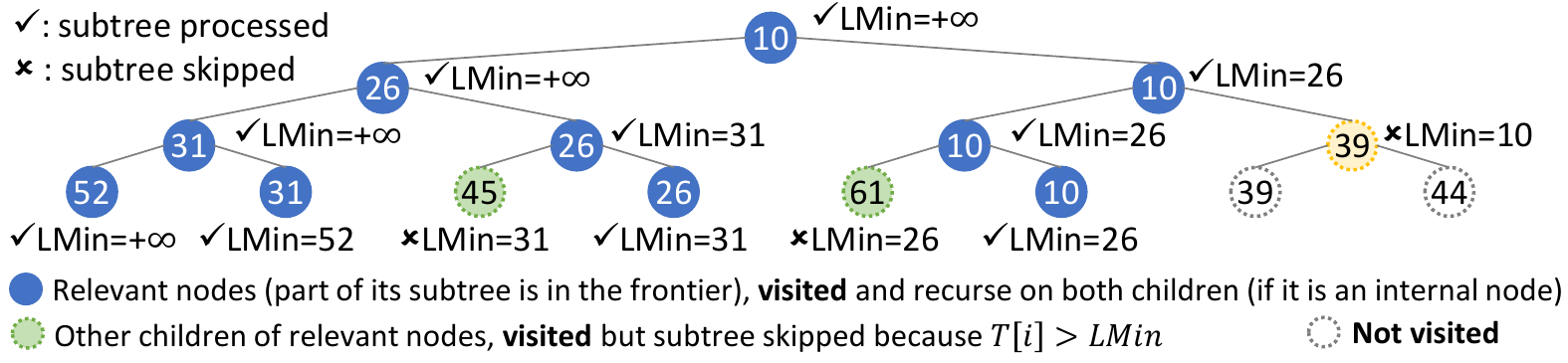}
		\end{minipage}
		\begin{minipage}[c]{0.4\textwidth}
			\caption{\small \textbf{The parallel tournament tree used in \cref{algo:lis}.}
				The leaf nodes store the input sequence ($A_{1..n}$). Each internal node stores the minimum value in its subtree.
				The figure illustrates finding the first frontier in the example in \cref{fig:lis}.
				The algorithm recursively traverses the tree from the root (two subtrees are visited in parallel),
				and maintains a $\leftmin$ value for each subtree, which is the smallest value before this subtree.
				If the $\leftmin$ value is smaller than the value stored at the subtree root, the subtree is skipped.
				For example, at the yellow node with value 39, the smallest value $\leftmin$ before it is 10.
				This means that no leaves in this subtree can be a \prefixmin{} object, so this subtree is skipped.
				%Blue nodes are \emph{\relevant} nodes containing at least one leaf in the frontier.
				%Yellow nodes are not \relevant{} but still visited, which are the children of some \relevant{} nodes.
				%Grey nodes are not visited.
				\label{fig:tree}
				%\vspace{-.1in}
			}
		\end{minipage}
	\end{figure*}
}

%The correctness of \cref{algo:lis} is straightforward based on \cref{lem:prefixmin}.
We now prove the cost of \cref{algo:lis} in \cref{thm:liscost}.

\vspace{-.5em}
\begin{theorem}\label{thm:liscost}
	\cref{algo:lis} computes the LIS of the input sequence $A$ in $O(n\log \lislength)$ work and $O(\lislength \log n)$ span, where $n$ is the length of the input sequence $A$, and $\lislength$ is the LIS length of $A$.
\end{theorem}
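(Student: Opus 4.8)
The plan is to prove the three claims---correctness, the work bound, and the span bound---separately, treating the work bound as the crux. For correctness, I would first establish the recursion invariant that whenever \prefixminfunc{}$(i,\leftmin)$ is invoked, $\leftmin$ equals the minimum leaf value lying strictly to the left of the subtree rooted at $\twin[i]$. This follows by induction down the recursion: the left child inherits $\leftmin$ unchanged, while the right child is passed $\min(\leftmin,\twin[2i])$, which additionally folds in the entire left sibling subtree. Given this invariant, a leaf is reported in \processfrontier{} exactly when it is no larger than everything to its left, i.e.\ precisely when it is a \prefixmin{} object. Since each round removes the reported objects by setting them to $+\infty$, \cref{lem:prefixmin} then yields that the objects reported in round $r$ are exactly those of rank $r$. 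Consequently the main loop runs for exactly $k$ rounds (the maximum rank equals the LIS length) and assigns the correct rank (\dpvalue{}) to every object.

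For the work bound, the key structural fact I would prove is that the set of nodes \emph{processed} (i.e.\ not skipped at \cref{line:skip}) in round $r$ is exactly the set of ancestors of the frontier $\ff_r$. One direction is immediate: any ancestor $u$ of a reported leaf $v\in\ff_r$ has subtree-minimum at most $\twin[v]$, which is at most the $\leftmin$ value of $u$, so $u$ is not skipped. For the converse, if a subtree with root satisfying $\twin[i]\le\leftmin$ is processed, then its leftmost minimizing leaf is $\le\leftmin$ and $\le$ every leaf before it, hence is a \prefixmin{} object and lies in $\ff_r$; thus $\twin[i]$ is an ancestor of a frontier leaf. The nodes actually \emph{visited} in round $r$ are the processed nodes together with their (possibly skipped) children, whose number is within a constant factor of the number of processed nodes. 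By \cref{thm:tourtree}, if $m_r=|\ff_r|$, the number of ancestors of $\ff_r$ is $O(m_r\log(n/m_r))$, so round $r$ costs $O(m_r\log(n/m_r))$ work.

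It then remains to sum over the rounds. Since every object is reported in exactly one round, $\sum_{r=1}^{k} m_r = n$ over exactly $k$ rounds. I would bound $\sum_{r=1}^{k} m_r\log(n/m_r)$ using concavity of the map $x\mapsto x\log(n/x)$: by Jensen's inequality the sum over $k$ terms with fixed total $n$ is maximized when all $m_r=n/k$, giving $\sum_r (n/k)\log k = n\log k$. Hence the total work is $O(n\log k)$, which subsumes the $O(n)$ cost of building the tournament tree.

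For the span, each call to \prefixminfunc{} descends the tournament tree with its two children explored in parallel (\cref{line:recurse}), so a single round has span $O(\log n)$, the tree height; the $k$ rounds are executed sequentially by the while loop, giving total span $O(k\log n)$, with the initial construction contributing only $O(\log n)$. I expect the main obstacle to be the work analysis---specifically, confining the visited nodes to the ancestors of the current frontier so that \cref{thm:tourtree} applies on a per-round basis, and then assembling the per-round bounds into the clean $O(n\log k)$ total via the concavity argument rather than a naive per-round $O(n)$ estimate that would only give $O(nk)$.
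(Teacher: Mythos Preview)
Your proposal is correct and follows essentially the same approach as the paper: identify the non-skipped (``processed'') nodes in round $r$ with the ancestors of the frontier $\ff_r$, invoke \cref{thm:tourtree} to bound them by $O(m_r\log(n/m_r))$, and sum via concavity/Jensen to obtain $O(n\log k)$; the span argument is identical. Your treatment is in fact slightly more thorough, as the paper omits the correctness argument (the $\leftmin$ invariant and appeal to \cref{lem:prefixmin}) that you spell out.
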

\vspace{-.5em}

\begin{proof}
	Constructing $\twin$ takes $O(n)$ work and $O(\log n)$ span.
	We then focus on the main loop (Lines~\ref{line:whilebegin}--\ref{line:whileend}) of the algorithm.
	The algorithm runs in $\lislength$ rounds.
	In each round, \processfrontier{} recurses for $O(\log n)$ steps.
	Hence, the algorithm has $O(k\log n)$ span.
	
    %Next we show that the round $r$ in \processfrontier{} takes $O(m_r\log (n/m_r))$ work,
    Next we show that the work of \processfrontier{} in round $r$ is $O(m_r\log (n/m_r))$ work,
	where $m_r=|\ff_r|$ is the number of \prefixmin{} objects identified in this round. 	
	First, note that visiting a tournament tree node has a constant cost, so the work is asymptotically the number of nodes visited in the algorithm.
	%Also note that if the subtree does not contain any objects in this frontier, the entire subtree will be skipped (only the subtree root is visited).
	%We split all nodes into four types (also see an illustration in \cref{fig:tree}).
	%Let the set of nodes $S_1$ be all leaf nodes in $T$ with the \prefixmin{} objects (green nodes in \cref{fig:tree}), which are all the objects in the current frontier.
	%Let $S_2$ be the ancestors of nodes in $S_1$ (blue nodes in \cref{fig:tree}).
	We say a node is \defn{\relevant{}} if at least one object in its subtree is in the frontier.
	Based on \cref{thm:tourtree}, there are $O(m_r\log(n/m_r))$ \relevant{} nodes.
	
	If \cref{line:leaf:begin} is executed (i.e., \cref{line:skip} does not return),
	the smallest object in this subtree is no more than $\leftmin$ and must be a \prefixmin{} object,
	and this node is \relevant{}.
	%Therefore, at least one object in this subtree should be in the frontier.
	%Let $S$ be the set of leaf nodes in $T$ which contains objects in the frontier and all their ancestors in $T$.
	%Then all nodes in $S$ are visited by the algorithm.
	%Clearly, only objects in $S_1\cup S_2$ executes \cref{line:leaf:begin} because they are on the path to some \prefixmin{} objects.
	Other nodes are also visited but skipped by \cref{line:skip}.
	Executing~\cref{line:skip} for subtree $i$
	means that $i$'s parent executed \cref{line:internal:begin}, so $i$'s parent is \relevant{}.
	%We categorize all the children of nodes in $S_2$ (but themselves not in $S_1\cup S_2$) as $S_3$ (yellow nodes in \cref{fig:tree}).
	%As a result, all visited nodes must be either in $S$, or a child of a node in $S$.
	%As a result, the set of visited nodes are $S_1\cup S_2\cup S_3$.
	This indicates that a node is visited either because it is \relevant{}, or its parent is \relevant{}.
	Since every node has at most two children, the number of visited nodes is asymptotically the same as all \relevant{} nodes.
	%Since there are $m_r$ leaves identified in the frontier,
	%all \relevant{} nodes are all the ancestors of the $m_r$ leaves,
	which is $O(m_r\log(n/m_r))$.
	Hence, the total number of visited nodes is:
	\hide{
		\begin{align*}
			&~\sum_{r=1}^{\lislength} m_r\log(n/m_r)\\
			\le &~\sum_{i=1}^{\lislength} (n/\lislength)\log(n/(n/\lislength)) \\
			=&~n\log \lislength
		\end{align*}
	}
\vspace{-.5em}
	\begin{align*}
		\sum_{r=1}^{\lislength} m_r\log(n/m_r)\le\sum_{i=1}^{\lislength} (n/\lislength)\log(n/(n/\lislength))=n\log \lislength
	\end{align*}
	
	The last step uses the concavity of the function $f(x)=x\log_2(1+\frac{n}{x})$. This proves the work bound of the algorithm.
\end{proof}

\begin{figure}[t]
\vspace{-1em}
	\includegraphics[width=\columnwidth]{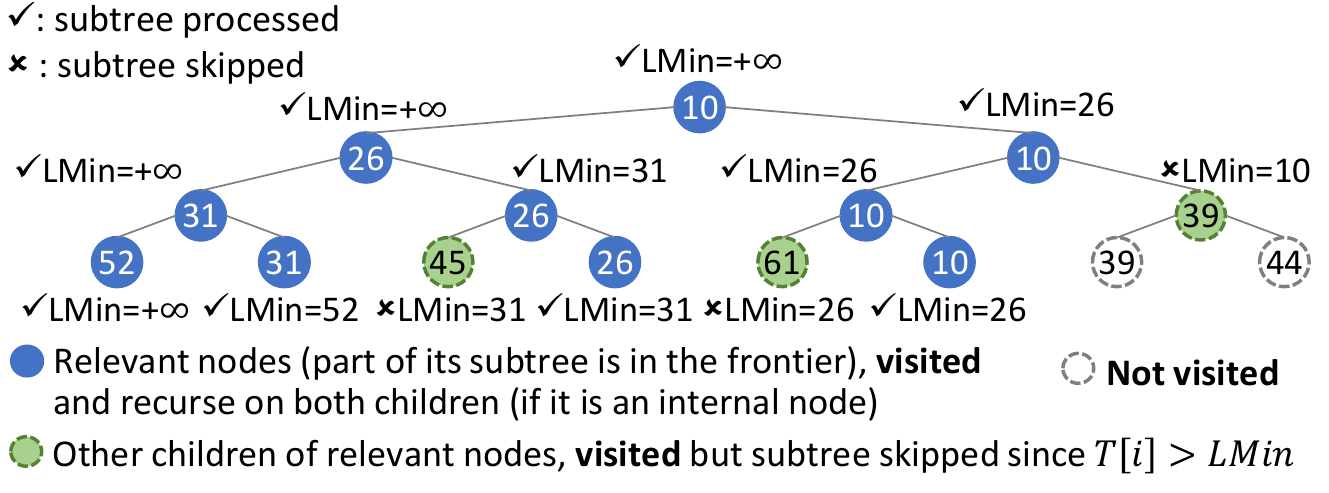}
	\caption{\small \textbf{The parallel tournament tree for \cref{algo:lis}.}
		The leaves store the input $A_{1..n}$. An internal node stores the minimum value in its subtree.
		The figure illustrates finding the first frontier for \cref{fig:lis}.
		The algorithm recursively traverses the tree from the root
		and maintains a $\leftmin$ value for each subtree as the smallest value before this subtree.
		If $\leftmin$ is smaller than the value at the subtree root, we skip the subtree.
		For example, the smallest value before the green node \nodecircle{39} is $\leftmin=10$.
		Therefore, no leaves in this subtree can be a \prefixmin{} object, so this subtree is skipped.
		%Blue nodes are \emph{\relevant} nodes containing at least one leaf in the frontier.
		%Yellow nodes are not \relevant{} but still visited, which are the children of some \relevant{} nodes.
		%Grey nodes are not visited.
		\label{fig:tree}
		%\vspace{-.1in}
	}
	%\vspace{-1em}
\end{figure}

Note that the work bound of \cref{thm:liscost} is parameterized on the LIS length $\lislength$. For small $\lislength$,
the work can be $o(n\log n)$. For example, if the input sequence is strictly decreasing, \cref{algo:lis} only needs $O(n)$ work because
%all objects will be found in the frontier in $O(n)$ work, and the algorithm finishes.
the algorithm will find all objects in the first round in $O(n)$ work and finishes.

%\myparagraph{Reporting a LIS of the Input Sequence}.

\section{Weighted Longest Increasing Subsequence}\label{sec:wlis}

A nice property of the unweighted LIS problem is that the \dpvalue{} is the same as its rank.
In round $r$, we simply set the \dpvalue{s} of all objects in the frontier as $r$.
This is not true for the weighted LIS (WLIS) problem, and we need additional techniques to handle the weights.
Inspired by \swgs{}, our WLIS algorithm is built on an efficient data structure $\trange$ supporting
2D \defn{\dommax{}} queries: for a set of 2D points $(x_i,y_i)$ each with a \defn{score} $s_i$,
the \dommax{} query $(q_x,q_y)$ asks for the maximum score among all points in its lower-left corner $(-\infty,q_x)\times (-\infty,q_y)$.
\iffullversion{We illustrate a \dommax{} query in \cref{fig:dominantmax} in the appendix.}
\ifconference{A \dommax{} query is demonstrated in the full version of this paper.}
We will use such a data structure to efficiently compute the \dpvalue{s} of all objects.
%The idea of our WLIS algorithm combines our \cref{algo:lis} with the \swgs{} algorithm using

\newcommand{\dparray}{\mathit{dp}}
\newcommand{\cur}{\mathit{count}}
\newcommand{\mathfrontier}{\mathit{frontier}}
\hide{
\begin{algorithm}[t]
\fontsize{9pt}{10pt}\selectfont
\caption{\small The parallel weighted LIS algorithm\label{algo:lis-weighted}}
\SetKwFor{parForEach}{parallel\_for\_each}{do}{endfor}
\KwIn{A sequence $A_{1..n}$ with weit comparison function $<$. Object $i$ has weight $w[i]$}
\KwOut{The DP values $\dparray[1..n]$ for each object $A_i$.}
    \vspace{0.5em}
\SetKwProg{MyStruct}{Struct}{}{end}
\SetKwProg{RangeTree}{RangeTree$\langle$Point$\rangle$}{ contains}{end}
\SetKwProg{MyFunc}{Function}{}{end}
\newcommand{\Int}{\KwSty{int}}
\newcommand{\A}{\KwSty{A}}
\DontPrintSemicolon
\MyStruct{Point}{
  \Int{} $x,y$ \tcp*{x = index, y = $A_x$.}
  \Int{} $\mathdp$ \tcp*{The DP value of $A_x$.}
}
Point $p[1..n]$\\
\Int{} $\dparray[1..n]$ \tcp*[f]{$\dparray[i]$: the DP value of $A_i$.}
%\Int{} $\dparray[1..n]$ \tcp{$\dparray[i]$: the LIS length ending at $a[i]$ (same as Alg. \ref{algo:lis})}

\MyStruct{RangeTree$\langle$Point$\rangle$}{
  %$<_x(p_1,p_2)$: \Return{$p_1.x<p_2.x$}\\
  %$<_y(p_1,p_2)$: \Return{$p_1.y<p_2.y$}\\
  %$\mathit{dp}^{*}$ \tcp*{$\mathit{dp}^{*}$: max DP value in subtree}
  Contains 2D Points $(x_i,y_i)$ with weights \\
  Supports RangeMax$(p,q)$ query: returns the maximum weight (the $\mathdp{}$ value) among all points $(x_i,y_i)$ where $x_i<p$ and $y_i<q$
  %Maintains the maximum DP value in each subtree for range-max queries
}

RangeTree $\trange$\\

\smallskip

%%\MyStruct{TreeNode}{
%%\Int{} $\val$ \tcp*{For $n\le i < 2n$, $T[i]=a[i-n+1]$. Otherwise, $T[i]$ is the smallest value in its subtree. }
%%\Int{} $\cur$ \tcp*{The number of objects in the current frontier. Initialized to $0$.}
%%}
%
%TreeNode $T[1..2n-1]$ \tcp*{$T$: the (implicit) tournament tree (same as Alg. \ref{algo:lis}).}

%\tcp{$T$: the (implicit) tournament tree. The same as Algorithm \ref{algo:lis}.}
\Int{} $T[1..(2n-1)]$ \tcp*[f]{$T$: the same as Algorithm \ref{algo:lis}.}\\

\smallskip

\lparForEach{$A_i\in A$}{
  $p[i] = \langle i, A_i,0\rangle$
}
Initialize the tournament tree $T$ \\
Construct $\trange$ from $p[\cdot]$\\
\smallskip
\While (\tcp*[f]{$T$ is not empty.}){$T[1]\ne +\infty$} {
  %\tcp{
  %Find all $A_j$ s.t. $A_j$ is a \prefixmin{} object.  Compute $\dparray[j]$ from the range tree $\trange$, and remove $A_j$.
  %}
  \processfrontier()  \\
  Update all DP values by $\mathdp[\cdot]$ in the range tree $\trange$ marked as ``updated'', then clear the mark\\
}

\medskip

\MyFunc{\upshape\mf{\processfrontier}() }{
  \prefixminfunc(1, $+\infty$)\\
}

\medskip
\tcp{Find and remove objects $x$ in subtree rooted at $T[i]$, s.t.: 1) $x\le$ all object before it, and 2) $x\le\leftmin$.
Then compute the DP values of $x$ from the range tree.
}
%\tcp{Deal with subtree at $T[i]$. Find objects satisfying: 1) it is smaller than all objects before it}
%\tcp{and 2) it is smaller than $\mathit{LeftMin}$. Then compute the DP values of such objects from the range tree. }
%\tcp{}
\MyFunc{\upshape\mf{PrefixMin}(\Int{} $i$, \Int{} $\leftmin$) }{
  %\If(\tcp*[f]{Skip if the min of the subtree is larger than $\leftmin$}) {$T[i]< \leftmin$} {
  \lIf {$T[i] > \leftmin$} {\Return} %(\tcp*[f]{Skip if the min of the subtree is larger than $\leftmin$})
  \If(\tcp*[f]{Found a leaf node in the frontier.}){$i\ge n$\label{line:weighted:begin}}{
  \tcp{Compute the DP value by a 2D range-max.}
  $\mathdp[i]\gets \trange$.2DMax$(i-n+1, A_{i-n+1})+w[i-n+1]$ \\
  Mark object $i-n+1$ as ``updated'' in $\trange$\\
  $T[i]\gets +\infty$ \label{line:weighted:remove}\tcp*{Remove the object from $T$.}
  } \Else(\tcp*[f]{Deal with an internal node. }){ %Recurse on both children. Pass the min value of the left subtree to the right subtree.
  %\tcp{Recurse on both children. Pass the min value of the left subtree to the right subtree }
    \textsc{PrefixMin}($2i$, $\leftmin$) $||$ \textsc{PrefixMin}($2i+1$, $\min(\leftmin,T[2i])$)\tcp*[f]{Run in parallel.}\\
    $T[i]\gets \min(T[2i],T[2i+1])$  %\tcp*{After the recursive calls, update the value.}
  }

}

\end{algorithm}
}

\begin{algorithm}[t]
\fontsize{8pt}{8.5pt}\selectfont
\caption{The parallel weighted LIS algorithm\label{algo:lis-weighted}}
\SetKwFor{parForEach}{parallel-foreach}{do}{endfor}
\KwIn{A sequence $A_{1..n}$. Object $A_i$ has weight $w_i$}
\KwOut{The DP values $\dparray[1..n]$ for each object $A_i$.}
    \vspace{0.5em}
\SetKwProg{MyStruct}{Struct}{}{end}
%\SetKwProg{PreMax2D}{RangeTree$\langle$Point$\rangle$}{ contains}{end}
\SetKwProg{MyFunc}{Function}{}{end}
\newcommand{\Int}{\KwSty{int}}
\newcommand{\A}{\KwSty{A}}
\DontPrintSemicolon
\MyStruct{Point}{
  \Int{} $x,y$ \tcp*{y = index, x = $A_y$.}
  \Int{} $\mathdp$ \tcp*{The DP value of $A_y$, used as the score of the point.}
}
Point $p[1..n]$\\
\Int{} $\dparray[1..n]$ \tcp*[f]{$\dparray[i]$: the DP value of $A_i$.}
%\Int{} $\dparray[1..n]$ \tcp{$\dparray[i]$: the LIS length ending at $a[i]$ (same as Alg. \ref{algo:lis})}

\MyStruct{RangeStruct$\langle$Point$\rangle$}{
  %$<_x(p_1,p_2)$: \Return{$p_1.x<p_2.x$}\\
  %$<_y(p_1,p_2)$: \Return{$p_1.y<p_2.y$}\\
  %$\mathit{dp}^{*}$ \tcp*{$\mathit{dp}^{*}$: max DP value in subtree}
  Stores points $\langle x_i,y_i,\mathdp_i\rangle$ with coordinate $(x_i,y_i)$ and score~$\mathdp_i$\\
  Supports \textsc{\Dommax}$(p,q)$: return the maximum score (the $\mathdp[\cdot]$ value) among all points $(x_i,y_i)$ where $x_i<p$ and $y_i<q$\\
  Supports \textsc{Update}$(B)$, where $B=\{\langle x_i, y_i, \mathdp_i \rangle\}$ is a batch of points: update the score of each point $(x_i,y_i)$ to $\mathdp_i$\\
  %Maintains the maximum DP value in each subtree for range-max queries
}

\textit{RangeStruct} $\trange$ \tcp*[f]{Any data structure that supports \textsc{\Dommax}}\\

\smallskip

%%\MyStruct{TreeNode}{
%%\Int{} $\val$ \tcp*{For $n\le i < 2n$, $T[i]=a[i-n+1]$. Otherwise, $T[i]$ is the smallest value in its subtree. }
%%\Int{} $\cur$ \tcp*{The number of objects in the current frontier. Initialized to $0$.}
%%}
%
%TreeNode $T[1..2n-1]$ \tcp*{$T$: the (implicit) tournament tree (same as Alg. \ref{algo:lis}).}

%\tcp{$T$: the (implicit) tournament tree. The same as Algorithm \ref{algo:lis}.}
%\Int{} $T[1..(2n-1)]$ \tcp*[f]{$T$: the same as Algorithm \ref{algo:lis}.}\\

Run \cref{algo:lis}.  Sort the $\mathit{rank}$ array and get all the $k$ frontiers $\ranklist_{1..k}$. $\ranklist_i$ contains the indexes of all objects with rank $i$.\label{line:unweighted}

\smallskip

\lparForEach{$A_i\in A$}{
  $p[i] = \langle A_i, i,0\rangle$
}
Construct $\trange$ from $p[\cdot]$\\
\smallskip
\For{$i\gets 1$ to $k$\label{line:for-batch}} {
  \parForEach(\tcp*[f]{$A_j$ is an object with rank $i$}){$j \in \ranklist_i$} {
    $\mathdp[j]\gets \trange.$\textsc{\Dommax}$(A_j, j)+w_j$\label{line:dominant}\\
  }
  $B\gets \{\langle A_j, j, \mathdp[j]\rangle : j\in \ranklist_i\}$\\
  $\trange.$\textsc{Update}$(B)$\label{line:update}
}

\Return{$\mathdp[\cdot]$}
\end{algorithm}

%We first show the high-level ideas.
We present our WLIS algorithm in \cref{algo:lis-weighted}.
We view each object as a 2D point $(A_i, i)$ with score $\mathdp[i]$,
and use a data structure $\trange$ that supports \dommax{} queries to maintain all such points.
Initially $\mathdp[i]=0$.
We call $A_i$ and $i$ as the x- and y-coordinate of the point, respectively.
Given the input sequence, we first call \cref{algo:lis} to compute the rank of each object and sort them by ranks to find each frontier $\ff_i$.
This can be done by
any parallel sorting with $O(n)$ work and $O(\log^2 n)$ span.
%Let $\ff_i$ be the array of all objects with rank $i$.
%that assign the rank for each record (\cref{line:unweighted}).
%The objects form frontiers based on their ranks, and we can process each frontier in turn (\cref{line:for-batch}).
We then process all the frontiers in order.
When processing $\ff_i$, we compute the \dpvalue{s} for all $j\in \ff_i$ in parallel, using
$\mathdp[j]=\max_{j'<j,A_{j'}<A_j} \mathdp[j']$.
%Since the objects are weighted now, we need 2D \dommax{} queries to compute the DP values (\cref{line:dominant}), and update our global data structure $\trange$ (\cref{line:update}).
This can be done by the \dommax{} query on $\trange$ (\cref{line:dominant}), which
reports the highest score (\dpvalue{}) among all objects in the lower-left corner of the object $j$.
Finally, we update the newly-computed $\mathdp$ values to $\trange$ (\cref{line:update}) as their scores.

%We now go into the details to process each frontier $\ff_i$.
%To compute the frontier $\ff_r$ in each round, we need to run \processfrontier{} twice.
%The first execution marks all objects to be selected in this round and computes the number of the selected objects in each subtree.
%The second execution outputs the objects to $\ff_r$ using the information from the first execution.\yan{the previous two sentences need rewording.}
%Note that the rank of the objects still determines their dependencies.
%Then, for each frontier, we compute the DP values of the objects in it similar to \swgs{}.
%We view each object as a 2D point $(A_i,i)$. %, and each object depends on all objects in its lower-left corner.
%For each object $A_i$, we compute its DP value by querying the highest DP value among all objects on its lower-left corner and adding it with its weight.
%This query is implemented by a \dommax{} data structure (the \textit{RangeStruct} in \cref{algo:lis-weighted}) that
%supports the \dommax{} query to find the maximum score for a 2-sided range, as well as updating the DP values for
%a batch of points.
%We will use $\trange$ to refer to such a data structure to maintain each object as a 2D point $(A_i,i)$ with score $\mathdp[i]$.
%In particular, when processing an object $A_i$ in each frontier (\cref{line:for-batch}),
%we will set its DP value as the maximum score of all elements in $(-\infty,A_i)\times (-\infty, i)$.
%Finally, we update the DP value $\mathdp[i]$ for each point $(A_i,i)$ back to $\trange$.

The efficiency of this algorithm then relies on the data structure to support \dommax{}.
We will propose two approaches
to achieve practical and theoretical efficiency, respectively.
The first one is similar to \swgs{} and uses range trees, which leads to $O(n\log^2 n)$ work and $\tilde{O}(k)$ span for the WLIS problem.
By plugging in an existing range-tree implementation~\cite{sun2018pam}, we obtain a simple parallel WLIS implementation that significantly outperforms the existing implementation from \swgs{}.
The details of the algorithm are in \cref{sec:wlis-range}, and the performance comparison is in \cref{sec:exp}.
We also propose a new data structure, called the \rangeveb{}, to enable a better work bound ($O(n \log n \log \log n)$ work) for WLIS.
Our idea is to re-design the inner tree in range trees as a \emph{parallel \veb{} tree}.
We elaborate our approach in \cref{sec:rangeveb,sec:veb}.

\subsection{Parallel WLIS based on Range Tree}\label{sec:wlis-range}

We can use a parallel \defn{range tree}~\cite{bentley1979data,sun2019parallel} to answer \dommax queries.
A range tree~\cite{bentley1979data} is a nested binary search tree (BST) where the \emph{outer tree} is an index of the $x$-coordinates of the points.
Each tree node maintains an \emph{inner tree} storing the same set of points in its subtree but keyed on the $y$-coordinates (see \cref{fig:range}).
%This essentially can be understood as a \pabst{} where the inner trees are the augmented values, and thus the combined function is a parallel union on the two inner trees~\cite{}.
%We let each tree node store the maximum weight in its subtree to answer range-max queries.
%Similar idea applies to reporting the sum of all weights in a rectangle.
%Our algorithm will use range trees to report the maximum weight in the lower-left corner of a given point.
We can let each inner tree node store the maximum score in its subtree, which enables efficient \dommax{} queries.
In particular, for the outer tree, we can search $(-\infty,q_x)$ on the $x$-coordinates.
This gives $O(\log n)$ relevant subtrees in this range (called the \defn{in-range} subtrees), and $O(\log n)$ relevant nodes connecting them (called the \defn{connecting} nodes).
In \cref{fig:range}, when $q_x=6.5$, the in-range inner trees are the inner trees of points $(2,6)$ and $(5,1)$, since their entire subtrees falls into
range $(-\infty, 6.5)$. The connecting nodes are $(4,5)$ and $(6,4)$, as their x-coordinates are in the range, but only part of their subtrees are in the range.
For each in-range subtree, we further search $(-\infty, q_y)$ in the inner trees to get the maximum score in this range,
and consider it as a candidate for the maximum score.
For each connecting node, we check if its $y$-coordinates are in range $(-\infty, q_y)$, and if so, consider it a candidate.
Finally, we return the maximum score among the selected candidates (both from the in-range subtrees and connecting nodes).
Using the range tree in ~\cite{sun2018pam,sun2019parallel,blelloch2020optimal}, we have the following result for WLIS.
%We will use the following theorem about range trees.
\hide{
	\begin{theorem}(Parallel Range Trees~\cite{sun2018pam,sun2019parallel,blelloch2020optimal})
		\label{thm:rangetree}
		Given a set of $n$ points $(x_i,y_i)$ with score $s_i$ on a 2D plane, parallel range trees support the following operations with $O(n\log n)$ space:
		\begin{itemize}[leftmargin=*,topsep=2pt, partopsep=2pt,itemsep=1pt,parsep=1pt]
			\item constructing the tree in $O(n\log n)$ work and $O(\log^2 n)$ span,
			\item answering the \dommax{} query in $O(\log^2 n)$ work and span,
			\item updating the scores of a set of $m\le n$ points in $O(m\log^2 n)$ work and $O(\log^2 n)$ span.
		\end{itemize}
	\end{theorem}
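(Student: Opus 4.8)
The plan is to verify each claim directly for the nested-BST structure described above: a balanced \emph{outer} tree keyed on $x$, where every outer node carries a balanced \emph{inner} BST holding the points of its subtree keyed on $y$, and every inner node is augmented with the maximum score in its inner subtree. First I would settle the space bound. Since the outer tree is balanced with depth $O(\log n)$, each point lies in the inner trees of exactly its $O(\log n)$ outer-tree ancestors, so the total number of stored copies is $O(n\log n)$, which is the claimed space and also foreshadows the $\log$ factors in the work bounds.

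For construction, I would first sort the points by $x$ (parallel sort, $O(n\log n)$ work and $O(\log^2 n)$ span) to fix the shape of the balanced outer tree. The inner trees I would then build bottom-up: the inner tree of an outer node is obtained by merging the two $y$-sorted point sequences of its children, and the augmented maxima are filled in as each inner tree is assembled. Parallel merging of two sorted sequences takes linear work and $O(\log n)$ span, so one outer level costs $O(n)$ work and $O(\log n)$ span; summing over the $O(\log n)$ outer levels gives $O(n\log n)$ work and $O(\log^2 n)$ span overall.

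For a \dommax{} query $(q_x,q_y)$, I would search $(-\infty,q_x)$ in the outer tree, which decomposes the range into $O(\log n)$ \emph{in-range} inner subtrees and $O(\log n)$ \emph{connecting} nodes on the search path. For each in-range inner tree I perform a one-sided prefix-max query $(-\infty,q_y)$: this walks a single root-to-leaf path and combines the augmented maxima hanging to the left of the path, costing $O(\log n)$; each connecting node contributes a single $O(1)$ score check after testing its $y$-coordinate against $q_y$. Taking the maximum over all $O(\log n)$ candidates returns the answer. This is $O(\log n)$ inner queries of $O(\log n)$ each, i.e. $O(\log^2 n)$ work, and performing the outer search and the inner queries sequentially gives the stated $O(\log^2 n)$ span (the inner queries are in fact independent and could be run in parallel, but the looser bound already suffices).

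The batched update of $m$ scores is where the real effort lies, and I expect it to be the main obstacle. Each changed point appears in the $O(\log n)$ inner trees of its outer ancestors, and within each such inner tree changing a leaf score forces recomputation of the augmented maxima along the affected root-to-leaf path; treating the paths independently already yields $O(m\log^2 n)$ work, matching the claim. The difficulty is attaining $O(\log^2 n)$ span without write contention when many changed points share nodes of the same inner tree. Here I would exploit that the inner trees of distinct outer nodes occupy disjoint memory, so the only contention is \emph{within} a single inner tree; for each affected inner tree I would gather its changed leaves and recompute augmented maxima bottom-up over the \emph{union} of their paths, which by the ancestor-counting bound of \cref{thm:tourtree} keeps the per-inner-tree touched-node count and hence the work under control and costs only $O(\log n)$ span per inner tree. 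Coordinating this across the $O(\log n)$ outer levels (including the routing/gathering of changed points into their inner trees, e.g. via a sort) yields the $O(\log^2 n)$ span. The subtle points to get right are the contention-free bottom-up recomputation within each inner tree and the accounting showing that the total path length, summed over all affected inner trees, remains $O(m\log^2 n)$.
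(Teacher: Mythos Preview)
The paper does not prove this theorem at all: it is stated (inside a \verb|\hide{}| block, so in fact suppressed from the final version) purely as a citation of prior work~\cite{sun2018pam,sun2019parallel,blelloch2020optimal}, and the surrounding text in \cref{sec:wlis-range} only \emph{describes} the range-tree decomposition into in-range subtrees and connecting nodes without deriving the bounds. So there is no ``paper's own proof'' to compare against.

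Your sketch is correct and is essentially the standard argument from the cited references. The space and query analyses are exactly right. For construction, your bottom-up merge over outer-tree levels is the usual approach; note only that you should also account for building the balanced inner BSTs (with augmented maxima) from the merged $y$-sorted arrays, which is $O(n)$ work and $O(\log n)$ span per level and so does not change the totals. For the batched score update, your plan is sound: route each updated point to its $O(\log n)$ relevant inner trees (e.g., by a top-down split of the sorted batch along the outer tree, $O(\log n)$ levels at $O(\log n)$ span each), then within each inner tree recompute augmented maxima bottom-up over the union of affected root-to-leaf paths. The per-inner-tree work is $O(m_i\log n)$ by the ancestor-count bound and the span is $O(\log n)$; summing $m_i$ over the $O(\log n)$ outer levels gives $O(m\log^2 n)$ work, and the span is $O(\log^2 n)$ since distinct inner trees are independent and the outer levels contribute another $O(\log n)$ factor.
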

}

%We will call these relevant inner trees the \defn{in-range inner trees}, and the relevant connecting tree nodes \defn{in-range connecting nodes}.
%If we use the parallel range tree for parallel WLIS, we need in total inserting $n$ objects to the range tree, and applying $n$ \dommax{} queries in $k$ batches.
%Plugging in the cost bound in \cref{thm:rangetree} gives the following theorem directly.

\begin{theorem}
	Using a parallel range tree for the \dommax{} queries, \cref{algo:lis-weighted} computes the weighted LIS of an input sequence $A$ in $O(n\log^2 n)$ work and $O(\lislength\log^2 n)$ span, where $n$ is the length of the input sequence $A$, and $\lislength$ is the LIS length of $A$.
\end{theorem}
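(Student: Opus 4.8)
The plan is to establish correctness first and then sum up the work and span phase by phase, relying on the parallel range-tree bounds (construction in $O(n\log n)$ work and $O(\log^2 n)$ span, a \dommax{} query in $O(\log^2 n)$ work and span, and a batch update of the scores of $m$ points in $O(m\log^2 n)$ work and $O(\log^2 n)$ span) together with \cref{thm:liscost} for the initial call to \cref{algo:lis}.

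For correctness, I would invoke the rank structure of \cref{lem:prefixmin}: object $A_j$ depends on $A_{j'}$ (that is, $j'<j$ and $A_{j'}<A_j$) only when $\rank(j')<\rank(j)$, and two objects of equal rank never depend on each other. Since \cref{algo:lis-weighted} sweeps the frontiers $\ff_1,\dots,\ff_{\lislength}$ in increasing rank order, when the batch $\ff_i$ is processed every dependency of an object $j\in\ff_i$ sits in some earlier frontier and has therefore already had its correct score written into $\trange$ by a previous \textsc{Update}. The point to verify carefully is that querying the full point set is sound even though all $n$ points are inserted at construction with score $0$: a point $j'$ lies in the lower-left quadrant $(-\infty,A_j)\times(-\infty,j)$ of $j$ exactly when $A_j$ depends on it, so every point the \dommax{} query at $j$ can see has already been finalized, while still-unprocessed (score-$0$) points never enter the quadrant. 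Hence \cref{line:dominant} returns $\max_{j'<j,\,A_{j'}<A_j}\mathdp[j']$ (with the convention that an empty quadrant yields $0$), matching \cref{eqn:lisdpweighted}; and because within one frontier the queries are mutually independent, the \texttt{parallel-foreach} is safe.

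For the work bound I would sum the phases. The call to \cref{algo:lis} costs $O(n\log \lislength)=O(n\log n)$ work; sorting the ranks (integers in $[1,\lislength]$, via a counting sort) and building the point array cost $O(n)$; constructing $\trange$ costs $O(n\log n)$. In the main loop each object is queried exactly once, so the total query work is $\sum_{i=1}^{\lislength}|\ff_i|\cdot O(\log^2 n)=O(n\log^2 n)$ using the identity $\sum_i|\ff_i|=n$, and likewise the batch updates cost $\sum_{i=1}^{\lislength}O(|\ff_i|\log^2 n)=O(n\log^2 n)$. The range-tree operations dominate, giving $O(n\log^2 n)$ overall.

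For the span bound, the preprocessing (\cref{algo:lis}, sort, construction) contributes $O(\lislength\log n)+O(\log^2 n)$. The main loop's $\lislength$ iterations are inherently sequential, since querying $\ff_i$ must wait for the score updates of $\ff_{i-1}$; within an iteration the parallel queries add $O(\log^2 n)$ query span plus $O(\log n)$ fork span, and the batch \textsc{Update} adds $O(\log^2 n)$, so each iteration has $O(\log^2 n)$ span. Summing over the $\lislength$ iterations yields $O(\lislength\log^2 n)$, which absorbs the preprocessing. The main obstacle is the correctness argument — in particular confirming that the score-$0$ initialization of all points cannot corrupt a \dommax{} query, and that equal-rank objects are independent so an entire frontier may be queried in parallel — whereas the cost bounds then follow routinely from $\sum_i|\ff_i|=n$ and the sequential dependence across the $\lislength$ frontiers.
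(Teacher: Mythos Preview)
Your proposal is correct and follows essentially the same approach the paper takes (or rather, implies): the paper states this theorem without a detailed proof, simply citing the standard parallel range-tree bounds and the structure of \cref{algo:lis-weighted}. Your phase-by-phase accounting, use of $\sum_i|\ff_i|=n$, and the per-round $O(\log^2 n)$ span over $\lislength$ sequential rounds is exactly the intended argument; your correctness discussion (in particular, that any point visible in the lower-left quadrant of $j$ must have strictly smaller rank and hence a finalized score, so the score-$0$ placeholders are harmless) is more explicit than anything the paper spells out and is sound.
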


%If we allow for randomization, we can optimize the span of the algorithm to $O(\lislength\log n)$ \whp{}. Since this is not the main focus of this work, we present the idea in Appendix \ref{app:rangetreespan}.

\subsection{WLIS Using the \rangeveb{} Tree}
\label{sec:rangeveb}

We can achieve better bounds for WLIS using parallel van Emde Boas (vEB) trees.
Unlike the solution based on parallel range trees, the \veb{}-tree-based solution is highly non-trivial.
Given the sophistication, we describe our solution in two parts.
This section shows how to solve parallel WLIS assuming we have a parallel vEB tree.
Later in \cref{sec:veb}, we will show how to parallelize vEB trees.

We first outline our data structure at a high level.
We refer to our data structure for the \dommax query as the \defn{\rangeveb{} tree}, which is
inspired by the classic range tree as mentioned in \cref{sec:wlis-range}.
The main difference is that the inner trees are replaced by \defn{\monoveb{} trees} (defined below). %, which are used to answer 1D prefix-max queries.
%Next, we show how to maintain \rangeveb{} trees assuming we have parallel \veb{} trees (which will be given in \cref{sec:veb}), and how our new techniques solve parallel WLIS.
%We now introduce how to design the \rangeveb{} tree and \monoveb{} trees.
Recall that in \cref{algo:lis-weighted}, the \emph{RangeStruct} implements two functions \textsc{\Dommax} and \textsc{Update}.
We present the pseudocode of \rangeveb{} for these two functions in \cref{algo:rangeveb}, assuming we have parallel functions on \veb{} trees.% (which will be given in \cref{sec:veb}).

Similar to range trees, our \rangeveb{} tree is a two-level nested structure, where the outer tree is indexed by $x$-coordinates, and the inner trees are indexed by $y$-coordinates.
For an outer tree node $v$, we will use $S_v$ to denote the set of points in $v$'s subtree and $T_v$ as the inner tree of $v$.
Like a range tree, the inner tree $T_v$ also corresponds to the set of points $S_v$, but only the \emph{staircase} of $S_v$ (defined below).
Since the y-coordinates are the indexes of the input, which are integers within $n$, we can maintain this staircase in a \veb{} tree.
%A \monoveb{} tree maintains the \emph{staircase} of the 2D points $\langle a_i,b_i\rangle$ where $a_i$ is an integer in a small range.
Recall that the inner tree stores the $y$-coordinates as the key and uses the \dpvalue{s} as the scores.
%For two pairs $p_1=\langle a_1,b_1\rangle$ and $p_2=\langle a_2,b_2\rangle$, we say $p_1$ \defn{dominates} $p_2$
For two points $p_1=\langle x_1, y_1,\mathdp_1\rangle$ and $p_2=\langle x_2, y_2,\mathdp_2\rangle$, we say $p_1$ \defn{\dominates} $p_2$
if $y_1<y_2$ and $\mathdp_1\geq \mathdp_2$.
For a set of points $S$, the \defn{staircase}
%\ziyang{the reviewer said "it is awkward to have staircase appear twice in bold italic twice, as if the term is being defined twice", but I think it is ok}
of $S$ is the maximal subset $S'\subseteq S$ such that for any $p\in S'$, $p$ is not \dominated{} by any points in $S$.
In other words, for two input objects $A_i$ and $A_j$ in WLIS, we say $A_i$ \defn{\dominates{}} $A_j$ if $i$ comes before $j$ and has a larger or equal \dpvalue{}.
This also means that no objects will use the \dpvalue{} at $j$ since $A_i$ is strictly better than $A_j$.
Therefore, we ignore such $A_j$ in the inner trees, and refer to such a \veb{} tree maintaining the staircase of a dataset as a \defn{\monoveb{}} tree.
In a \monoveb{} tree, with increasing key ($y_i$), the score (\dpvalue{s}) must also be increasing.
\iffullversion{We show an illustration of the \emph{staircase} in \cref{fig:staircase} in the appendix.}
\ifconference{An illustration of the \emph{staircase} is presented in the full version of the paper.}

%\yihan{fill in the figure label.}
%With \monoveb{} trees as the inner trees, a \rangeveb{} tree can answer \dommax queries (for constant dimensions) and three-sided queries efficiently.

%In the inner tree, we will denote each point as a pair $\langle y_i, \mathdp_i \rangle$, i.e., its $y$-coordinate (the index $i$), and its DP value $\mathdp[i]$.
%Recall that to support the \dommax{} search, we need the inner tree $T_v$ to support a \emph{prefix-max} query on a coordinate $q$, which
%computes the maximum $\mathdp$ value among all points with $y_i<q$ in $S_v$.
%Unlike using a standard BST as the inner trees as in range trees, which costs $O(\log n)$ per prefix-max search, the inner trees $T_v$ of \rangeveb are \monoveb{} trees that maintains the \emph{staircase} of $S_v$ based on the pair $\langle y_i, \mathdp_i \rangle$.
%Unlike using a standard BST as the inner trees as in range trees, the inner trees of \rangeveb are \monoveb{} trees that maintains the \emph{staircase} of $S_v$ based on the pair $\langle y_i, \mathdp_i \rangle$.
%This is because if $y_i<y_j$ and $\mathdp_i\ge \mathdp_j$, then $\langle y_j, \mathdp_j \rangle$ will not be the answer for any prefix-max query.

%We use a \monoveb{} tree to maintain the staircase for each set $S_v$ keyed on the $y$-coordinates, which are the indexes of the input objects (integers from 1 to $n$).
%Since the \dpvalue{s} in an inner tree $T_v$ are strictly increasing, we named such a \veb{} tree a \textbf{\monoveb{}} tree.
Due to monotonicity, the maximum \dpvalue{} in a \monoveb{} tree for all points with $y_i<q_y$ is exactly the score (\dpvalue{}) of $q_y$'s predecessor.
Combining this idea with the \dommax{} query in range trees, we have the \dommax{} function in \cref{algo:rangeveb}.
We will first search the range $(-\infty,q_x)$ in the outer tree for the $x$-coordinates and find all in-range subtrees and connecting nodes.
For each connecting node, we check if their $y$ coordinates are in the queried range, and if so, take their \dpvalue{s} into consideration.
%For each in-range inner tree, we call $\vebpred{}$ query on the \monoveb{} tree and the score (\dpvalue{}) of this predecessor is the best \ziyang{maybe more explanation about best} \dpvalue{} from this inner tree.
%\ziyang{maybe more explanation about best} \yihan{reworded. Can you double check?} \ziyang{no problem}
For each in-range inner tree $t_i$, we call $\vebpred{}$ query on the \monoveb{} tree and obtain the score (\dpvalue{}) $\sigma_i$ of this predecessor (\cref{line:sigma}).
As mentioned, the value of $\sigma_i$ is the highest score from this inner tree among all points with an index smaller than $q_y$.
Finally, we take a max of all such results (all $\sigma_i$ and those from connecting nodes),
and the maximum among them is the result of the \dommax{} query (\cref{line:domreturn}).
As the $\vebpred$ function has cost $O(\log\log n)$, a single \dommax{} query costs $O(\log n \log \log n)$ on a \rangeveb{} tree.

\begin{algorithm}[t]
\fontsize{8pt}{9pt}\selectfont
\caption{\small The parallel RangeStruct using \rangeveb{} trees\label{algo:rangeveb}}
\SetKwFor{parForEach}{parallel-foreach}{do}{endfor}
%\KwIn{A sequence $A_{1..n}$. Object $A_i$ has weight $w_i$}
%\KwOut{The DP values $\dparray[1..n]$ for each object $A_i$.}
    %\vspace{0.5em}
\SetKwProg{MyStruct}{Struct}{}{end}
%\SetKwProg{PreMax2D}{RangeTree$\langle$Point$\rangle$}{ contains}{end}
\SetKwProg{MyFunc}{Function}{}{end}
\newcommand{\Int}{\KwSty{int}}
\newcommand{\A}{\KwSty{A}}
\SetKw{MIN}{min}
\SetKw{MAX}{max}
\DontPrintSemicolon
Structures \textit{Point} and \textit{RangeStruct} are defined in \cref{algo:lis-weighted}\\
%\zheqi{Add parallel\_for qualifier}\\
%\zheqi{Need to re-label the indices}\\
%\textit{RangeStruct} $\trange$\\

\MyFunc{\upshape\textsc{\Dommax$(q_x,q_y)$}}{
  In the \rangeveb, find the range of $(-\infty, q_x)$, and let $S_{\mathit{node}}$ be the set of connecting nodes and $S_{\mathit{tree}}$ be the set of in-range inner (\monoveb{}) trees \label{line:in-range}\\
  %$t_0\gets \max_{\in S_{\mathit{node}}, }$
  \tcp{For each in-range inner tree, find the max score up to coordinate $q_y$}
  \parForEach{$t_i\in S_{\mathit{tree}}$}{
    $\langle \cdot, \cdot, \sigma_i \rangle \gets \vebpred(t_i,q_y)$\tcp*[f]{$\sigma_i$ is the score of $q_y$'s predecessor}\label{line:sigma}
  }
  \tcp{For connecting nodes, check if the y-coordinates are smaller than $q$ and get the maximum score for such points}
  \ForEach {$\langle x,y,\mathdp \rangle \in S_{\mathit{node}}$ s.t. $y<q_y$}{
    $\sigma' = \MAX(\sigma', \mathdp)$\\
  }
  \Return{$\MAX(\sigma', \MAX_i \{\sigma_i\})$}\label{line:domreturn}
}
\smallskip

\MyFunc(\tcp*[f]{$B$ is a list of points $\{\langle x_i,y_i, \mathdp_i \rangle\}$}){\upshape\textsc{Update}$(B)$} {
  Update $\langle x_i,y_i, \mathdp_i \rangle$ in the outer \rangeveb tree\\
  %Search all $x_i$ in the outer tree to find all relevant inner trees for each point in $B$\\
  %Transpose the result above, s.t.
  For each relevant inner (\monoveb{}) tree $t_i$, gather a list of points $L_i\subseteq B$ to be added to $t_i$\\
  Points in $L_i$ are sorted by the y-coordinates\\
  Let $S_{\mathit{tree}}$ be the set of inner trees $t_i$ that need new insertions\\
  %\tcp{Refine $L_j$ such that it only contains elements that should appear on the staircase}
  \tcp{Refine $L_i$: Remove $L_i[j]$ if any other point \dominates{} it}
  %Refine each list $L_j$, such that the DP values are strictly increasing\\
  For each list $L_i$, remove $L_i[j]$ if\\
  ~~ $\bullet~~~~\exists~l<j$, s.t. $L_i[j].\mathdp < L_i[l].\mathdp$, or\\
  ~~ $\bullet~~~~\pi.\mathdp{}\ge L_i[j].\mathdp{}$, where $\pi = \vebpred(t_i, L_i[j].y)$ is $L_i[j]$'s predecessor in the corresponding \monoveb{} tree $t_i$ \\
  \parForEach{$t_i\in S_{\mathit{tree}}$} {
%    %$t_i.$\textsc{MultiInsert}$(L_i)$
    \tcp{Find elements in $t_i$ that are \dominated{} by points in $L_i$}
    $R \gets t_i.$\textsc{\domby}$(L_i)$\\
    $t_i.$\textsc{BatchDelete}$(R)$\tcp*[f]{Delete points in $R$}\\
    $t_i$.\textsc{BatchInsert}$(L_i)$\tcp*[f]{Insert points in $L_i$}\\
  }
}
\end{algorithm} 

Querying \dommax using a staircase is a known (sequential) algorithmic trick.
However, the challenge is how to update (in parallel) the newly computed \dpvalue{s} in each round (the \textsc{Update} function) to a \rangeveb{} tree.
%We will first solve the relatively easier (but still challenging) part on how to implement \updatefunc{} shown in \cref{algo:rangeveb} (maintain the staircase) while assuming a parallel vEB tree.
We first show how to implement \updatefunc{} in \cref{algo:rangeveb} while assuming a parallel \veb{} tree.
We later explain how to parallelize a \veb{} tree in \cref{sec:veb}.
%Despite \veb{} tree being widely used in the sequential setting, we are unaware of any existing parallel algorithm for the \veb{} tree.
%In addition, \monoveb{} tree also has the special ``staircase'' property that we have to maintain on updates.

\myparagraph{Step 1. Collecting insertions for inner trees.}
%For each point $p=\langle x,y,\mathdp \rangle$, we first search $x$ in the outer tree,
%and $p$ should be added to all inner trees on the search path, such that we can obtain a list $L_i$ of points to be inserted for each inner tree $t_i$.
Each point $p\in B$ may need to be added to $O(\log n)$ inner trees,
so we first obtain a list $L_i$ of points to be inserted for each inner tree $t_i$.
This can be done by first marking all points in $B$ in the outer tree $\trange$,
and (in parallel) merging them bottom-up, so that each relevant inner tree collects the relevant points in $B$.
When merging the lists, we keep them sorted by the y-coordinates, the same as the inner trees.
%We will keep each list $L_i$ sorted by their y-coordinates.

\myparagraph{Step 2. Refining the lists.} Because of the ``staircase'' property, we have to first refine each list $L_i$ to remove points that are not on the staircase.
A point in $L_i[j]$ should be removed if it is \dominated{} by its previous point $L_i[j-1]$,
or if any point in the \monoveb{} tree $t_i$ \dominates{} it.
The latter case can be verified by finding the predecessor $\pi$ of $L_i[j].y$,
and check if $\pi$ has a larger or equal \dpvalue{} than $L_i[j]$.
If so, $L_i[j]$ is \dominated{} by $\pi\in t_i$, so we ignore $L_i[j]$.
After this step, all points in $L[i]$ need to appear on the staircase in $t_i$.

\myparagraph{Step 3. Updating the inner trees.} Finally, for all involved subtrees, we will update the list $L_i$ to $t_i$ in parallel.
Note that some points in $L_i$ may \dominate{} (and thus replace) some existing points in $t_i$.
We will first use a function \domby{} to find all points (denoted as set $R$) in $t_i$ that are \dominated{} by any point in $L_i$.
\iffullversion{An illustration of \domby{} function is presented in \cref{fig:coveredby} in the appendix.}
\ifconference{An illustration of \domby{} function is presented in the full version of this paper.}
We will then use \veb{} batch-deletion to remove all points in $R$ from $t_i$.
Finally, we call \veb{} batch-insertion to insert all points in $L_i$ to $t_i$.

In \cref{sec:veb}, we present the algorithms \domby{}, \batchdelete{} and \batchinsert{} needed by \cref{algo:lis-weighted}, and prove \cref{thm:veb}.
Assuming \cref{thm:veb}, we give the proof of \cref{thm:mainweighted}.
\begin{proof}[Proof of \cref{thm:mainweighted}]
	We first analyze the work. We first show that the \textsc{\Dommax} algorithm in \cref{algo:rangeveb} takes $O(\log n\log \log n)$ work.
	In \cref{algo:rangeveb}, \cref{line:in-range} finds $O(\log n)$ connecting nodes and in-range inner trees, which takes $O(\log n)$ work.
	Then for all $O(\log n)$ in-range inner trees, we perform a \vebpred{} query in parallel, which costs $O(\log \log n)$.
	In total, this gives $O(\log n\log \log n)$ work for \textsc{\Dommax}.
	This means that the total work
	to compute the \dpvalue{s} in \cref{line:dominant} in the entire \cref{algo:lis-weighted} is $O(n\log n\log \log n)$.
	
	We now analyze the total cost of \textsc{Update}.
	In one invocation of \textsc{Update}, we first find all keys for each inner tree $t_i$ that appears in $B$.
	Using the bottom-up merge-based algorithm mentioned in \cref{sec:rangeveb}, each merge costs linear work.
	Similarly, refining a list $L_i$ costs linear work.
	%Note that $B$ is initially ordered by the indexes, which is the second dimension of the \rangeveb{} tree.
	%We can start from the root of the \rangeveb{} tree with batch $B$, and use a stable partition to split $B$ by the key at the root to get the batches for the left and right subtrees, respectively.
	%We then recursively deal with the children.
	%At each level of the \rangeveb{} tree, this costs $O(|B|)$ work and $O(\log |B|)$ span.
	%Considering all $O(\log n)$ levels in \rangeveb{} tree, we have $O(|B|\log n)$ work and $O(\log^2 n)$ span.
	Since each key in $B$ appears in $O(\log n)$ inner tree, the total work to find and refine all $L_i$ is $O(|B|\log n)$ for each batch, and is $O(n\log n)$ for the entire algorithm.
	
	For each subtree, the cost of running \domby{} is asymptotically bounded by \batchdelete{}.
	For \batchdelete{} and \batchinsert{}, note that the bounds in \cref{thm:veb} show that the amortized work to insert or delete a key is $O(\log \log n)$.
	In each inner tree, a key can be inserted at most once and deleted at most once, which gives $O(n\log n\log \log n)$ total work in the entire algorithm.
	
	Finally, the span of each round is $O(\log^2 n)$. In each round, we need to perform the three steps in \cref{sec:rangeveb}.
    The first step requires to find the list of relevant subtrees for each element in the insertion batch $B$.
    For each element $b\in B$, this is performed by first searching $b$ in the outer tree, and then merging them bottom-up,
    so that each node in the outer tree will collect all elements in $B$ that belong to its subtree. There are $O(\log n)$
    levels in the outer tree, and each merge requires $O(\log n)$ span, so this first step requires $O(\log^2 n)$ span.

    Step 2 will process all relevant lists in parallel (at most $n$ of them).
    For each list, it calls $\vebpred$ for each element in each list, and a filter algorithm at the end. The total span is bounded by $O(\log^2 n)$.

    Step 3 requires calling batch insertion and deletion to update all relevant inner trees, and all inner trees can be processed in parallel.
    Based on the analysis above, the span for each batch insertion and deletion is $O(\log n \log \log n)$, which is also bounded by $O(\log^2 n)$.

    Thus, the entire algorithm has span $O(\lislength\log^2 n)$.
    \iffullversion{Finally, as mentioned, we present the details of achieving the stated space bound in Appendix \ref{app:space} by relabeling all points
    in each inner tree.}
	\ifconference{Finally, as mentioned, we present the details of achieving the stated space bound in the full version of this paper by relabeling all points in each inner tree.}
\end{proof}

\myparagraph{Making \rangeveb{} Tree Space-efficient.}
A straightforward implementation of \rangeveb{} tree may require $O(n^2)$ space, as a plain \veb{} tree requires $O(U)$ space.
There are many ways to make \veb{} trees space-efficient ($O(n)$ space when storing $n$ keys);
\iffullversion{we discuss how they can be integrated in \rangeveb{} tree to guarantee $O(n\log n)$ total space in \cref{app:space}.}
\ifconference{we discuss how they can be integrated in \rangeveb{} tree to guarantee $O(n\log n)$ total space in the full version of this paper.}

\hide{
	The high-level idea is the same as \cref{algo:lis}.
	We also build a range tree (see \cref{sec:prelim}) as in \swgs{}, which views each object in the input as a point $(i,A_i)$.
	In round $r$, we also use the tournament tree $T$ to find all \prefixmin{} objects.
	Once such an object $T[i]$ is identified (Line~\ref{line:weighted:begin}--\ref{line:weighted:remove}),
	we compute its DP value by using the \dommax{} query on the range tree $\trange$.
	After that, we mark this object $T[i]$ as ``updated'' in the range tree (along all relevant paths) and remove it from the tournament tree.
	After traversing and processing the frontier, we traverse the range tree and follow the ``updated'' marks to update the newly-computed DP values for all objects in the current frontier.
	It is also easy to output a specific increasing subsequence with the highest weight.
	The best decision at any object $i$ can be returned by the range tree along with the maximum DP value in its lower-left corner.
	Then, using the best decisions of each object, we can iteratively output the optimal subsequence.

	To achieve better work bound for the WLIS problem, we propose a new data structure called \rangeveb{} tree to support the \dommax{} query more efficiently.
	Our idea is to replace the inner trees in range trees to be \veb{} trees, but only maintain \emph{staircase}~\alert{\cite{}} of the elements based on their y-coordinates and DP values.
	In \cref{algo:rangeveb}, we present the pseudocode of \rangeveb{} for the functions \textsc{RangeMax} and \textsc{Update} needed in \cref{algo:lis-weighted}.
	We start with defining some concepts. For two pairs $p_1=\langle a_1,b_1\rangle$ and $p_2=\langle a_2,b_2\rangle$, we say $p_1$ \defn{\dominates} $p_2$
	if $a_1<a_2$ and $b_1\geq b_2$. For a sequence of pairs $S$, the \defn{staircase} of $S$ is the maximal subset $S'\subseteq S$ such that for any $p\in S'$, $p$ is not \dominated{} by any points in $S$. %\zheqi{while $S$ should be the maximum?}
	In this paper, we propose parallel batch insertion and deletion algorithms for \emph{general} \veb{} trees and use them to support \updatefunc{}
	on \monoveb{} trees.
	We first describe how to implement \updatefunc{} (shown in \cref{algo:rangeveb}) to update the \dpvalue{s} of a list of points $B=\{\langle x_i,y_i,\mathdp_i\rangle\}$
	using batch insertions and deletions of \veb{} trees.
	
	In this paper, we propose parallel batch insertion and deletion algorithms for \emph{general} \veb{} trees and use them to support \updatefunc{}
	on \monoveb{} trees.
	We first describe how to implement \updatefunc{} (shown in \cref{algo:rangeveb}) to update the \dpvalue{s} of a list of points $B=\{\langle x_i,y_i,\mathdp_i\rangle\}$
	using batch insertions and deletions of \veb{} trees.

}

\section{Parallel van Emde Boas Trees}\label{sec:veb}

\hide{
\begin{table}
  %\centering
  \small
  \hide{\centering
  \begin{tabular}{ll}
    \multicolumn{2}{l}{\textbf{Longest Increasing Subsequence (LIS) or Weighted LIS:}}\\
    \hline
    % after \\: \hline or \cline{col1-col2} \cline{col3-col4} ...
    $A_{1..n}$ & The input sequence \\
    $k$ & The LIS length of $A$ \\
    $\rank(x)$ & The LIS length ending at $x \in A$ \\
    $\mathdp[i]$ & The \dpvalue{} of $A[i]$. It is the LIS length ending at $A[i]$ for LIS, \\
    & It is the largest weighted sum of increasing subsequences ending at \\
    &$A[i]$ for WLIS.\\
    %\multicolumn{2}{l}{Weighted Longest Increasing Subsequence (LIS):}\\
    %$\mathdp[i]$ & The \dpvalue{} of $A[i]$: the largest weighted sum of an increasing subsequence ending at $A[i]$ \\
    $\twin$ & The winning tree (used in \cref{algo:lis})\\
    $\trange$ & The range search structure (used in \cref{algo:lis-weighted})\\
    \hline
    \multicolumn{2}{l}{\textbf{\vebfull{} Tree:}}\\
    $\tveb$ & A \veb{} (sub-)tree\\
    $\tveb.\mmin$ & The min value in \veb{} tree $\tveb$ ($\tveb.\mmax$ is defined similarly)\\
    $\vebpred(\tveb, x)$ & Find the predecessor of $x$ in \veb{} tree $\tveb$\\
    $\vebsucc(\tveb, x)$ & Find the successor of $x$ in \veb{} tree $\tveb$\\

  \end{tabular}
  }
\flushleft \textbf{\underline{LIS and Weighted LIS Algorithms:}}
%,leftmargin=.4in,itemindent=0.1in,
  \begin{description}[topsep=0in, labelwidth=.4in,leftmargin=.5in,itemindent=0in]
    \item[$\boldsymbol{A_{1..n}}$]: Input sequence
    \item [$\boldsymbol{k}$]:  The LIS length of $A$
    \item[$\boldsymbol{\rank(x)}$]: The LIS length ending at $x \in A$
    \item[$\boldsymbol{\mathit{dp}}\boldsymbol{{[i]}}$]: The \dpvalue{} of $A_i$. It is the LIS length ending at $A_i$ for LIS. It is the largest weighted sum of an increasing subsequence ending at $A_i$ for WLIS.
    \item[$\boldsymbol{\twin}$]: The winning tree (used in \cref{algo:lis})
    \item[$\boldsymbol{\trange}$]: The range search structure (used in \cref{algo:lis-weighted})
    \item[$\boldsymbol{B}$]: A batch of elements (points)
  \end{description}
\flushleft \textbf{\underline{\vebfull{} Tree Algorithms:}}
  \begin{description}[topsep=0in, labelwidth=.65in, leftmargin=.75in,itemindent=0in]
    \item[$\boldsymbol{x}$]: A $w$-bit integer $x$ from universe $\univ$, where $\univ=[2^w]$ and $|\univ|=U$.
    \item[$\boldsymbol{\high(x)}$]: high-bit of $x$, equals to $\lfloor x/2^{\lceil w/2\rceil}\rfloor$
    \item[$\boldsymbol{\low(x)}$]: low-bit of $x$, equals to $(x\mod 2^{\lceil w/2\rceil})$
    \item[$\boldsymbol{\idx(h,l)}$]: The integer by concatenating high-bit $h$ and low-bit $l$, equals to $(h\cdot 2^{\lceil w/2\rceil}+l)$
    \item[$\boldsymbol{\tveb}$]: A \veb{} (sub-)tree
    \item [$\boldsymbol{\tveb.\mmin}$ ($\boldsymbol{\tveb.\mmax}$)]:  The min (max) value in \veb{} tree $\tveb$
    \item[$\boldsymbol{\vebpred(\tveb, x)}$]: Find the predecessor of $x$ in \veb{} tree $\tveb$
    \item[$\boldsymbol{\vebsucc(\tveb, x)}$]: Find the successor of $x$ in \veb{} tree $\tveb$
    \item[$\boldsymbol{\tveb.\summary}$]: The set of high-bits in \veb{} tree $\tveb$
    \item[$\boldsymbol{\tveb.\cluster[h]}$]: The subtree of $\tveb$ with high-bit $h$
    \item[$(^*)\boldsymbol{\mathcal{P}_{\tveb{},B}(x)}$]: The survival predecessor of $x\in B$ in \veb{} tree $\tveb$ (used in \cref{vebdelete}). $\mathcal{P}(x)=\max\{ y: y\in \tveb \setminus B, y<x\}$.
    \item[$(^*)\boldsymbol{\mathcal{S}_{\tveb{},B}(x)}$]: The survival successor of $x\in B$ in \veb{} tree $\tveb$ (used in \cref{vebdelete}). $\mathcal{S}(x)=\min\{ y: y\in \tveb \setminus B, y>x\}$.
  \end{description}
  \caption{Notation. $(^*)$: We drop the subscript with clear context. }\label{tab:notation}
\end{table}
}

\begin{table}
  %\centering
  \small
  %\hline
  \begin{description}[topsep=0in, labelwidth=.6in, leftmargin=.7in,itemindent=0in,font=\normalfont]
    \item[$\aboldsymbol{x}$]: A $w$-bit integer $x$ from universe $\univ$, where $\univ=[0,2^w)$
    \item[$\aboldsymbol{\high(x)}$]: high-bit of $x$, equals to $\lfloor x/2^{\lceil w/2\rceil}\rfloor$
    \item[$\aboldsymbol{\low(x)}$]: low-bit of $x$, equals to $(x\mod 2^{\lceil w/2\rceil})$
    \item[$\aboldsymbol{\idx(h,l)}$]: The integer by concatenating high-bit $h$ and low-bit $l$%, equals to $(h\cdot 2^{\lceil w/2\rceil}+l)$
    \item[$\aboldsymbol{\tveb}$]: A \veb{} (sub-)tree / the set of keys in this \veb{} tree
    \item[$\aboldsymbol{\tveb.\mmin}$ ($\aboldsymbol{\tveb.\mmax}$)]:  The min (max) value in \veb{} tree $\tveb$
    \item[$\aboldsymbol{\vebpred(\tveb, x)}$]: Find the predecessor of $x$ in \veb{} tree $\tveb$
    \item[$\aboldsymbol{\vebsucc(\tveb, x)}$]: Find the successor of $x$ in \veb{} tree $\tveb$
    \item[$\aboldsymbol{\tveb.\summary}$]: The set of high-bits in \veb{} tree $\tveb$
    \item[$\aboldsymbol{\tveb.\cluster[h]}$]: The subtree of $\tveb$ with high-bit $h$
    \item[$(^*)\aboldsymbol{\mathcal{P}_{\tveb{},B}(x)}$]: The survival predecessor of $x\in B$ in \veb{} tree $\tveb$ (used in \cref{vebdelete}). $\mathcal{P}(x)=\max\{ y: y\in \tveb \setminus B, y<x\}$.
    \item[$(^*)\aboldsymbol{\mathcal{S}_{\tveb{},B}(x)}$]: The survival successor of $x\in B$ in \veb{} tree $\tveb$ (used in \cref{vebdelete}). $\mathcal{S}(x)=\min\{ y: y\in \tveb \setminus B, y>x\}$.
  \end{description}
  \caption{\small \textbf{Notation for \veb trees.} $(^*)$: We drop the subscript with clear context. }\label{tab:notation}
  \vspace{-.5em}
\end{table} 

\begin{figure*}[t]
	\vspace{-1.5em}
	\begin{minipage}[c]{0.28\textwidth}
		\includegraphics[width=0.9\textwidth]{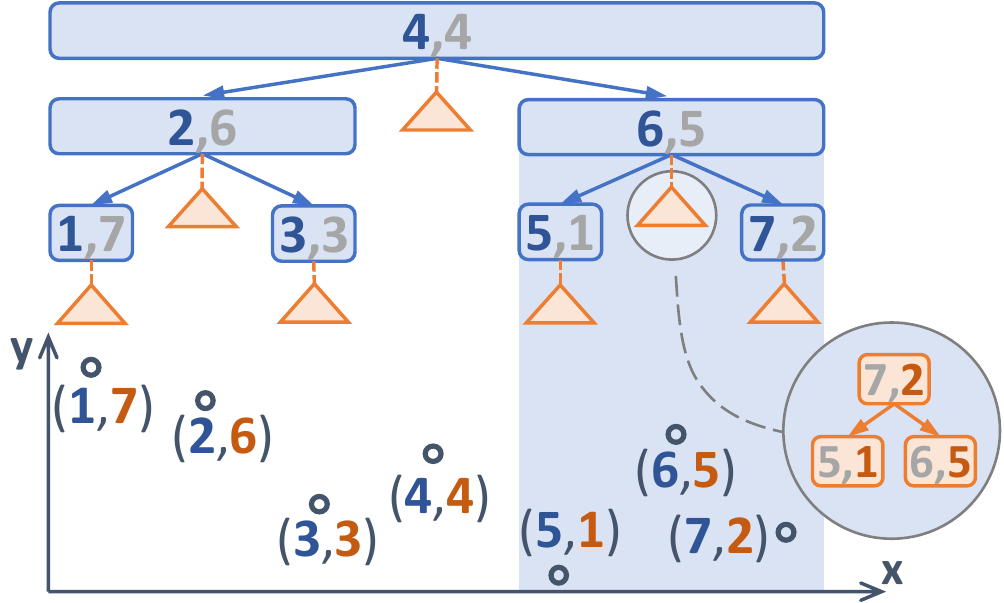}
		\caption{\small \textbf{An illustration of a 2D range tree}.  Outer tree is indexed by $x$ (blue) and inner trees are indexed by $y$ (red).
			\label{fig:range}}
	\end{minipage}\hfill
	\begin{minipage}[c]{0.7\textwidth}
		\includegraphics[width=\textwidth]{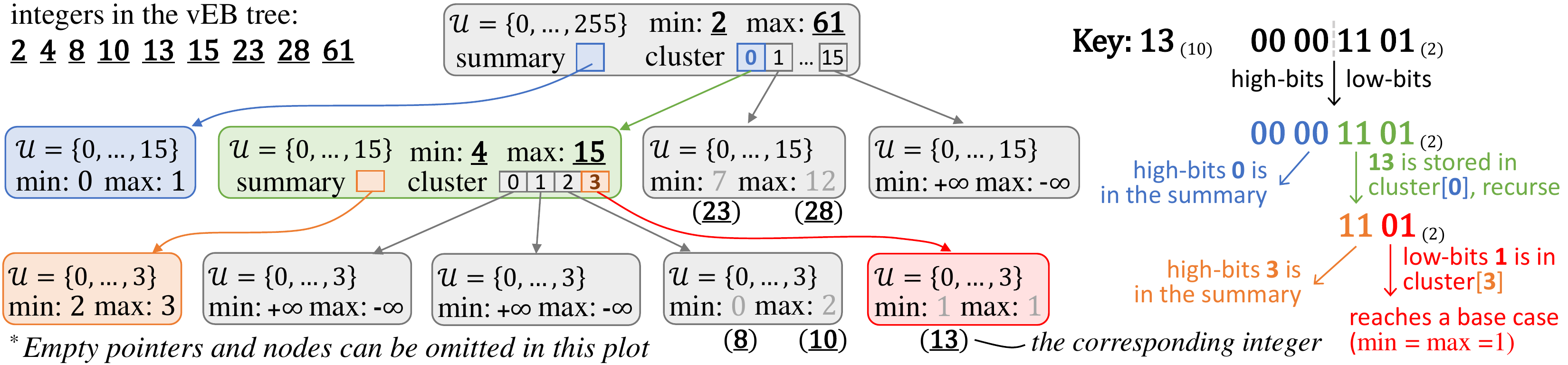}
		\caption{\small \textbf{An example \veb tree with $\boldsymbol{U=256}$ and a demonstration on how 13 is stored}.
			{The \veb tree contains the set of keys $\{2,4,8,10,13,15,23,28,61\}$.\label{fig:vEBdemo}}}
	\end{minipage}
	\vspace{-1.8em}
\end{figure*}

The van Emde Boas (vEB) tree~\cite{van1977preserving} is a famous data structure that implements the ADTs of priority queues and ordered sets and maps for integer keys and is introduced in textbooks (e.g.,~\cite{CLRS}).
For integer keys in range 0 to $U$, single-point updates and queries cost $O(\log \log U)$, better than the $O(\log n)$ cost for BSTs or binary heaps.
We review \veb{} trees in \cref{sec:veb-review}.

However, unlike BSTs or binary heaps that have many parallel versions ~\cite{Blelloch1998,blelloch2016just,blelloch2022joinable,sun2018pam,blelloch2020optimal,wang2020parallel,dong2021efficient,lim2019optimal,williams2021engineering,akhremtsev2016fast,zhou2019practical}, we are unaware of any parallel vEB trees.
%Parallelizing \veb trees are challenging for both updates and queries.
Even the sequential \veb tree is complicated (compared to most BSTs and heaps), and the invariants are maintained sophisticatedly to guarantee doubly-logarithmic cost.
Such complication adds to the difficulty of parallelizing updates (insertions and deletions) on \veb trees.
Meanwhile, for queries, we note that \veb trees do not directly support range-related queries---when using \veb trees for ordered sets and maps, many applications heavily rely on repeatedly calling successors and/or predecessors, which is inherently sequential.
Hence, we need to carefully redesign the \veb tree to achieve parallelism.
%In the rest of this section, we will first review the sequential \veb tree, introduce our new interface for the parallel \veb tree, new algorithms, and eventually some discussions.
In this section, we first review the sequential \veb{} tree and then present our parallel \veb{} tree to support the functions needed in \cref{algo:rangeveb}.

\subsection{Review of the Sequential \veb{} Tree}\label{sec:veb-review}

A van Emde Boas (\veb) tree~\cite{van1977preserving} is a search tree structure with keys from a universe $\univ$, which are integers from 0 to $U-1$.
%For simplicity, we usually $b$-bit integers (i.e., $U=2^b$).
We usually assume the keys are $w$-bit integers (i.e., $U=2^w$).
%A classic \veb tree supports seven operations in $O(\log \log U)$ work, including two single-point updates: \insertop and \deleteop, and five queries: \memberop{}, \minop, \maxop, \predop, and \succop. %\ziyang{member(x)?}
A classic \veb tree supports insertion, deletion, lookup, reporting the min/max key in the tree,
reporting the predecessor (\vebpred{}) and successor (\vebsucc{}) of a key, all in $O(\log \log U)$ work.
Other queries can be implemented using these functions.
For instance, reporting all keys in a range can be implemented by repeatedly calling \vebsucc{} in $O(m\log \log U)$ work,
where $m$ is the output size. %, although this is inherently sequential.

%A \veb tree is a static tree structure with $U$ leaf nodes; each corresponds to an integer in the sorted order (similar to the tournament tree as shown in \cref{fig:tree}).
%A \veb tree can have up to $U$ leaf nodes, each corresponding to an integer, and the existence of such a leaf indicates that the integer is in the set.

A \veb{} tree stores a key using its binary bits as its index.
We use $\vt$ to denote a \veb{} tree, as well as the set of keys in this \veb{} tree.
We present the notation for \veb{} trees in \cref{tab:notation}, and show an illustration of \veb{} trees in \cref{fig:vEBdemo}.
We use 13 as an example to show how a key is decomposed and stored in the tree nodes.
%We use 13 as an example to show how a key is decomposed into $\log \log U$ levels and eventually stored as a $\mmin$ and/or $\mmax$ in a tree node.
%\cref{fig:vEBdemo} shows an example of a \veb{} tree with keys in a universe of 8-bit integers.
%The \veb tree makes two significant changes that improve the query/update performance.
%In the structure of the \veb{} tree, $\sqrt{U}$ links (assuming $b$ is even) are shortcuts from the root to the middle levels, and such shortcuts are recursively applied to both the top part of the tree and the $\sqrt{U}$ lower-part trees.
A \veb{} tree is a quadruple $(\summary,\cluster[\cdot],\min,\max)$.
%A \veb tree directly stores the minimum and maximum keys (${\vt.\mmin}$ and ${\vt.\mmax}$) in it.
$\vt.\mmin$ and $\vt.\mmax$ store the minimum and maximum keys in the tree.
When $\vt$ is empty, we set $\vt.\mmin=+\infty$ and $\vt.\mmax=-\infty$.
%Such augmentation supports quick \predop and \succop queries, which are further used in supporting range queries on \veb trees.
For the rest of the keys (other than $\mmin/\mmax$),
their \defn{high-bits} (the first $\lceil w/2 \rceil$ bits) are maintained recursively in a \veb{} tree, noted as ${\vt.\summary}$.
In \cref{fig:vEBdemo}, the high-bits are the first 4 bits, and there are two different unique high-bits (0 and 1). They
are maintained recursively in a \veb (sub-)tree ${\vt.\summary}$ (the blue box).
For each unique high-bit, the relevant \defn{low-bits} (the last $\lfloor w/2 \rfloor$ bits) are also organized as a \veb{} (sub-)tree recursively.
In particular, the low-bits that belong to high-bit $h$ are stored in a \veb{} tree ${\vt.\cluster[h]}$.
In \cref{fig:vEBdemo}, five keys in $\vt$ have high-bit 0 (4, 8, 10, 13, and 15). %\zheqi{shall we include 2 (min)?}
They are maintained in a \veb (sub-)tree as $\vt.\cluster[0]$ (the green box and everything below).
%Such shortcutting allows the traversal from the root to reach every tree node in $O(\log \log U)$ hops, facilitating \insertop and \deleteop operations.
Each subtree ($\summary$ and all $\cluster[\cdot]$) has universe size $O(\sqrt{U})$ (about $w/2$ bits).
This guarantees traversal from the root to every leaf in $O(\log \log U)$ hops.
Note that the $\mmin/\mmax$ values of a \veb{} tree are not stored again in the summary or clusters.
For example, in \cref{fig:vEBdemo}, at the root, $\vt.\mmin=2$, and thus 2 is not stored again in $\vt.\cluster[0]$.
Such design is crucial to guarantee doubly logarithmic work for \insertop, \deleteop, \vebpred, and \vebsucc{}.

Note that although we use ``low/high-bits'' in the descriptions, algorithms on \veb{} trees can use simple RAM operations to extract the corresponding bits,
without any bit manipulation, as long as the universe size for each subtree is known.
Due to the page limit, we refer the audience to the textbook~\cite{CLRS} for more details about sequential \veb{} tree algorithms.

\hide{
	A \veb{} tree stores a key using its binary bits as its index.
	\cref{fig:vEBdemo} shows an example of a \veb{} tree with keys in a universe of 8-bit integers.
	%The \veb tree makes two major changes that improve the query/update performance.
	%In the structure of the \veb{} tree, $\sqrt{U}$ links (assuming $b$ is even) are shortcuts from the root to the middle levels, and such shortcuts are recursively applied to both the top part of the tree and the $\sqrt{U}$ lower-part trees.
	A \veb{} tree is a quadruple $(\summary,\cluster[\cdot],\min,\max)$, explained in more details below.
	We use $\vt$ to denote a \veb{} tree (the quadruple), as well as the set of keys in this \veb{} tree.
	A \veb{} $\vt$ tree organizes the \defn{high-bits} (the first $\lceil w/2 \rceil$ bits) of the keys recursively in a \veb{} tree, noted as ${\vt.\summary}$.
	In the \veb{} tree in \cref{fig:vEBdemo}, the high-bits are the first 4 bits, and there are two different unique high-bits (0 and 1). They
	are maintained recursively in a \veb (sub-)tree ${\vt.\summary}$ (the blue box).
	%This \veb{} tree has universe size $O(\sqrt{U})$.
	For each unique high-bit, the relevant \defn{low-bits} (the last $\lfloor w/2 \rfloor$ bits) are also organized as a \veb{} (sub-)tree recursively.
	In particular, the low-bits that belong to high-bit $h$ will be stored in a \veb{} tree ${\vt.\cluster[h]}$.
	For example, in \cref{fig:vEBdemo}, six keys in $\vt$ have high-bit 0 (2, 4, 8, 10, 13, and 15). \zheqi{shall we include 2 (min)?}
	They are maintained in a \veb (sub-)tree as $\vt.\cluster[0]$ (the green box and everything below).
	%Such shortcutting allows the traversal from the root to reach every tree node in $O(\log \log U)$ hops, facilitating \insertop and \deleteop operations.
	Each subtree ($\summary$ and all $\cluster[\cdot]$) has universe size $O(\sqrt{U})$ (about $w/2$ bits).
	This guarantees traversal from the root to every leaf in $O(\log \log U)$ hops.
	Notably, a \veb tree directly stores the minimum and maximum keys (${\vt.\mmin}$ and ${\vt.\mmax}$) in it, and these two keys are not considered in the summary or clusters, specially, an empty \veb{} tree would assign $\vt.\mmin$ to $+\infty$ and $\vt.\mmax$ to $-\infty$.
	%Such augmentation supports quick \predop and \succop queries, which are further used in supporting range queries on \veb trees.
	Such augmentation is crucial to guarantee doubly logarithmic work for \insertop, \deleteop, \vebpred, and \vebsucc{}.
	When $\vt$ is empty, we set $\vt.\mmin=+\infty$ and $\vt.\mmax=-\infty$.
	We present the notation we use for \veb{} tree in \cref{tab:notation}, and show an illustration of \veb{} trees in \cref{fig:vEBdemo}.
	We use 13 as an example to show how a key is decomposed into $\log \log U$ levels and eventually stored as a $\mmin$ and/or $\mmax$ in a tree node.
	%\yihan{add the explanation for some notations, explain also using the figure.}
	%The details of the \veb trees are complicated.
	Due to the page limit, we refer the audience to the textbook~\cite{CLRS} for more details about sequential \veb{} tree algorithms.
}

%We cannot directly use these algorithms since they are on single points.
%The design philosophy is to generalize them to a batch of elements or a range of keys while maintaining the invariants of the \veb trees.

%\subsection{The Interface of the Parallel \veb{} Tree}\label{sec:veb-interface}
%The goal of parallel \veb{} tree is to support batch operations on a static \veb{} tree. To \textit{insert} batch of elements into a \veb{} tree. To \textit{delete} batch of elements from a \veb{} tree. To \textit{query} elements in a \veb{} tree that are dominated by the batch of elements. The interfaces are formally defined as follows:
%\begin{itemize}
%	\item \textbf{BatchInsert}$(\{x_1,x_2,\cdots,x_k\})$ takes an array of sorted elements and insert them into \veb{} tree $\vt$. The input elements contain no duplicates and must not have yet been presented in $\vt$.
%	\item \textbf{BatchDelete}$(\{x_1,x_2,\cdots,x_k\})$ takes an array of sorted elements and delete them from \veb{} tree $\vt$. The input elements are unique and must already be contained in $\vt$.
%	\item \textbf{BatchRangeQuery}$(\{x_1,x_2,\cdots,x_k\})$ takes an element array and summarize elements in \veb{} tree $\vt$ that are dominated by $x_i,i\in k$ in ascending order.
%\end{itemize}

\subsection{Our New Results}\label{sec:veb-results}
We summarize our results on parallel \veb{} tree in \cref{thm:veb}.
Both batch insertion/deletion and range reporting are work-efficient---the work is the same as performing them on a sequential \veb{} tree.
In \cref{algo:lis-weighted}, the key range $U=n$.
Using the \rangequery{} query, we can implement \domby{} in \cref{algo:rangeveb} in $O(m'\log\log n)$ work and polylogarithmic span, where $m'$ is the number
of objects returned.
%Plugging in the results to \cref{algo:rangeveb}, we prove \cref{thm:mainweighted} in \cref{sec:mainweightedproof}.% in \cref{app:mainweightedproof}.

%Next, we explain how to implement \domby{}, \batchinsert{} and \batchdelete{} needed in \cref{algo:rangeveb}.
%The \domby{} operation is implemented by \rangequery{}.
Similar to the sequential \veb tree, batch-insertion is relatively straightforward among the parallel operations.
We present the algorithm and analysis in \cref{sec:batch-insert}.
Batch-deletion is more challenging, as once ${\vt.\mmin}$ or ${\vt.\mmax}$ is deleted, we need to replace it with a proper key $k'$ stored in the subtree of $\vt$.
However, when finding the replacement $k'$, we need to avoid the values in the deletion batch $B$, and take extra care to handle the case when $k'$ is the $\mmin/\mmax$ of a cluster.
%which will again be the $\mmin$ or $\mmax$ for a subtree node (needs further replacing) or can be deleted.
We propose a novel technique: \defn{Survivor Mapping} (see \cref{def:survivormapping}) to resolve this challenge.
The batch-deletion algorithm is illustrated in \cref{sec:batch-delete},\ifconference{and analysis in the full version of this paper.}\iffullversion{and analysis in \cref{app:deleteproof}.}For range queries, we need to avoid the iterative solution (repeatedly calling \vebsucc{}) since it is inherently sequential.
Our high-level idea is to divide-and-conquer in parallel, but
%However, since \veb trees do not support augmentations for subtree information, a straightforward solution will not be work efficient.
uses delicate amortization techniques to bound the extra work. %that charge the additional work to the operations in the sequential execution.
Due to the page limit, we summarize the high-level idea of \rangequery{} and \domby{} in \cref{sec:rangequery} and provide the details in \cref{app:range,app:dominatedby}.
%Finally, we show how to make our \rangeveb tree only using $O(n\log n)$ space in \cref{app:space}.

%\subsection{Batch Insertion}
\vspace{-.2em}
\subsubsection{Batch Insertion.}\label{sec:batch-insert} We show our batch-insertion algorithm in \cref{vebinsert}, which inserts a sorted batch $B\subseteq \univ$ into $\vt$ in parallel.
Here we assume the keys in $B$ are not in $\vt$; otherwise, we can simply look up the keys in $\vt$ and filter out those in $\vt$ already.
%\batchinsert{}$(B)$ inserts a batch $B\subseteq \univ$ into a vEB tree $\vt$ properly. %, which is trivial if one inserts entire batch elements sequentially.
%Inserting $|B|$ keys sequentially takes $O(|B|\log\log n)$ work.
To achieve parallelism, we need to appropriately handle the high-bits and low-bits, both in parallel, as well as taking extra care to maintain the $\mmin/\mmax$ values.

%Parallelism can be achieved by appropriately categorizing the high-bits and low-bits while taking extra care to maintain the minimum and maximum stored in the current tree.
%\myparagraph{Algorithm description}.
%We now present \cref{vebinsert}, which inserts batch $B$ into $\vt$ in parallel using ideas presented above.

We first set the $\mmin/\mmax$ values at $\vt$ (\cref{backupMinMax}--\ref{updatedInsert}).
%If $\vt$ is empty (\cref{begin-minmax}), we set the $\vt.\mmin$ and $\vt.\mmax$ to batch's minimum and maximum and exclude these keys from $B$.
If $B.\mmin<\vt.\mmin$, we update $\vt.\mmin$ by swapping it with $B.\mmin$ (\cref{nonEmptyMin}); similarly we update $\vt.\mmax$ (\cref{nonEmptyMax}).
Since we need the batch $B$ sorted when adding $\vt.\mmin$ and/or $\vt.\mmax$ back to $B$ (\cref{updatedInsert}), we need to insert them to the correct position, causing $O(m)$ work.
%To insert the remaining elements in $B$ when we reach \cref{beginInsert},
If $B$ is not empty, we will insert the keys in $B$ to $\vt.\summary$ and $\vt.\cluster$. %by handling the high- and low-bits, respectively.
%We first pack all unique high-bit patterns $B$ to an array $H$, and find the new high-bits (not yet in $\vt.\summary$) as $H'$ in \cref{extractNewhigh}.
We first find the new high-bits (not yet in $\vt.\summary$) from keys in $B$, and denote them as %$H'$ (\cref{extractNewhigh}).
$H$ (\cref{extractNewhigh})
This step can be done by a parallel filter.
%For each new high-bit $h'\in H'$, we select the smallest key with high-bit $h'$ and put them in an array $B'$ (\cref{line:bprime}).
For each new high-bit $h\in H$, we select the smallest key with high-bit $h$ and put them in an array $B'$ (\cref{line:bprime}).
%We will insert them separately to initialize the new subtrees in $\cluster$.
%In particular, we first insert the new high-bits $H'$ to $\vt.\summary{}$ (\cref{ins-high}).
%Then we process all $x\in B'$ in parallel and use each one's low-bit to initialize the $\mmin/\mmax$ values of the new cluster
%$\vt.\cluster[\high(x)]$ (\cref{emptyInsertForLoop}--\ref{line:emptyMinMax2}).
%When reaching \cref{line:getH}, all the keys in $B'$ have been inserted, and all relevant $\cluster[h]$ for $h\in H$ are non-empty.
%Therefore we remove $B'$ from $B$.
The new subtrees in $\vt.\cluster[h]$ are initialized by inserting the smallest low-bit $\{low(x)| x\in B', \high(x)=h \}$ in parallel (\cref{emptyInsertForLoop}--\ref{line:emptyMinMax2}), after which all subtrees $\vt.\cluster[h],h\in H$ are certified to be non-empty. The remaining new low-bits $L[h]$ are gathered by the corresponding high-bits $h\in H$ (\cref{unify}). Finally, new high-bits $H$ and each new low-bits $L[h],h\in H$ are inserted into the tree in parallel recursively (\cref{line:parallelInsert}--\cref{ins-low}).

%For each high-bit $h\in H$, we then encapsulate all relevant low-bits into an array $L[h]$ (\cref{unify}).
%We then insert the low-bits for the remaining keys in $B\setminus B'$.
%We will find all unique high-bits from $B\setminus B'$ as $H$. For each $h\in H$, we gather all relevant low-bits into an array $L[h]$ (\cref{unify}).
%We then insert each $L[h]$ into $\vt.\cluster[h]$ (\cref{ins-low}) in parallel.

%We then handle each $L[h]$ in parallel to insert them into $\vt.\cluster[h]$ (\cref{ins-low}).
%Then we deal with all $h\in H$ in parallel to insert the low-bits (\cref{spawn-high}).
%For every high-bits pattern $h\in H$, the low-bits in corresponding cell $L[h]$ are inserted in parallel, as \cref{spawn-high} illustrates.

\hide{
	If $\vt$ is empty, as \cref{begin-minmax} tests, we set the $\vt.\mmin$ and $\vt.\mmax$ to batch's minimum and maximum and exclude these elements from batch; otherwise, \cref{vnotempty} reveals that $\vt$ is non-empty and $\vt.\mmin$ needs to be updated if $B.\mmin$ is smaller than $\vt.\mmin$, we then exchange these two elements, so as when $B.\mmax$ greater than $\vt.\mmax$.
	Note that in order to keep batch elements sorted, adding substituted min/max in \cref{updatedInsert} requires additional $O(m)$ work.
	To insert the remaining elements in $B$ when we reach \cref{beginInsert}, we first pack all elements' high-bits in $H$ and use a filter function to sieve those unique high-bits patterns by testing whether an element has the same high-bits as its left neighbor within $O(m)$ work and $O(\log m)$ span, after which we encapsulate all low-bits for elements with high-bits $h\in H$ within the array cell $L[h]$, as illustrated in \cref{unify}.
	\cref{ins-high} then inserts those high-bits $H'\subseteq H$ that are not contained in current tree summary recursively.
	For every high-bits pattern $h\in H$, the low-bits in corresponding cell $L[h]$ are inserted in parallel, as \cref{spawn-high} illustrates.
	Specially, if $\vt.\cluster[h]$ is empty, as \cref{test-alone} tests, we insert the minimum low-bits in $L[h]$ into that cluster separately using constant work in \cref{ins-alone} and remove it from the low-bits cell $L[h]$ in \cref{remove-min}. When we reach \cref{non-empty-low}, every cluster is guaranteed to be non-empty, \cref{ins-low} then inserts the remaining low-bits in each cell $L[h]$ recursively.
}

\setlength{\algomargin}{0em}
\begin{algorithm}[t]
	\fontsize{8pt}{8.5pt}\selectfont
	\caption{Batch Insertion Algorithm for vEB tree\label{vebinsert}}
	\SetKwFor{parForEach}{parallel-foreach}{do}{endfor}
	\KwIn{Batch of elements $B$ in sorted order, vEB tree $\vt$. $B\cap \vt=\emptyset$}
	\KwOut{A veb Tree $\vt$ with all keys $x\in B$ inserted}
	\SetKw{MIN}{min}
	\SetKw{MAX}{max}
	\SetKwProg{MyFunc}{Function}{}{end}
	 \SetKwInOut{Note}{Note}
	 \SetKwFor{inParallel}{in parallel:}{}{}
%	 \vspace{0.5em}
    \DontPrintSemicolon
	 %\Note{Assume all elements $x\in B$ is not already an element in the set represented by vEB tree $\vt$}
    %\Note{Assume $B\cap \vt=\emptyset$}
    \vspace{.2em}
	\MyFunc{\upshape{\textsc{BatchInsert$(\vt,B)$}}}{
		%\tcp{Maintaining minimum and maximum of current subtree}
		%\eIf{$\vt$ is empty}{\label{begin-minmax}
%			$\vt.\mmin\leftarrow B.\mmin, \vt.\mmax\leftarrow B.\mmax$\label{emptyMinMax}\\
%			$B\leftarrow B\setminus(\{B.\mmin\}\cup \{B.\mmax\})$\\
%		}
        %\label{vnotempty}
		$S\leftarrow \{\vt.\mmin\} \cup \{\vt.\mmax\}$\tcp*[f]{Backup min and max}\label{backupMinMax}\\
		$\vt.\mmin\leftarrow \MIN\{\vt.\mmin,B.\mmin\}$\label{nonEmptyMin}\\
		$\vt.\mmax\leftarrow \MAX\{\vt.\mmax,B.\mmax\}$\label{nonEmptyMax}\\
		$B\leftarrow B\cup S\setminus \{\vt.\mmin\} \setminus \{\vt.\mmax\}$\label{updatedInsert}\\
		%\label{end-minmax}
		
		\If(\tcp*[f]{Deal with high-bits and low-bits of $B$}){$B\neq \emptyset$}{
			%$H\gets\{\high(x)|\forall x\in B\}$\label{beginInsert}\\
            %$H'\gets \{\high(x)|x\in B, \high(x)\notin \vt.\summary\}$\tcp*[f]{New high-bits}\label{extractNewhigh}\\
            \tcp{$H$ are the new high-bits}
            $H\gets \{\high(x)\,|\,x\in B, \vt.\cluster[\high(x)]\text{ is empty}\}$\label{extractNewhigh}\\
            \tcp{For each new high-bit $h$, find the smallest key in $B$ to form $B'$}
            $B'\gets \{x_{h} \,|\, \forall h \in  H, \text{where }x_{h}=\min_{y\in B,\high(y)=h} y \}$\label{line:bprime}\\
            \parForEach(\tcp*[f]{Initialize each new high-bit}){$x\in B'$\label{emptyInsertForLoop}}{
            	$\vt.\cluster[\high(x)].\mmin\gets \low(x)$\label{line:emptyMinMax1}\\
            	$\vt.\cluster[\high(x)].\mmax\gets \low(x)$\label{line:emptyMinMax2}
            }
        	%\tcp{exclude keys in $B'$ since they have already been inserted}
        	%$B\gets B\setminus B'$\label{line:exclude}\\   
%	        $H\gets\{\high(x)\,|\,\forall x\in B\setminus B'\}$\tcp*[f]{exclude keys in $B'$}\label{line:getH}\\
			\tcp{Find remaining new low-bits}
	        $L[h]\gets\{\low(x)\,|\,\forall x\in B\setminus B', \high(x)=h\in H\}$\label{unify}\\	
            \inParallel{\label{line:parallelInsert}}{
            \batchinsert{}$(\vt.\summary,H)$\label{ins-high}\tcp*[f]{Insert $H$ to summary}\\
            \parForEach(\tcp*[f]{Insert to each cluster}){$h\in H$\label{spawn-high}}{
            	\batchinsert{}$(\vt.\cluster[h],L[h])$ \label{ins-low}}
            }

		}
	}
\end{algorithm}

\hide{
\begin{algorithm}[t]
	\fontsize{8pt}{8.5pt}\selectfont
	\caption{Batch Insertion Algorithm for vEB tree\label{vebinsert}}
	\SetKwFor{parForEach}{parallel-foreach}{do}{endfor}
	\KwIn{Batch of elements $B$ in sorted order, vEB tree $\vt$. $B\cap \vt=\emptyset$}
	\KwOut{A veb Tree $\vt$ with all keys $x\in B$ inserted}
	\SetKw{MIN}{min}
	\SetKw{MAX}{max}
	\SetKwProg{MyFunc}{Function}{}{end}
	\SetKwInOut{Note}{Note}
	\DontPrintSemicolon
	%\Note{Assume all elements $x\in B$ is not already an element in the set represented by vEB tree $\vt$}
	%\Note{Assume $B\cap \vt=\emptyset$}
	\vspace{.2em}
	\MyFunc{\upshape{\textsc{BatchInsert$(\vt,B)$}}}{
		%\tcp{Maintaining minimum and maximum of current subtree}
		%\eIf{$\vt$ is empty}{\label{begin-minmax}
			%			$\vt.\mmin\leftarrow B.\mmin, \vt.\mmax\leftarrow B.\mmax$\label{emptyMinMax}\\
			%			$B\leftarrow B\setminus(\{B.\mmin\}\cup \{B.\mmax\})$\\
			%		}
		%\label{vnotempty}
		$S\leftarrow \{\vt.\mmin\} \cup \{\vt.\mmax\}$\tcp*[f]{Backup min and max}\label{backupMinMax}\\
		$\vt.\mmin\leftarrow \MIN\{\vt.\mmin,B.\mmin\}$\label{nonEmptyMin}\\
		$\vt.\mmax\leftarrow \MAX\{\vt.\mmax,B.\mmax\}$\label{nonEmptyMax}\\
		$B\leftarrow B\cup S\setminus \{\vt.\mmin\} \setminus \{\vt.\mmax\}$\label{updatedInsert}\\
		%\label{end-minmax}
		
		\If(\tcp*[f]{Deal with high-bits and low-bits of $B$}){$B\neq \emptyset$}{
			%$H\gets\{\high(x)|\forall x\in B\}$\label{beginInsert}\\
			%$H'\gets \{\high(x)|x\in B, \high(x)\notin \vt.\summary\}$\tcp*[f]{New high-bits}\label{extractNewhigh}\\
			\tcp{$H'$ are the new high-bits}
			$H'\gets \{\high(x)\,|\,x\in B, \vt.\cluster[\high(x)]\text{ is empty}\}$\label{extractNewhigh}\\
			\tcp{For each new high-bit $h'$, find the smallest key in $B$ to form $B'$}
			$B'\gets \{x_{h'} \,|\, \forall h' \in  H', \text{where }x_{h'}=\min_{y\in B,\high(y)=h'} y \}$\label{line:bprime}\\
			\batchinsert{}$(\vt.\summary,H')$\label{ins-high}\tcp*[f]{Insert $H'$ to summary}\\
			\parForEach(\tcp*[f]{Initialize each new high-bit}){$x\in B'$\label{emptyInsertForLoop}}{
				$\vt.\cluster[\high(x)].\mmin\gets \low(x)$\label{line:emptyMinMax1}\\
				$\vt.\cluster[\high(x)].\mmax\gets \low(x)$\label{line:emptyMinMax2}
			}
			%\tcp{exclude keys in $B'$ since they have already been inserted}
			%$B\gets B\setminus B'$\label{line:exclude}\\
			$H\gets\{\high(x)\,|\,\forall x\in B\setminus B'\}$\tcp*[f]{exclude keys in $B'$}\label{line:getH}\\
			$L[h]\gets\{\low(x)\,|\,\forall x\in B\setminus B', \high(x)=h\in H\}$\label{unify}\\
			\parForEach(\tcp*[f]{Insert to each cluster}){$h\in H$\label{spawn-high}}{
				\batchinsert{}$(\vt.\cluster[h],L[h])$ \label{ins-low}}
		}
	}
\end{algorithm}
}

\hide{
\begin{algorithm}[h]
	\fontsize{8pt}{9pt}\selectfont
	\caption{Batch Insertion Algorithm for vEB tree\label{vebinsert}}
	\SetKwFor{parForEach}{parallel\_for\_each}{do}{endfor}
	\KwIn{Batch of elements $B$ in sorted order, vEB tree $\vt$}
	\KwOut{A veb Tree $\vt$ with all elements $x\in B$ inserted}
	\SetKw{MIN}{min}
	\SetKw{MAX}{max}
	\SetKwProg{MyFunc}{Function}{}{end}
	 \SetKwInOut{Note}{Note}
	 \Note{Assume all elements $x\in B$ is not already an element in the set represented by vEB tree $\vt$}
	
	\vspace{0.5em}
	\MyFunc{\upshape{\textsc{BatchInsert$(\vt,B)$}}}{
		\tcp{Maintaining minimum and maximum of current subtree}
		\eIf{$\vt$ is empty}{\label{begin-minmax}
			$\vt.\mmin\leftarrow B.\mmin, \vt.\mmax\leftarrow B.\mmax$\label{emptyMinMax}\\
			$B\leftarrow B\setminus(\{B.\mmin\}\cup \{B.\mmax\})$\\
		}{\label{vnotempty}
			$S\leftarrow \{\vt.\mmin\} \cup \{\vt.\mmax\}$\tcp*[f]{Backup min and max}\\
			$\vt.\mmin\leftarrow \MIN\{\vt.\mmin,B.\mmin\}$\label{nonEmptyMin}\\
			$\vt.\mmax\leftarrow \MAX\{\vt.\mmax,B.\mmax\}$\label{nonEmptyMax}\\
			$B\leftarrow B\cup S\setminus (\{\vt.\mmin\} \cup \{\vt.\mmax\})$\label{updatedInsert}\\
		}\label{end-minmax}
		\tcp{Parallel insert high-bits and low-bits of candidate elements}
		\If{$B\neq \emptyset$}{
			$H\leftarrow\{\high(x)|\forall x\in B\}$\label{beginInsert}\\
			$L[h]\leftarrow\{\low(x)|high(x)=h\in H,\forall x\in B\}$\label{unify}\\
			$H'\leftarrow \{h\in H|h\notin \vt.\summary\}$\tcp*[f]{Insert new high-bits}\label{extractNewhigh}\\
			\textsc{BatchInsert($\vt.\summary,H'$)}\label{ins-high}\\
			\parForEach{$h\in H$}{\label{spawn-high}
				\If(\tcp*[f]{insert one low-bits into a vacant cluster }){$h\in H'$\label{test-alone}}{
				\textsc{BatchInsert($\vt.\cluster[h],\MIN\{L[h]\}$)}\\\label{ins-alone}
				$L[h]\gets L[h]\setminus \MIN \{L[h]\}$	\label{remove-min}
			}
				\If(\tcp*[f]{skip empty low-bits cell}){$L[h]\neq\emptyset$\label{non-empty-low}}{\textsc{BatchInsert($\vt.\cluster[h],L[h]$)}}
				 \label{ins-low}
			}
		}
	}
\end{algorithm}
} 

%\myparagraph{Analysis.}
The correctness of the algorithm can be shown by checking that all $\mmin$/$\mmax$ values for each node are set up correctly. Next, we analyze the cost bounds of \cref{vebinsert} in \cref{insertTheorem}.
%\cref{vebinsert} well maintains the minimum and maximum of the current tree $\vt$ by \cref{line:emptyMinMax1,line:emptyMinMax2,nonEmptyMin,nonEmptyMax}, since the set of keys stored in $\vt$ after batch insertion is the union of current elements and batch elements, so as the $\min/\max$ values. \cref{extractNewhigh} guarantees the inserted high-bits patterns are fresh and unique. Finally, low-bits array $L[h]$ contains all low-bits in batch with high-bits $h$ by construction. The correctness of the algorithm then follows by the recursion.

%We start the analysis for work and span of \cref{vebinsert} by stating following lemma \ref{insertLemma}.

\begin{theorem}\label{insertTheorem}
	Inserting a batch of sorted keys into a \veb{} tree can be finished in $O(m\log\log U)$ work and %$O(\log U\log \log U)$ span, where $m$ is batch size and $U=|\univ|$ is the universe size.
	$O(\log U)$ span, where $m$ is batch size and $U=|\univ|$ is the universe size.
	%\yihan{The span seems to be $O(\log m\log U)$, but I'm not sure.}
\end{theorem}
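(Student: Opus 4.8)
The plan is to bound work and span separately, exploiting the recursive structure of \cref{vebinsert}: a call on a tree with universe size $U$ spawns recursive calls on $\vt.\summary$ and on the clusters $\vt.\cluster[h]$, all of which have universe size $O(\sqrt U)$. Hence the recursion has depth $O(\log\log U)$ (the standard \veb{} depth), and each input key is routed down one root-to-leaf path of recursive calls. The two quantities to control are (i) the total batch size summed over all recursive calls at a fixed recursion depth, which drives the work, and (ii) the span accumulated along a single path of recursive calls.

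For the work, I would first argue that the per-call work is linear in the local batch size $b_v$. The only nontrivial local steps are: restoring the sorted order after swapping in $B.\mmin/B.\mmax$ and pushing out the displaced $\vt.\mmin/\vt.\mmax$ (merging a constant number of elements into a sorted array, $O(b_v)$); extracting the new high-bits $H$ and the representatives $B'$ (a parallel filter / segmented minimum, $O(b_v)$); and gathering the low-bit lists $L[h]$ (a partition of the sorted batch, $O(b_v)$). Since the incoming batch is sorted by value and the high-bits are its leading bits, the batch is already grouped by high-bit, so each $L[h]$ and the high-bit list emerge sorted with no re-sorting. The crux is a conservation claim: on a nonempty subtree, each call forwards exactly $b_v$ keys to the next level. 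Concretely, after the min/max handling the batch size is $b_v + |S| - 2 = b_v$ (two old extremes are swapped back in, the two new extremes are removed); of these, $d'_v$ new distinct high-bits are forwarded to $\vt.\summary$, each new cluster consumes one representative from $B'$ directly as its $\mmin/\mmax$, and the remaining $b_v - d'_v$ low-bits are forwarded to clusters, for a total of $d'_v + (b_v - d'_v) = b_v$ forwarded keys. Summing over all calls at a fixed depth then gives total batch size exactly $m$ at every level, so the work is $\sum_{\ell} O(m) = O(m\log\log U)$ over the $O(\log\log U)$ levels.

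For the span, the summary insertion and all cluster insertions run inside the \emph{in parallel} block, so a call on universe $U$ satisfies $S(U) = O(\log b_v) + S(\sqrt U)$, where the $O(\log b_v)$ term bounds the span of the local filters, segmented minimum, merges, and the binary-forking over the clusters. Bounding $\log b_v$ by the logarithm of the universe size at that node, which halves in bit-length at each level, the local terms along any path form a geometric series $\sum_\ell O((\log U)/2^\ell) = O(\log U)$, which solves the recurrence to $S(U) = O(\log U)$.

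I expect the main obstacle to be making the conservation argument precise. A naive count would allow each call to re-insert the two backed-up extreme keys and thus potentially grow the batch per level, threatening an $O(m\log U)$ blow-up from a cascade of displaced extremes down a path; the key insight to formalize is that displacing $\vt.\mmin/\vt.\mmax$ is exactly offset by removing the two new extremes from the forwarded batch, so the per-level total never exceeds $m$. I would also dispatch the empty-subtree base cases separately (but easily), where $|S| < 2$ and a cluster is initialized directly through its $\mmin/\mmax$ fields rather than by a recursive call, and confirm that the $\mmin/\mmax$ invariants are maintained at every node to establish correctness.
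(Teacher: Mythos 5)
Your proposal is correct and follows essentially the same route as the paper: the paper sets up the recurrences $W(u,m)=\sum_i W(\sqrt{u},m_i)+O(m)$ with $\sum_i m_i=m$ (your ``conservation claim'') and $S(u)=S(\sqrt{u})+O(\log u)$ using $m\le u$, and solves them by induction and a geometric series, exactly matching your level-by-level work summation and span telescoping. Your explicit accounting that the $|H|$ summary keys plus the $b_v-|H|$ cluster keys exactly replace the consumed representatives $B'$ is the same observation the paper states as ``each key in $B$ falls into at most one of the recursions.''
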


\begin{proof}
	
	%Consider the work and span of \batchinsert{} on a batch of size $m$ and \veb{} tree with universe size $u$.
	%Let $W(u,m)$ and $S(u,m)$ be the work and span of this function, respectively.
	Let $W(u,m)$ and $S(u,m)$ be the work and span of \batchinsert{} on a batch of size $m$ and \veb{} tree with universe size $u$.
	%Let $w(u,m)$ be the work of \batchinsert{}$(B)$, where $m=|B|$, and $u$ is the universe size of the \veb{} tree.
	In each invocation of \batchinsert{}, we need to restore the $\mmin/\mmax$ values,
	find the high-bits in $H$, initialize the clusters for the new high-bits, and gather the low-bits for each cluster.
	
	%  The correctness of the algorithm can be shown by checking that all $\mmin$/$\mmax$ values for each node are set up correctly (the maximum or minimum value in the subtree) in a recursive manner.
	%  Consider the work and span of a \batchinsert{} on a batch of size~$m$ and a \veb{} tree with universe size $u$.
	%  Let $W(u,m)$ and $S(u,m)$ be the work and span of \batchinsert, respectively.
	%  %Let $w(u,m)$ be the work of \batchinsert{}$(B)$, where $m=|B|$, and $u$ is the universe size of the \veb{} tree.
	%  In each invocation, we need to restore the $\mmin/\mmax$ values,
	%  find the high-bits in $H'$ and $H$, initialize the clusters for the new high-bits, and gather the low-bits for each high-bit.
	
	All these operations cost $O(m)$ work and $O(\log m)=O(\log u)$ span.
	Then the algorithm makes at most $\sqrt{u}+1$ recursive calls, each dealing with a universe size $\sqrt{u}$.
	Hence, we have the following recurrence for work and span:
	\vspace{-.5em}
	\begin{align}
		\textstyle  W(u,m)&=\textstyle\sum_{i=0}^{\sqrt{u}}W(\sqrt{u},m_i)+O(m)\label{eqn:vebwork}\\
%		S(u,\cdot)&=2S(\sqrt{u},\cdot)+O(\log u)\label{eqn:vebspan}
		S(u,\cdot)&=S(\sqrt{u},\cdot)+O(\log u)\label{eqn:vebspan}
	\end{align}
	Note that each key in $B$ falls into at most one of the recursions, and thus $\sum_{i=0}^{\sqrt{u}}m_i=m$. 
    Since $m$ is the number of distinct values to be inserted into the subtree with universe size $u$, we also
    have $m\le u$ in all recursive calls. 
    By solving the recursions above, we can get the claimed bound in the theorem.\ifconference{We also provide the proof to solve them in the full version of this paper.}\iffullversion{We solve them in \cref{app:solverecurrence}.}Note that we assume an even total bits for $u$.
	If not, the number of subproblems and their size become $\sqrt{u/2}+1$ and $\sqrt{2u}$, respectively.
	One can check that the bounds still hold, the same as the sequential analysis.
\end{proof}

\hide{
	\begin{lemma}\label{insertLemma}
		Every batch element makes at most one recursive insertion on a non-empty substructure using \cref{vebinsert}.
	\end{lemma}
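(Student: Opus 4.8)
The plan is to prove the lemma by a charging argument local to a single invocation of \batchinsert$(\vt,B)$ (\cref{vebinsert}): I would route each key of $B$ to at most one recursive call among the summary call \batchinsert$(\vt.\summary,H)$ (\cref{ins-high}) and the per-cluster calls \batchinsert$(\vt.\cluster[h],L[h])$ (\cref{ins-low}), and show that the constant-work operations which are \emph{not} recursive insertions---storing $\vt.\mmin/\vt.\mmax$ at the node (Lines~\ref{backupMinMax}--\ref{updatedInsert}) and initializing the min/max of a freshly created cluster (Lines~\ref{emptyInsertForLoop}--\ref{line:emptyMinMax2})---absorb the apparent ``doubling'' of a key into a high-bit copy and a low-bit copy. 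This is exactly the identity $\sum_i m_i=m$ invoked in the proof of \cref{insertTheorem}.

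First I would classify each key $x\in B$ (after the min/max swap of Lines~\ref{nonEmptyMin}--\ref{updatedInsert}) by its high-bit $h=\high(x)$ and by whether $\vt.\cluster[h]$ was empty on entry. If $h$ is a new high-bit and $x$ is the smallest key with that high-bit, i.e.\ $x\in B'$ (\cref{line:bprime}), then I charge $x$ to the summary call: its high-bit is the distinct representative of $h$ in the set $H$ (\cref{extractNewhigh}), while its low-bit is consumed by the $O(1)$ initialization of $\vt.\cluster[h]$ (Lines~\ref{line:emptyMinMax1}--\ref{line:emptyMinMax2}) and is deliberately omitted from $L[h]$, which ranges over $B\setminus B'$ (\cref{unify}); such a key therefore triggers no cluster recursion. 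Every remaining key I charge to the single cluster call for its high-bit: its low-bit lies in exactly one $L[h]$, and its high-bit is not separately inserted into the summary, either because $H$ is a set already represented by the matching element of $B'$, or because, for an already non-empty cluster, $h$ is already present in $\vt.\summary$ and contributes nothing to \cref{ins-high}.

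With the routing fixed I would verify the two accounting facts that make the lemma useful. The charges are disjoint and exhaustive: the keys of $B'$ go only to the summary and are in bijection with the elements of $H$ (one per new high-bit), while the disjoint union of the cluster batches $L[h]$ is exactly $B\setminus B'$; no key is charged twice. Summing the batch sizes of the recursive calls then yields $|H|+\sum_h|L[h]|=|B'|+|B\setminus B'|=|B|=m$, i.e.\ $\sum_i m_i=m$, so each key makes at most one recursive insertion. I would further note that this single charged call is always into a \emph{non-empty} substructure: an already-populated cluster, a freshly created cluster that Lines~\ref{line:emptyMinMax1}--\ref{line:emptyMinMax2} made non-empty before \cref{ins-low} executes, or the summary---whereas the $O(1)$ node-local min/max steps are not recursive insertions and are never charged.

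The hard part will be the bookkeeping around deduplication and the min/max swap. I must argue that a new high-bit shared by many keys of $B$ contributes a \emph{single} key to the summary workload---this is precisely why the summary batch is the set $H$ paired with the representatives $B'$ rather than the raw multiset of high-bits---and that the swap in Lines~\ref{nonEmptyMin}--\ref{updatedInsert}, which may re-insert a displaced old $\mmin/\mmax$ into $B$ and remove the new extremes, perturbs the batch by only $O(1)$ keys while leaving every re-inserted key subject to the same one-charge routing. Establishing that these node-local operations are genuinely not recursive insertions is the step that converts the naive ``each key splits into a high part and a low part'' intuition into the clean per-invocation bound $\sum_i m_i=m$.
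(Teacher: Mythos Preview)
Your proposal is correct and takes essentially the same approach as the paper: both arguments partition the batch (after the $\mmin/\mmax$ swap) into (i) the representative $x\in B'$ per new high-bit, charged to the summary recursion with its low-bit absorbed by the $O(1)$ cluster initialization, and (ii) every other key, charged to the unique cluster recursion for its high-bit, with its high-bit either already in $\vt.\summary$ or represented by the matching element of $B'$. Your writeup is more explicit than the paper's terse case split---you spell out the bijection $|H|=|B'|$, the identity $\sum_i m_i=m$, and the non-emptiness of every charged substructure---but the underlying charging scheme is identical.
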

	\begin{proof}
		We first recognize that the insertion of $B.\mmin$ should take no recursion if it is smaller than $\vt.\mmin$ and similar observation holds for $B.\mmax$ as well, as indicated in \cref{emptyMinMax}, \cref{nonEmptyMin} and \cref{nonEmptyMax}.
		
		The batch elements when reaching \cref{beginInsert} can be divided into three categories:
		\begin{enumerate}
			\item $\{index(h,l)~|~ h \in H\setminus H', l\in L_h\}$, the elements whose high-bits is already presented in the summary of the current tree. Only their low-bits will incur a recursive insertion, as in \cref{ins-low}.
			\item $\{index(h,)~|~ h\in H', l=\min\{L_h\}\}$, the smallest elements whose high-bits are not contained in $\vt.\summary$. There is one recursive insertion for their high-bits on \cref{ins-high}, and the low-bits insertion on \cref{ins-alone} costs constant work.
			\item $\{index(h,l)~|~ h\in H', l\neq \min\{L_h\}\}$, the other elements whose high-bits is not contained in $\vt.\summary$. Only the low-bits needs a recursive insertion since their high-bits has already been inserted in case (2).
		\end{enumerate}
		Elements above form a partition of the batch in \cref{beginInsert}; therefore we finish the proof.
	\end{proof}
	
	\cref{insertLemma} certifies the insertion routine of each element passes at most $(\log\log U)$ nodes, since every operation in \cref{beginInsert}, \cref{unify}, \cref{extractNewhigh} and \cref{spawn-high} cost $O(m)$ work as discussed, meanwhile visiting a vEB tree node needs  constant work, hence we could derive the following recurrence \ref{vebWorkRecur} to bound the work of \cref{vebinsert}:
	\begin{align}
		w(u,m)=\begin{cases}
			C\cdot m&u=2\\
			\sum_{i=1}^{\sqrt{u}}w(\sqrt{u},m_i)+C\cdot m &u>2
		\end{cases}\label{vebWorkRecur}
	\end{align}
	where $\sum_{i=1}^{\sqrt{u}}m_i=m\leq u$, and $u\leq U$ is the universe size represented by the current (sub)-tree, initially $u=U$. The constant $C$ is non-negative.
	
	\begin{lemma}
		Polynomial ($C\cdot m\log\log U$) is a feasible solution to recurrence \ref{vebwork}.
	\end{lemma}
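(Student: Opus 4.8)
The plan is to analyze \cref{vebinsert} through work and span recurrences on the universe size $u$ (initially $u=U$), paralleling the sequential \veb{} analysis. For correctness, I would first verify that every key in $B$ ends up stored exactly once: the min/max handling in \cref{backupMinMax}--\ref{updatedInsert} swaps the batch extremes with $\vt.\mmin/\vt.\mmax$ and reinserts the displaced old extremes into $B$ (keeping $B$ sorted, which costs $O(m)$); every other key then contributes its low-bit to the cluster of its high-bit, the new high-bits $H$ are inserted once into $\vt.\summary$, and the smallest key of each new high-bit initializes that cluster's $\mmin/\mmax$ in $O(1)$ without recursing. A short invariant check confirms all $\mmin/\mmax$ fields are set consistently.

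The key accounting step for the work bound is that the recursive subproblems at the next level have total batch size at most $m$. The $|H|$ high-bits fed into $\vt.\summary$ are exactly offset by the $|H|$ keys removed from the low-bit batches when initializing the new clusters (the set $B'$), so writing $m_i$ for the subproblem sizes we have $\sum_i m_i \le m$, with at most $\sqrt{u}+1$ subproblems (one summary plus up to $\sqrt{u}$ clusters). Since the local, non-recursive work --- filtering for new high-bits, grouping the low-bits, and the sorted reinsertion --- is $O(m)$, this gives
\[ W(u,m) = \sum_i W(\sqrt{u},m_i) + O(m), \qquad \sum_i m_i \le m. \]
I would solve it by induction with the hypothesis $W(u,m)\le Cm\log\log u$. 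Using $\log\log\sqrt{u} = \log\log u - 1$ and $\sum_i m_i\le m$, the inductive substitution yields $W(u,m)\le Cm\log\log u - Cm + cm$, which is at most $Cm\log\log u$ whenever $C\ge c$. Intuitively, each of the $m$ keys traverses an $O(\log\log u)$-length root-to-leaf path, doing $O(1)$ amortized work per node.

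For the span, every recursive call --- the one into $\vt.\summary$ and each cluster insertion --- is launched in parallel (\cref{line:parallelInsert}--\ref{ins-low}), and the local parallel primitives each run in $O(\log m)$ span. Crucially, a batch inserted into a subtree of universe size $u$ contains at most $u$ distinct keys, so $m\le u$ and hence $O(\log m)=O(\log u)$, giving
\[ S(u) = S(\sqrt{u}) + O(\log u). \]
Unrolling across the $O(\log\log u)$ recursion levels, where the universe at level $j$ is $u^{1/2^j}$ and contributes $O((\log u)/2^j)$ span, sums to the geometric series $O(\log u)\sum_{j\ge 0}2^{-j}=O(\log u)$, so $S(U)=O(\log U)$.

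I expect the main subtlety to be the batch-size accounting $\sum_i m_i\le m$: one must ensure the high-bit insertions into the summary are not double-counted, which is precisely why the smallest key of each new high-bit is stripped into $B'$ and inserted in $O(1)$. The second point needing care is the odd-bit case: when the bit-width is odd the summary has universe $\sqrt{2u}$ and there are at most $\sqrt{u/2}+1$ clusters, and one checks that the same recurrences --- and hence the same $O(m\log\log U)$ work and $O(\log U)$ span --- still hold, matching the sequential \veb{} analysis. The base case $u=O(1)$ is immediate, since such a subtree holds only $\mmin/\mmax$ and insertion is constant-time.
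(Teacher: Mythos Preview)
Your proof is correct and follows essentially the same route as the paper: set up the work recurrence $W(u,m)=\sum_i W(\sqrt{u},m_i)+O(m)$ with $\sum_i m_i\le m$, then verify $W(u,m)\le Cm\log\log u$ by induction on $u$ using $\log\log\sqrt{u}=\log\log u-1$. The paper's appendix does precisely this computation; your additional discussion of the batch-size accounting ($|H|$ summary keys offset by the $|B'|$ keys stripped from the low-bit batches), the span recurrence, and the odd-bit case matches the surrounding proof of \cref{insertTheorem} as well.
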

	\begin{proof}
		Prove by induction on $u$.
		
		When $m=1$, the recurrence reduces to the one for single element insertion, which is certified by ~\cite{CLRS}, now consider only $m>1$.
		
		Base case is validated by picking $C=0$. Now assume the inference sounds for $\sqrt{u}\leq k$, where $k\leq U$, then it holds for $\sqrt{u}\leq k^2$ inductively, since:
		\begin{align*}
			w(u,m)&=C\cdot \sum_{i=1}^{\sqrt{u}}m_i\log\log\sqrt u + C\cdot m\\
			&= C\cdot m (\log\log \sqrt u +1)\\
			&= C\cdot m \log\log u
		\end{align*}
		picking $u=U$ and the proof follows.
	\end{proof}
	
	Since the vEB tree has depth $\Theta(\log\log U)$ and in each recursion level, performing operations on \cref{beginInsert}, \cref{unify}, \cref{extractNewhigh} and \cref{spawn-high} incur additional $O(\log m)$ span; therefore the span for \cref{vebinsert} is $O(\log m\log\log U)$ as desired.
	
	Summarizing above, we state the following \cref{insertTheorem}.
}

%\subsection{Batch Deletion}
%As the reverse operation of batch insertion, batch deletion requires to delete a batch of sorted elements $B=\{x|x\in \univ\}$ from a vEB tree $\vt$ with size $O(U)$, where $U=|\univ|$ is the universe size. Additionally, denote $\setveb:=\{x\in \univ|x\in \vt\}$ the elements set represented by vEB tree $\vt$ and we have $B\subseteq \{\vt\}$.
\subsubsection{Batch Deletion.}\label{sec:batch-delete}
The function \batchdelete{}$(\vt,B)$ deletes a batch of sorted keys $B\subseteq \univ$ from a \veb{} tree $\vt$.
Let $m=|B|$ be the batch size.
For simplicity, we assume $B\subseteq \vt$. If not, we can first look up all keys in $B$ and filter out those that are not in $\vt$ in $O(m\log\log U)$ work and $O(\log m+\log\log U)$ span.
We show our algorithm in \cref{vebdelete}.
%The batch deletion can be implemented by sequentially deleting all elements in $B$.
%When the parallelism is taken into consideration, maintaining the $\vt.\mmin$ and $\vt.\mmax$ becomes the major obstacle to resolve, since the batch may contain consecutive elements and the overhead of determining the minimum and maximum of survivors in $\{\vt\}\setminus B$ by simply querying each batch candidate's predecessor and successor with $O(m\log\log U)$ work in every recursion level are prohibitively expensive.
The main challenge to performing $m$ deletions in parallel is to properly set the $\mmin$ and $\mmax$ values for each subtree~$t$.
When the $\mmin/\mmax$ value of a subtree $t$ is in $B$, we need to replace it with another key in its subtree that 1) does not appear in $B$, and 2) needs to be further deleted from the corresponding $\cluster[\cdot]$ (recall that the $\mmin/\mmax$ values of a subtree should not be stored in its children).
%Our approach is to pre-construct a data structure that supports $\vt.\mmin$ and $\vt.\mmax$ updating for all elements $x\in B$ in $O(1)$ work and dynamically derive a new one for high-bits pattern and low-bits before diving into the next recursion level using $O(m)$ work and $O(\log m)$ span.
%Our approach is to maintain a data structure such that if $\vt.\mmin$ (or $\max$) is in $B$,
%we can replace it with the proper $\min$ (or $\max$) value among the keys in $\vt\setminus B$ efficiently.
To resolve this challenge, we keep the \defn{survival predecessor} and \defn{survival successor} for all $x\in B$ wrt.\ a \veb{} tree, defined as follows.
%We start by defining the \defn{survival predecessor} and \defn{survival successor} for all batch elements w.r.t a vEB tree.

\begin{algorithm}[t]
	\fontsize{8pt}{8.5pt}\selectfont
	%\small
	\caption{Batch Deletion Algorithm for vEB tree\label{vebdelete}}
	\SetKwFor{parForEach}{parallel-foreach}{do}{endfor}
	%\KwIn{Batch of elements $B$ in sorted order, vEB tree $\vt$, survival predecessor $\mathcal P$ and survival successor $\mathcal{S}$ for all $x\in B$}
	\KwIn{A \veb{} tree $\vt$ and a batch of keys $B\subseteq \vt$ in sorted order}
	\KwOut{Update $\vt$ by deleting all keys $x\in B$}
	\SetKwInOut{Note}{Note}
	\SetKw{MIN}{min}
	\SetKw{MAX}{max}
	\SetKw{OR}{or}
	\SetKw{AND}{and}
	\SetKw{IF}{if}
	\SetKwProg{MyFunc}{Function}{}{end}
	%\Note{Assume all elements $x\in B$ are currently enclosed by the set represented by vEB tree $\vt$.}
	\codeskip{}
	\DontPrintSemicolon
	
	\MyFunc{\upshape\batchdelete{}($\vt,B$)}{
		Initialize survival mappings $\mathcal P$ and $\mathcal S$ with respect to $B$ and $\vt$\label{preprocessing}\\
		\lIf{$B\ne \emptyset$}{\batchdeletehelper{}($\vt,B,\mathcal{P},\mathcal{S}$)}
	}
	
	\codeskip{}
	
	\MyFunc{\upshape\batchdeletehelper{}($\vt,B,\mathcal{P},\mathcal{S}$)}{
		\tcp{Maintaining min/max of current tree}
        $\langle v_{\mmin}, v_{\mmax} \rangle \gets \langle \vt.\mmin,\vt.\mmax \rangle$\label{line:delete:backup}\\
        \If(\tcp*[f]{if $\vt.\mmin\in B$, it must be $B.\mmin$}\label{line:testDelMin}){$v_{\mmin}=B.\mmin$}{
          $y\gets \mathcal{S}(B.\mmin)$\label{line:updateminstart}\\
		  \If(\tcp*[f]{if $y$ is in the clusters}){$y\ne\vt.\mmax$ \AND $y\ne +\infty$}{
            %$\delSeq(\vt,y)$\tcp*[f]{delete $y$ sequentially}\label{line:seqdelete}\\
            Delete $y$ from $\vt$ sequentially\label{line:seqdelete}\\
			$\langle\mathcal P,\mathcal S\rangle\gets$ \textsc{SurvivorRedirect($\vt,B,y,\mathcal P,\mathcal S$)}\label{redirectMin}\\
		  }
		  $\vt.\mmin\gets y$\label{line:updateminend}
        }
        \lIf{$v_{\mmax}=B.\mmax$} {...\tcp*[f]{Mostly symmetric to Lines \ref{line:updateminstart}--\ref{line:updateminend}}\label{line:delete:max}}
        $B\gets B\setminus \{v_{\mmin}\} \setminus \{v_{\mmax}\}$\label{line:remove}\\
        \lIf{$\vt.\mmax=-\infty$ \AND $\vt.\mmin\neq +\infty$}{$\vt.\mmax\gets \vt.\mmin$}\label{singleEleRemain}
        %\tcp*[f]{1 key remains}
		\If(\tcp*[f]{Recursively deal with the batch}){$B\neq \emptyset$}
		{\label{beginDelete}
			$H\leftarrow\{\high(x)|\forall x\in B\}$\\\label{findhigh} $L[h]\leftarrow\{\low(x)|\high(x)=h\in H,\forall x\in B\}$\\\label{findlow}
			\parForEach{$h\in H$}{
				$\langle\mathcal P_h, \mathcal S_h\rangle\leftarrow\textsc{SurvivorLow}(h,L[h],\mathcal P,\mathcal{S})$\\\label{survivorlow}
				\batchdeletehelper($\vt.\cluster[h],L[h],\mathcal P_h,\mathcal S_h$)\label{line:recursionlow}\\
			}
			$H'\leftarrow \{h\in H\,|\,\vt.\cluster[h] \text{ is empty}\}$\label{extractDeletedHigh}\\
			%$\langle\mathcal P_{H'},\mathcal S_{H'}\rangle\leftarrow \textsc{SurvivorHigh}(H',L,\mathcal P,\mathcal S)$\\\label{survivorhigh}
			$\langle\mathcal P',\mathcal S'\rangle\leftarrow \textsc{SurvivorHigh}(H',L,\mathcal P,\mathcal S)$\label{survivorhigh}\\
			\batchdeletehelper$(\vt.\summary,H',\mathcal P_{H'}, \mathcal S_{H'})$\label{line:recursionhigh}\\
		}
	}
	\vspace{0.5em}
	\tcp{Redirect the survival mapping $\mathcal P$ and $\mathcal S$ concerning elements in batch $B$ after sequential deletion of $y$ from vEB tree $\vt$}
	\MyFunc{\upshape\textsc{SurvivorRedirect$(\vt,B,y,\mathcal P, \mathcal S)$}\label{line:redirect}}{
		%$p\leftarrow \predessor(\vt,y); s\leftarrow \successor(\vt,y)$\label{findPredSucc}\;
		$\langle p, s\rangle \leftarrow \langle \predessor(\vt,y), \successor(\vt,y)\rangle$\label{line:redirect:find}\;
		%$p\gets \mathcal P(p)$ \IF $p\in B$;
		\lIf{$p\in B$}{$p\gets \mathcal P(p)$}\label{redDelP}
		%$s\gets \mathcal S(s)$ \IF $s\in B$\tcp*[f]{skip elements in batch}\\
		\lIf{$s\in B$}{$s\gets \mathcal S(s)$}\label{redDelS}
		\parForEach{$x\in B$}{
			%$\mathcal P(t)\gets	p$ \IF $\mathcal P(t) = y$;
			%$\mathcal S(t)\gets	s$ \IF$\mathcal S(t) = y$
			\lIf{$\mathcal P(x) = y$}{$\mathcal P(x)\gets	p$}\label{redMin}
			\lIf{$\mathcal S(x) = y$}{$\mathcal S(x)\gets	s$}\label{redMax}
		}
		\Return{$\langle\mathcal P,\mathcal S\rangle$}\\
	}
	
	%\vspace{0.5em}
	\codeskip{}
	\tcp{Build survival predecessor $\mathcal P_h$ and successor $\mathcal S_h$ for elements in $L[h]$}
	\MyFunc{\upshape\textsc{SurvivorLow($h, L, \mathcal P, \mathcal S$)}}{
		$\mathcal P_h\leftarrow\emptyset,\mathcal S_h\leftarrow\emptyset$\\
		\parForEach{$l\in L[h]$}{
			%$p\leftarrow \mathcal P(\idx(h,l)), s\leftarrow \mathcal S(\idx(h,l))$\\
			$\langle p, s\rangle \gets \langle \mathcal P(\idx(h,l)), \mathcal S(\idx(h,l))\rangle$\\
			%\leIf{$\high(p)=h$}{$\mathcal P_h(l)\leftarrow \low(p)$}{$\mathcal P_h(l)\leftarrow \texttt{NIL}$}
			%\leIf{$\high(s)=h$}{$\mathcal S_h(l)\leftarrow \low(s)$}{$\mathcal S_h(l)\leftarrow \texttt{NIL}$}
			%\yihan{Not sure if this is true, but it seems we need to exclude min and max.}\\
			\lIf{$\high(p)=h$ \AND $p\neq \vt.\mmin$}{$\mathcal P_h(l)\leftarrow \low(p)$}\label{surLowPreSameHigh}
			\lElse{$\mathcal P_h(l)\leftarrow -\infty$}\label{surLowPreNotSameHigh}
			\lIf{$\high(s)=h$ \AND $s\neq \vt.\mmax$}{$\mathcal S_h(l)\leftarrow \low(s)$}\label{surLowSucSameHigh}
			\lElse{$\mathcal S_h(l)\leftarrow +\infty$}\label{surLowSucNotSameHigh}
		}
		\Return{$\langle\mathcal P_h,\mathcal S_h\rangle$}
	}
	%\vspace{0.5em}
	\codeskip{}
	\tcp{Build survival predecessor $\mathcal P'$ and successor $\mathcal S'$ for elements in $H$}
	\MyFunc{\upshape\textsc{SurvivorHigh}$(H,L,\mathcal P,\mathcal S)$}{
		$\mathcal P'\leftarrow \emptyset, \mathcal S'\leftarrow \emptyset$\\
		\parForEach{$h\in H$}{
			%$p\leftarrow \mathcal P(\idx(h,\MIN \{L[h]\})), s\leftarrow \mathcal S(\idx(h,\MAX \{L[h]\}))$\\
			$\langle p,s\rangle \gets \langle\mathcal P(\idx(h,\MIN \{L[h]\}), \mathcal S(\idx(h,\MAX \{L[h]\}))\rangle$\\
			%$\mathcal P_H(h)\leftarrow \high(p)$ \IF $p\neq \vt.\mmin$, otherwise $\mathcal P_H(h)\leftarrow\texttt{NIL}$\\
			%\leIf{if}{then}{else}
			\leIf{$p\neq \vt.\mmin$}{$\mathcal P'(h)\leftarrow \high(p)$}{$\mathcal P'(h)\leftarrow-\infty$}\label{surHighPre}
			%$\mathcal S_H(h)\leftarrow \high(s)$ \IF $s\neq \vt.\mmax$, otherwise $\mathcal S_H(h)\leftarrow\texttt{NIL}$\\
			\leIf{$s\neq \vt.\mmax$}{$\mathcal S'(h)\leftarrow \high(s)$}{$\mathcal S'(h)\leftarrow+\infty$}\label{surHighSuc}
		}
		\Return{$\langle\mathcal P', \mathcal S'\rangle$}
	}
	
\end{algorithm} 
\begin{definition}[Survivor Mapping]\label{def:survivormapping}
	\hide{
		Given a vEB tree $\vt$ and a batch $B\subseteq \vt$, define the mapping \defn{survival predecessor} $\mathcal{P}:B\mapsto \vt\setminus B,\forall x\in V$, such that $\mathcal{P}(x)$ is the predecessor of $x$ in set $\vt\setminus B$, i.e., the maximum element smaller than $x$ in set $\setveb$ after the removal of $B$.
		If no such key exists, $\mathcal P(x):=-\infty$.
		Similarly, define \defn{survival successor} $\mathcal{S}:B\mapsto \setveb\setminus B$ that maps every element $x\in B$ to its successor in $\setveb\setminus B$ and $S(x):=+\infty$ if
		no such key exists.
		%$x$ is larger than all key in $\vt\setminus B$.
	}
	Given a \veb{} tree $\vt$ and a batch $B\subseteq \vt$, the \defn{survival predecessor} $\mathcal{P}(x)$ for $x\in B$ is the maximum key in $\vt \setminus B$ that is smaller than $x$.
	If no such key exists, $\mathcal P(x):=-\infty$.
	Similarly, the \defn{survival successor} $\mathcal{S}(x)$ for $x\in B$ is the minimum key in $\vt \setminus B$ that is larger than $x$, and is $+\infty$ if no such key exists.
	$\langle\mathcal{P},\mathcal{S}\rangle$ are called the \defn{survival mappings}.
\end{definition}

%Clearly, for any subtree $\vt$ and the corresponding batch $B\subseteq \vt$, $\mathcal{P}(B.\mmax)$ is the substitution after deleting $\vt.\mmax$, so as for $\mathcal S(B.\mmin)$ to $\vt.\mmin$.
$\mathcal{P}(\cdot)$ and $\mathcal{S}(\cdot)$ are used to efficiently identify the new keys to replace a deleted key.
For instance, %for any subtree $\vt$ and the deletion batch $B\subseteq \vt$,
%$\mathcal{P}(B.\mmax)$ is the substitution after deleting $\vt.\mmax$
if $\vt.\mmax \in B$ (then it must be $B.\mmax$), we can update the value of $\vt.\mmax$ to $\mathcal{P}(B.\mmax)$ directly.
%Similarly, if $\vt.\mmin \in B$, its value should be replaced with $\mathcal S(B.\mmin)$.
%\yihan{Shouldn't it be ``$\mathcal{P}(\vt.\mmax)$ is the substitution after deleting $\vt.\mmax$''?}

\cref{vebdelete} first initializes the survival mappings (\cref{preprocessing}) as follows.
For each $x\in B$, we set (in parallel) $\mathcal{P}(x)$ as its predecessor in $\vt$ if this predecessor is not in $B$, and set $\mathcal{P}(x)=-\infty$ otherwise.
%Then we compute prefix-max of $\mathcal{P}$, and the $-\infty$ values will be replaced with the proper survival predecessor of $x$ in $\vt{}$.
Then we compute prefix-max of $\mathcal{P}$, and replace the $-\infty$ values by the proper survival predecessor of $x$ in $\vt{}$.
%the proper predecessors in $\vt \subseteq B$.

%This gives the initial values of $\mathcal{P}(\cdot)$, and $\mathcal{S}$ can be computed similarly.
The initial values of $\mathcal{S}$ can be computed similarly.
%we need to replace it by $\mathcal{P}(y)$, where $y\in B, y<x$ is the largest such key with $\mathcal{P}(y)\notin B$.
%This can be computed by a prefix-min-like
%the survivor mapping can be constructed by first querying the predecessor and successor for all $x\in B$ in parallel using $O(m\log\log U)$ work and $O(\log m)$ span\yihan{\log m+\log\log U span?}. Then if the corresponding predecessor/successor is also in $B$, we use a parallel prefix-sum-like
%then keeping the element if it is not contained in the batch, otherwise replacing it with the one held by its neighbor in batch, which could be finished within $O(m)$ work and $O(\log m)$ span to perform twice traverses on the reduction tree \ziyang{or prefix min/max?}.
We then use the \batchdeletehelper{} function to delete batch $B$ from a \veb{} (sub-)tree $\vt$ using the survival mappings, starting from the root.
%We will
We use $m$ as the batch size of \emph{the current recursive call}, and $u$ as the universe size of \emph{the current \veb{} subtree}.
The algorithm works in two steps: we first set the $\mmin/\mmax$ values of the tree $\vt$ properly, and then recursively
deal with the $\summary$ and $\cluster$ of $\vt$.

\myparagraph{Restoring $\boldsymbol{\mmin/\mmax}$ values.}
We first discuss how to update $\vt.\mmin$ and $\vt.\mmax$ if they are deleted, in \cref{line:delete:backup}--\ref{line:remove} of \cref{vebdelete}.
We first duplicate $\vt.\mmin$ and $\vt.\mmax$ as $v_{\mmin}$ and $v_{\mmax}$ (\cref{line:delete:backup}), and then check whether $v_{\mmin} \in B$ (\cref{line:testDelMin}).
If so, we replace it with its survival successor (denoted as $y$ on \cref{line:updateminstart}). % and removing this key from $B$ (\cref{line:remove}) .
If $y$ is in the clusters ($y\ne \vt.\mmax$), $y$ will be extracted from the corresponding cluster and become $\vt.\mmin$.
%The main process of replacement is in function \textsc{SurvivorRedirect}$(\vt, B, y, \mathcal{P}, \mathcal{S})$ (\cref{line:redirect}), which means
To do so, we first delete $y$ sequentially (\cref{line:seqdelete}), and the cost is $O(\log\log u)$.
Then we redirect the survival mapping for keys in $B$ using function \textsc{SurvivorRedirect} since their images may have changed (\cref{redirectMin})---if any of them have survival predecessor/successor as $y$, they should be redirected to some other key in $\vt$ (\cref{redMin}--\ref{redMax}). In particular, if $\mathcal{P}(x)$ is $y$, it should be redirected to $y$'s survival predecessor (\cref{redDelP}).
Similarly, if $\mathcal{S}(x)$ is $y$, it should be redirected to $y$'s survival successor (\cref{redDelS}).
Regarding the cost of \textsc{SurvivorRedirect}, \cref{line:redirect:find} (finding $y$'s predecessor and successor) costs $O(\log \log u)$, but we can charge this cost to the previous sequential deletion on \cref{line:seqdelete}.
The rest of this part costs $O(m)$ work and $O(\log m)$ span.
After that, we set the new $\vt.\mmin$ value as $y$.
%Symmetric case applies to when $\vt.\mmax\in B$, so we only discuss replacing $\vt.\mmin$ below (Lines \ref{line:updateminstart}--\ref{line:updateminend}).
The symmetric case applies to when $\vt.\mmax\in B$ (\cref{line:delete:max}).
We then exclude $v_{\min}$ and $v_{\max}$ from $B$ on \cref{line:remove}, since we have handled them properly.
Finally, on \cref{singleEleRemain}, we deal with the particular case where only one key remains after deletion, in which case we have to store it twice in both $\vt.\mmin$ and $\vt.\mmax$.
%Note that we need to update $\vt.\mmax$ manually if $\mathcal S(\vt.\mmin)=\mathcal P(\vt.\mmax)$ holds before deletion, since the survival redirect in \cref{redirectMin} would set $\mathcal S(\vt.\mmin)$ to $-\infty$.

\myparagraph{Recursively dealing with the low/high bits.}
After we update $\vt.\mmin$ and $\vt.\mmax$ (as shown above), we will recursively update $\vt.\summary$ (high-bits) and $\vt.\cluster{}$ (low-bits), which
%So far, if $B$ is still non-empty (\cref{beginDelete}), we will recursively delete the high- and low-bits of $B$ from the $\summary$ and $\cluster{}$, and thus it
requires the algorithm to construct the survival mappings for $\summary{}$ and each $\cluster$.
%(generated in \cref{findhigh}--\ref{findlow})
We first consider the low-bits for $\cluster[h]$ and construct the survival mappings as $\mathcal{P}_h$ and $\mathcal{S}_h$ (\cref{survivorlow}).
Given a key $x\in B$, where $\high(x)=h$, the survival predecessor for its low-bits $\mathcal P_h(\low(x))$ is the low-bits of its survival predecessor $\low(\mathcal P(x))$ if $x$ and $\mathcal P(x)$ have same high-bits $h$ (\cref{surLowPreSameHigh}). Otherwise $\low(x)$ would become the smallest key in $\vt.\cluster[h]$ after removing $B$, therefore we map $\mathcal P(\low(x))$ to $-\infty$ (\cref{surLowPreNotSameHigh}).
We can construct $\mathcal S(\low(x))$ similarly (\cref{surLowSucSameHigh}--\ref{surLowSucNotSameHigh}).
Note that we have to exclude $\vt.\mmin$ and $\vt.\mmax$ since they do not appear in the clusters.
Then we can recursively call \batchdeletehelper{} on the $\cluster[h]$ using the survival mappings $\mathcal{P}_h$ and $\mathcal{S}_h$ (\cref{line:recursionlow}).
In total, constructing all survival mappings for low-bits costs $O(m)$ work and $O(\log m)$ span.

%shall categorize the high-bits and low-bits of candidate elements by \cref{findhigh} and \cref{findlow} using same method in \cref{vebinsert}. After computing the survivor mapping $\mathcal P_h$ and $\mathcal S_h$ for low-bits $L[h]$, we remove $L[h]$ from the corresponding cluster recursively, handled by \cref{survivorlow} to \cref{removelow}. Some clusters, denoted as $H'$, become empty after the removal of low-bits, resulting in them being deleted from $\vt$'s summary with the aid of newly constructed survivor mapping $\mathcal P_{H'}$ and $\mathcal S_{H'}$, which are illustrated between \cref{survivorhigh} to \cref{removehigh}.

%Now consider constructing survivor mapping for low-bits of batch elements during recursion: given an element $x\in B$, the survival predecessor for its low-bits $\mathcal P(\low(x))$ equals to the low-bits of its survivor predecessor $\low(\mathcal P(x))$ if $x$ and $\mathcal P(x)$ have same high-bits $h$, since $\low(\mathcal P(x))$ is the maximum one smaller than $\low(x)$ in $\vt.\cluster[h]$; otherwise $\low(x)$ would become the smallest candidate in $\vt.\cluster[h]$ after removal all batch elements, therefore we maps $\mathcal P(\low(x))$ to $\texttt{NIL}$. Similar observation holds for $\mathcal S(\low(x))$ as well; therefore $O(m)$ work and $O(\log m)$ span is sufficient to finish all procedures above.

%It is slightly different to construct the survivor mapping for high-bits (noted as $\mathcal{P}'$ and $\mathcal{S}'$).
We then construct the survival mapping $\mathcal{P}'$ and $\mathcal{S}'$ for high-bits (\cref{survivorhigh}).
%$h$ of elements $x\in B$ whose cluster $\vt.\cluster[h]$ becomes empty after removing all low-bits.
Recall that the clusters of $\vt$ contain all keys in $\vt \setminus \{\vt.\mmin\} \setminus \{\vt.\mmax\}$. Therefore if $\mathcal P(x) = \vt.\mmin$, then $\mathcal P(\high(x))$ should be mapped to $-\infty$.
Otherwise let $y\in B$ be the maximum key except $\vt.\mmin$ and $\vt.\mmax$ with the same high-bits as $x$,
then we have $\mathcal P(\high(x))=\high(\mathcal P(y))$ by definition (\cref{surHighPre}). %since $\mathcal P(y)$ is the maximum key whose high-bits is smaller than $h$ by definition.
We can construct $\mathcal{S}(h)$ similarly (\cref{surHighSuc}).
The total cost of finding survival mappings for high-bits is also $O(m)$ work and $O(\log m)$ span.
%A similar finding is true for the construction of $\mathcal{S}(h)$; in this case, we can complete the above survivor mapping deriving within $O(m)$ work and $O(\log m)$ span as well.

%\myparagraph{Analysis.}
%The $\min/\max$ values of the current tree $\vt$ after batch deletion are well maintained by definition of the survivor mapping. The sequential deletion in \cref{sequentiallyDelete} deletes one element from the sub-tree if and only if it is lifted to min/max of $\vt$ as the sequential deletion does. Low-bits array $L[h]$ contains all low-bits in batch whose high-bits is $h$ by construction, and \cref{extractDeletedHigh} certifies no high-bits with vacant cluster remains in summary after removing all low-bits. The correctness of the algorithm then follows by recursion.
We now analyze the cost of \cref{vebdelete} in \cref{deletetheorem}.

\begin{theorem}\label{deletetheorem}
	Given a \veb{} tree $\vt$, deleting a sorted batch $B\subseteq \vt$ costs $O(m\log\log U)$ work and $O(\log U\log\log U)$ span, where $m=|B|$ is the batch size and $U=|\univ|$ is the universe size.
\end{theorem}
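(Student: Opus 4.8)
The plan is to mirror the analysis of batch insertion (\cref{insertTheorem}): I would set up work and span recurrences for \batchdeletehelper{} in terms of the universe size $u$ and the local batch size $m$, and then account separately for the two features that have no analogue in insertion, namely the sequential $\mmin/\mmax$ restorations and the summary recursions triggered by emptied clusters. Throughout I use that $m\le u$ in every recursive call, that bucketing keys into each $L[h]$, forming $H$ and $H'$, and building all survival mappings via \textsc{SurvivorLow}/\textsc{SurvivorHigh}/\textsc{SurvivorRedirect} (\cref{survivorlow},~\cref{survivorhigh},~\cref{redirectMin}) costs $O(m)$ work and $O(\log m)=O(\log u)$ span, and that the preprocessing on \cref{preprocessing} costs $O(m\log\log U)$ work (one \vebpred{}/\vebsucc{} per key plus an $O(m)$ prefix-max) and $O(\log U)$ span.

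For the span, the key structural observation is that in \cref{vebdelete} the summary recursion on \cref{line:recursionhigh} must run \emph{after} the parallel cluster recursions on \cref{line:recursionlow}, because the set $H'$ of high-bits to delete from $\vt.\summary$ is only known once we learn which clusters became empty. Every other per-node step (restoring $\mmin/\mmax$, the at-most-two sequential deletions of cost $O(\log\log u)$ each, and all survivor-mapping construction) has span $O(\log u)$. This yields $S(u)=2\,S(\sqrt u)+O(\log u)$; writing $u=2^{2^\ell}$ so that $\log u=2^\ell$, this solves to $S(u)=O(\log u\log\log u)=O(\log U\log\log U)$. The extra $\log\log U$ factor relative to \cref{insertTheorem} is precisely the price of serializing the summary recursion behind the cluster recursions.

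For the work I would first isolate the sequential deletion on \cref{line:seqdelete} by a charging argument: it fires at a node only when $\vt.\mmin\in B$ (resp.\ $\vt.\mmax\in B$), and immediately afterwards that key is removed from $B$ on \cref{line:remove} and never reappears in any recursive sub-batch, so I would charge it to the distinct batch element it consumes (the lookup inside \textsc{SurvivorRedirect} being charged to the same event). The remaining cost obeys $W(u,m)=O(m)+\sum_h W(\sqrt u,|L[h]|)+W(\sqrt u,|H'|)$ with $\sum_h|L[h]|=|B'|\le m$. The naive bound $|H'|\le m$ is too weak, since it compounds to $O(m\log U)$, so the crucial step is to adapt the classic sequential \veb{} amortization $T(u)=T(\sqrt u)+O(1)$: a high-bit enters the summary batch only when its cluster is emptied, and the work of emptying that cluster must be charged against the same keys so that the summary recursion does not multiply the $\log\log U$ factor. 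Made precise, this keeps the per-level recursive mass $O(m)$ in an amortized sense and gives $W(u,m)=O(m\log\log u)$, matching the cost of $m$ sequential deletions.

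The main obstacle I expect is exactly this work amortization for the summary/emptying cascade: showing that the $W(\sqrt u,|H'|)$ terms, together with the count of sequential deletions, summed over the entire (cluster- and summary-) recursion, total $O(m\log\log U)$ rather than $O(m\log U)$. Entangled with it is the correctness obligation of \cref{def:survivormapping}: I must verify that after a promoted key $y$ is physically removed on \cref{line:seqdelete}, the \textsc{SurvivorRedirect} updates keep $\mathcal P$ and $\mathcal S$ consistent, and that \textsc{SurvivorLow}/\textsc{SurvivorHigh} correctly restrict these mappings to each $\cluster[h]$ and to $\vt.\summary$ while excluding the freshly set $\vt.\mmin/\vt.\mmax$. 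Only with these invariants in place does each recursive call receive a well-formed instance to which the cost recurrences apply.
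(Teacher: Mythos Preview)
Your proposal is correct and follows essentially the same approach as the paper's proof. The paper also charges each sequential deletion on \cref{line:seqdelete} to the unique batch key consumed on \cref{line:remove}, and handles the summary recursion by exactly the amortization you describe: for every $h\in H'$ the minimum key of $L[h]$ is absorbed by the $\mmin/\mmax$ restoration in the cluster call and does not recurse further there, so that key can pay for $h$'s participation in the summary recursion; this yields the per-key depth recurrence $D(u)=D(\sqrt u)+O(1)$ and hence $O(m\log\log U)$ work.

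One point worth noting: your span argument is more careful than the paper's. You explicitly observe that the summary recursion on \cref{line:recursionhigh} is serialized after the cluster recursions on \cref{line:recursionlow} (since $H'$ is only known once the clusters finish), giving $S(u)=2S(\sqrt u)+O(\log u)$ and hence $O(\log U\log\log U)$. The paper's appendix instead asserts that the insertion span recurrence \cref{eqn:vebspan} ``still holds'' and then states the $O(\log U\log\log U)$ bound, which is a mismatch since \cref{eqn:vebspan} solves to $O(\log U)$; your derivation actually justifies the stated bound.
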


\ifconference{Due to the page limit, we show the (informal) high-level ideas here and prove it in the full version of this paper.}
\iffullversion{Due to the page limit, we show the (informal) high-level ideas here and prove it in \cref{app:deleteproof}.}
The span recurrence of the batch-deletion algorithm is similar to batch-insertion, which indicates the same span bound.
The work-bound proof is more involved.
%The key idea of the work bound proof is very similar to the analysis of \veb{} tree.
The challenge lies in that a key in $B$ can be involved in both recursive calls on \cref{line:recursionhigh} and \cref{line:recursionlow},
which seemingly costs work-inefficiency.
However, for each high-bit $h$ to be deleted on the recursive call on \cref{line:recursionhigh},
it indicates that the corresponding $\cluster[h]$ will become empty after the deletion of low-bits.
Therefore, the smallest low-bit among them must be exceptional and will be handled
by the base cases on Lines \ref{line:updateminstart}--\ref{line:updateminend}.
Therefore, for each key in $B$, only one of the recursive calls will be ``meaningful''.
If the audience is familiar with (sequential) \veb{} trees, this is very similar to the sequential analysis---for the two recursive calls
on the low- and high-bits, only one of them will be invoked in any single-point insertion/deletion, and the $O(\log \log U)$ bound thus holds.
In the parallel version, we need to further analyze the cost on Lines \ref{line:updateminstart}--\ref{line:updateminend} to restore the $\mmin/\mmax$ values.
\ifconference{We show a formal proof for \cref{deletetheorem} in the full version of this paper.}
\iffullversion{We show a formal proof for \cref{deletetheorem} in \cref{app:deleteproof}.}

\hide{
	\begin{lemma}\label{deletionLemma}
		Delete one element in batch requires at most one recursion call on a non-empty substructure using \cref{vebdelete}.
	\end{lemma}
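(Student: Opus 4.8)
The plan is to prove the deletion lemma by the same case-analysis/charging strategy that underlies the batch-insertion bound (\cref{insertTheorem}), now adapted to the two recursive calls of \batchdeletehelper{}: the per-cluster low-bit recursion on \cref{line:recursionlow} and the single summary (high-bit) recursion on \cref{line:recursionhigh}. Concretely, I would follow one batch key $x=\idx(h,l)$ down the recursion and show that at each node it continues its descent in \emph{at most one} of the two directions — into its cluster $\vt.\cluster[h]$ via its low-bit, or into $\vt.\summary$ via its high-bit — but never both. Iterating over the $O(\log\log U)$ levels then yields that $x$ touches $O(\log\log U)$ non-empty substructures, which is exactly what the work bound in \cref{deletetheorem} consumes.

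First I would dispatch the easy keys. After the min/max restoration in Lines~\ref{line:delete:backup}--\ref{line:remove}, the (at most two) keys $v_{\mmin},v_{\mmax}$ are excised from $B$ on \cref{line:remove}; they are resolved entirely by the base-case logic of Lines~\ref{line:updateminstart}--\ref{line:updateminend} and its symmetric counterpart, whose only costly step is the single sequential deletion on \cref{line:seqdelete}, and that deletion removes a \emph{surviving} key (not a batch key), so it is charged separately and spawns no batch recursion. For a remaining key $x=\idx(h,l)$ whose cluster $\vt.\cluster[h]$ stays non-empty after deleting $L[h]$, we have $h\notin H'$ by the definition on \cref{extractDeletedHigh}, so $x$ enters only the low-bit recursion on \cref{line:recursionlow} and never the summary recursion: exactly one non-empty descent.

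The delicate case, which I expect to be the crux, is a key $x=\idx(h,l)$ whose cluster is \emph{emptied} by the deletion, i.e.\ $h\in H'$. Here $L[h]$ exhausts $\vt.\cluster[h]$, so the high-bit $h$ additionally enters the summary recursion on \cref{line:recursionhigh}, threatening a second descent. I would resolve this by an injective charging: map each $h\in H'$ to the unique key of $L[h]$ whose low-bit equals $\vt.\cluster[h].\mmin$ (the smallest low-bit in $L[h]$). Since $L[h]$ empties the cluster, inside the recursive call on $\vt.\cluster[h]$ this smallest key is the sub-batch minimum, its survival successor is $+\infty$, so the test guarding \cref{line:seqdelete} fails and it is removed from the sub-batch on \cref{line:remove} without any further recursion — its low-bit descent \emph{terminates} at that node. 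Thus this distinguished key descends only through the summary, while every other key of $L[h]$, not being the cluster minimum, is never the summary's representative for $h$ and descends only through the cluster. Because distinct high-bits have disjoint $L[h]$, this assignment is injective and no key is charged two descents.

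I expect the main obstacle to be exactly this empty-cluster charging: I must verify that the distinguished witness chosen for each emptied cluster is genuinely idle in the low-bit direction (terminating as a base case rather than recursing further), and that the witnesses selected for different emptied clusters are distinct, so that the summary-recursion work injects into batch keys with no double counting. This is the batch analogue of the classical sequential vEB invariant — a point deletion recurses into the cluster or into the summary, but when it empties the cluster, that cluster step is a base case — with the extra wrinkle that a single batch may empty several clusters simultaneously, so I must track one non-recursing witness per emptied cluster rather than relying on a cluster of a single element.
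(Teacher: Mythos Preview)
Your proposal is correct and follows essentially the same approach as the paper. Both you and the paper charge each high-bit $h\in H'$ (emptied cluster) to the minimum key of $L[h]$, and argue that this witness key's low-bit recursion terminates immediately in the base case (Lines~\ref{line:updateminstart}--\ref{line:updateminend}) because its survival successor is $+\infty$, so no key is charged two genuine descents; the remaining keys fall into the non-emptied case and recurse only through their cluster.
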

	\begin{proof}
		We first observe that deleting the $\vt.\mmin$ and $\vt.\mmax$ costs $O(1)$ work, even if we lift a substitute element from sub-tree of $\vt$, one recursion call is sufficient to delete it sequentially ~\cite{CLRS}.
		
		\cref{removelow} can be interpreted as two steps: we first delete batch low-bits in $L[h]$ except the smallest one, $\min\{L[h]\}$, using one recursion call, then we delete the $\min\{L[h]\}$ individually. Meanwhile, notice that to delete a high-bits pattern $h$ from summary in \cref{removehigh} using one recursion call, $\min\{L[h]\}$ must be the last element that remains in  $\vt.\cluster[h]$, but this would result in the deletion for itself costing only $O(1)$ work by \cref{deleteone}.
		
		Proof finish.
	\end{proof}
	
	By \cref{deletionLemma} we know every element presents in at most $O(\log\log U)$ sub-problems, and the cost to visit a node in the vEB tree is $\Theta(1)$, this also bounds the work for the sequential deletion of a lifted element $y$ in \cref{sequentiallyDelete}, so that the work of finding the predecessor and successor of $y$ in \cref{findPredSucc} can be amortized to the sequential deletion of $y$. Meanwhile, the work in Lines \ref{redirectMin}, \ref{redirectMax}, \ref{findhigh}, \ref{findlow}, \ref{survivorlow} and \ref{survivorhigh} cost $O(m)$ work as discussed, therefore recurrence \ref{vebWorkRecur} is still valid to bound the work of \cref{vebdelete}. To analyze the span, notice all operations in Lines \ref{redirectMin}, \ref{redirectMax}, \ref{findhigh}, \ref{findlow}, \ref{survivorlow} and \ref{survivorhigh} requires $O(\log m)$ span in each recursion level and vEB tree has depth $O(\log\log U)$. Combined with the $O(m\log\log U)$ work and $O(\log m)$ span imposed by preprocessing in \cref{preprocessing}, we summarize the above analysis by stating the following \cref{deletetheorem}.
}

\subsubsection{Range Query.}\label{sec:rangequery}
Sequentially, the range query on \veb trees is supported by repeatedly calling \vebsucc ~from the start of the range until the end of the range is reached.
However, such a solution is inherently sequential.
Although we can find the start and end of the range directly in the tree, reporting all keys in the range to an array in parallel is difficult---\veb{} trees cannot maintain subtree sizes efficiently, so we cannot decide the size of the output array to be assigned to each subtree.
In this paper, we design novel parallel algorithms for range queries on \veb{} trees and use them to implement \domby{} in \cref{algo:rangeveb}.

To achieve parallelism, our high-level idea is to use divide-and-conquer to split the range in the middle and search the two subranges in parallel.
%However, since \veb trees do not support augmentations for subtree information, a straightforward solution will not be work efficient---we do not know how to make a balanced partitioning.
Even though the partition can be unbalanced, we can still bound the recursion depth while amortizing the work for partitioning to the operations in the sequential execution.
We collect results first in a binary tree and then flatten the tree into a consecutive array.
Putting all pieces together, our range query has optimal work (same as a sequential algorithm) and polylogarithmic span (see \cref{thm:veb:rangequery}). %The details are given in \cref{sec:rangequery}.
On top of that, we show how to implement \domby{}, which also requires amortization techniques. The details are provided in \cref{app:range,app:dominatedby}, which we believe are very interesting algorithmically.

%\subsection{Proof of Thm.~\ref{thm:mainweighted}}\label{sec:mainweightedproof}

%We introduced amortization techniques.

\hide{
	\begin{table}
		%\centering
		\small
		%\begin{description}[topsep=0in, labelwidth=.65in, leftmargin=.75in,itemindent=0in]
		%    \item[\insertop{}$(x)$,\deleteop{}$(x)$]: $O(\log\log U)$ (same as sequential)
		%    \item[\minop{}$()$, \maxop{}$()$, \memberop{}$(x)$, \predop{}$(x)$, \succop{}$(x)$]: $O(\log\log U)$ (same as sequential)
		%\end{description}
		
		\begin{tabular}{ll}
			\hline
			% after \\: \hline or \cline{col1-col2} \cline{col3-col4} ...
			\insertop{}$(x)$,\deleteop{}$(x)$,\minop{}$()$, \maxop{}$()$ & $O(\log\log U)$ \\
			\memberop{}$(x)$, \predop{}$(x)$, \succop{}$(x)$ & (same as sequential) \\
			\hline
			\batchinsert{}$(B)$,\batchdelete{}$(B)$ & $O(|B|\log \log U)$\\
			\hline
		\end{tabular}
		\caption{\small Bounds for parallel \veb{} tree. $B\subseteq \mathcal{U}$ is a sorted batch of elements in $[1..U]$. }\label{tab:vebbounds}
	\end{table}
}

%\subsection{Algorithms for the Parallel \veb{} Tree}\label{sec:veb-algo}

%will present the batch insertion and deletion algorithms below and put most of the analysis and
%the algorithms for \domby{} (and \rangequery) in the appendix.

%Our parallel \veb{} tree has exactly the same organization as the sequential \veb{} trees,
%and thus support all the functions for the sequential version

%\begin{theorem}
%  Given a set of integer keys from $[0,U)$, there exists a data structure
%  that has the same organization as the sequential \veb{} tree, and supports
%  \begin{itemize}[leftmargin=*]
	%    \item single point insertion, deletion, lookup, reporting the minimum (maximum) key, and reporting the predecessor and successor of an element, all in $O(\log\log U)$ work, using the same algorithms for sequential \veb{} trees;
	%    \item \batchinsert{}/\batchdelete{}$(B)$ that inserts/deletes a sorted batch $B\subseteq [0,U)$, in $O(|B|\log\log U)$ work and $O(\log |B|\log \log U)$ span;
	%    \item \rangequery{}$(k_L,k_R)$ that reports all keys in range $[k_L,k_R]$ in $O(r\log \log U)$ work and $O(\log r\log \log U)$ span, where $r$ is the output size.
	%  \end{itemize}
%\end{theorem} 

\newcommand{\seqbs}{\textsf{Seq-BS}}
\newcommand{\seqavl}{\textsf{Seq-AVL}}
\newcommand{\swgsimp}{\textsf{\swgs}}
\newcommand{\randompattern}{range}
\newcommand{\linepattern}{line}

\section{Experiments}
\label{sec:exp}
In addition to the new theoretical bounds,
we also show the practicality of the proposed algorithms by implementing our LIS (\cref{algo:lis}) and WLIS algorithms (\cref{algo:lis-weighted} using range trees).
Our code is light-weight. %\yihan{maybe provide a link?}
%We release our code on GitHub~\cite{nlognliscode}.
We use the experimental results to show how theoretical efficiency enables better performance in practice over the existing results.
We plan to release our code.

\myparagraph{Experimental Setup.}
We run all experiments on a 96-core (192-hyperthread) machine equipped with four-way Intel Xeon Gold 6252 CPUs and 1.5 TiB of main memory. Our implementation is in \texttt{C++} with ParlayLib~\cite{blelloch2020parlaylib}.
%For the parallel results, we use all cores and fully interleave the memory among NUMA areas using \textsf{numactl -i all}. \zheqi{Did the shown results use numactl?}
All reported numbers are the averages of the last three runs among four repeated tests.

\begin{figure*}[t]
	\centering
	\small
    \vspace{-1em}
	\begin{minipage}{\textwidth}
		\begin{tabular}{cccc}
			\includegraphics[width=.23\textwidth]{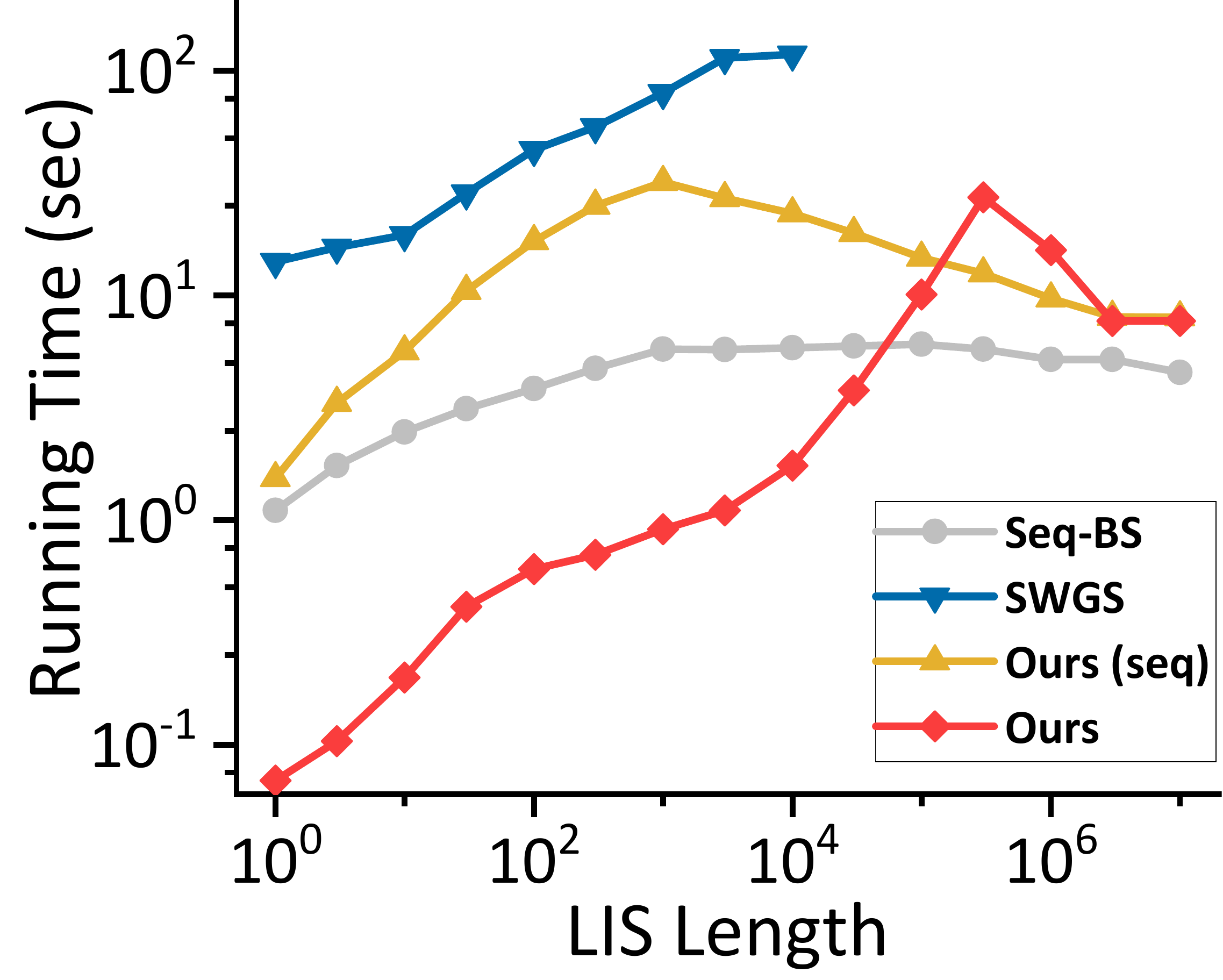}&
			\includegraphics[width=.23\textwidth]{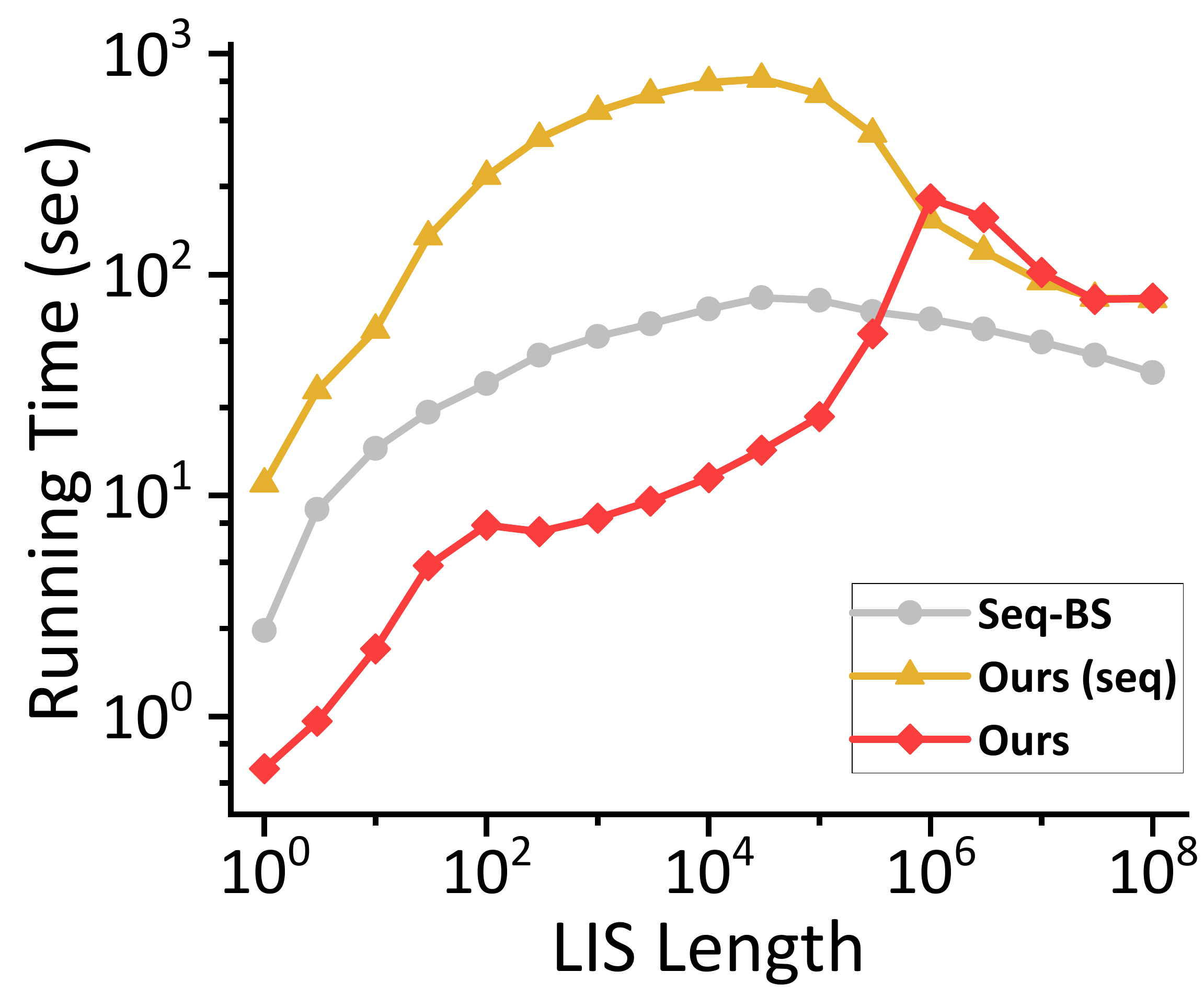}&
			\includegraphics[width=.23\textwidth]{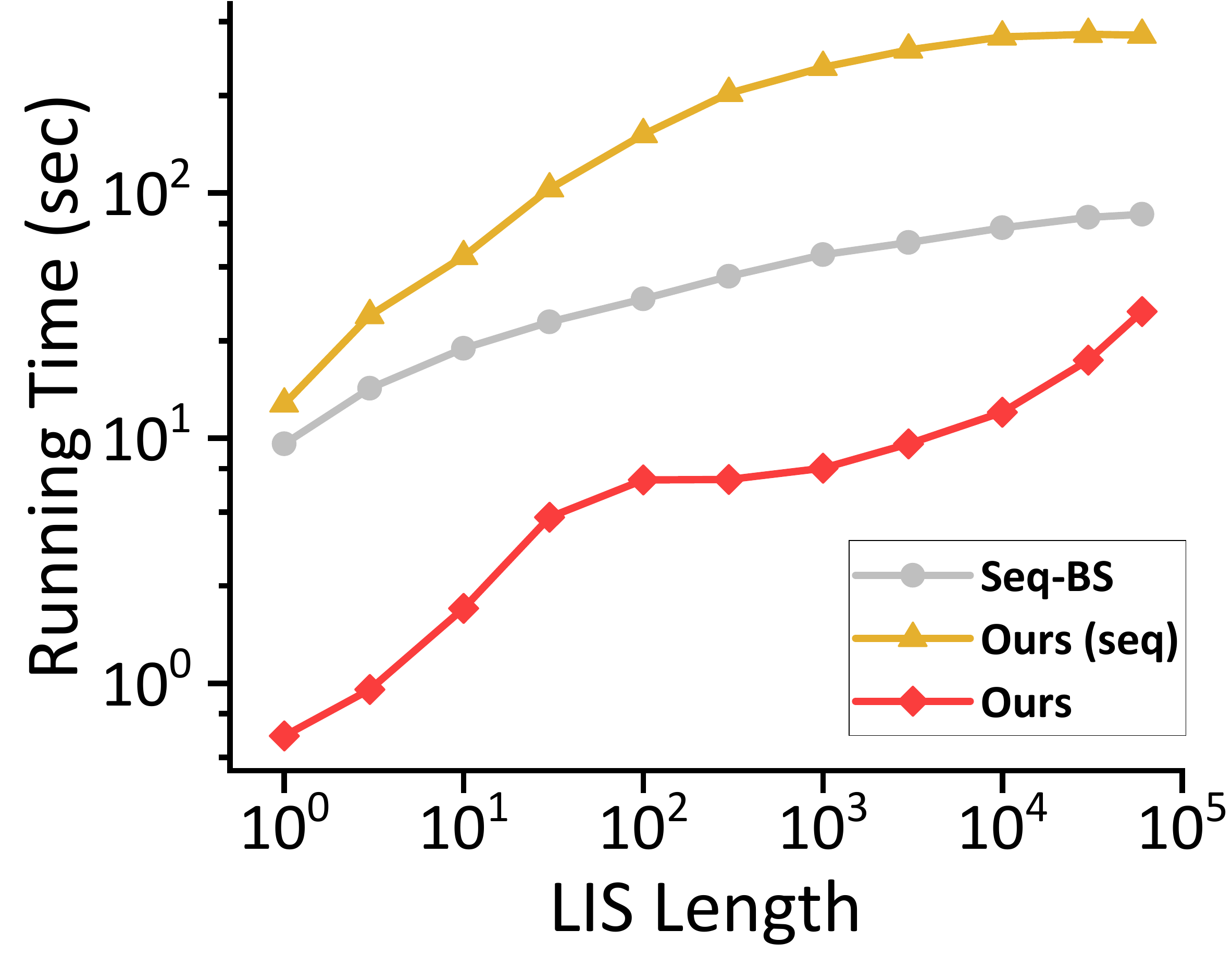}&
			\includegraphics[width=.23\textwidth]{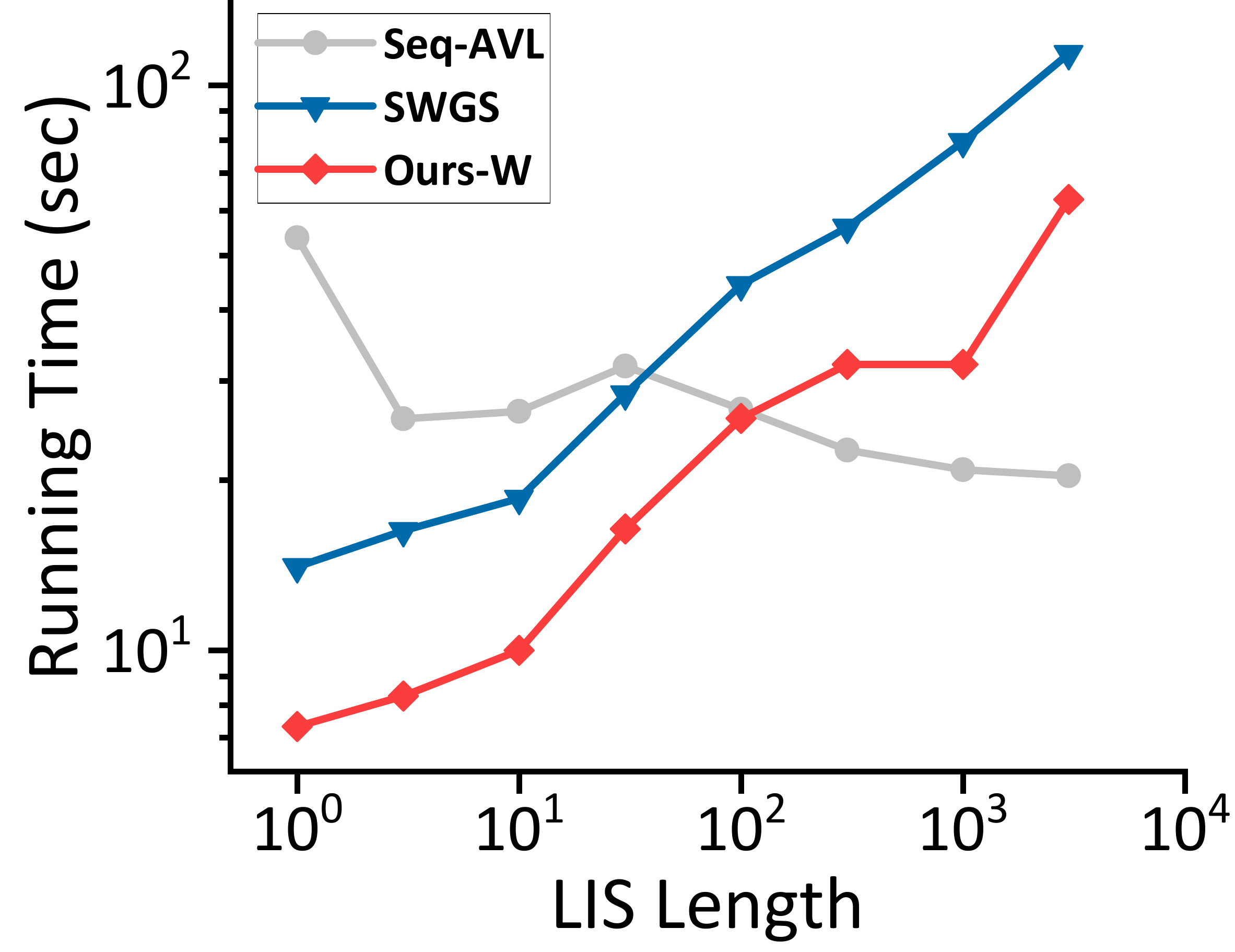}\\
			\bf (a). LIS. $\boldsymbol{n=10^8}$.  & \bf (b). LIS. $\boldsymbol{n=10^9}$. &\bf (c). LIS. $\boldsymbol{n=10^9}$. &\bf (d). Weighted LIS. $\boldsymbol{n=10^8}$.\\
			\bf Line pattern. & \bf Line pattern.&\bf Range  pattern.&\bf Line Pattern.\\
		\end{tabular}
	\end{minipage}
	\begin{minipage}{\textwidth}
		\caption{
			\small\textbf{Experimental results on the LIS and WLIS.}
			We vary the output size for each test.
			``Ours''$=$ our LIS algorithm in \cref{algo:lis} using 96 cores.
			``Ours (seq)''$=$ our LIS algorithm in \cref{algo:lis} using one core.
			``Ours-W''$=$our WLIS algorithm in \cref{algo:lis-weighted} using 96 cores.
			``Seq-BS''$=$ the sequential \seqbs{} algorithm based on binary search.
			``Seq-AVL''$=$ the sequential \seqavl{} algorithm based on the AVL tree.
			``\swgs{}''$=$ the parallel algorithm \swgsimp{} from~\cite{shen2022many}. See more details in \cref{sec:exp}.
			%(a). (Unweighted) LIS. With the input size $n=10^8$.
			%(b). (Unweighted) LIS. with the input size $n=10^9$.
			%(c). Weighted LIS. $n=10^8$.
			\label{fig:result}
		}
	\end{minipage}
    \vspace{-1.5em}
\end{figure*} 
%\myparagraph{Input Data.}
\myparagraph{Input Generator.}
We run experiments of input size $n=10^8$ and $n=10^9$ with varying ranks (LIS length $\lislength$).
%\zheqi{Which of ``rank'' and ``LIS length'' do we want to use all the below?} \zijin{rank}
We use two generators and refer to the results as the \defn{\randompattern{}} pattern and the \defn{\linepattern{}} pattern, respectively.
% To generate LIS input sequences by controlling $k$ (LIS length), we design two generators.
The \defn{\randompattern{}} pattern is a sequence consisting of integers randomly chosen from a range $[1,k']$.
The values of $k'$ upper bounds the LIS length.
When $k$ is large, and the largest possible rank of a sequence of size $n$ is expected to be $2\sqrt{n}$~\cite{johansson1998longest}. %\zheqi{Do we need a citation of the bound?} \zijin{Added, but we cite it twice now}
%We generate inputs with a specific pattern to vary the rank. In the input,
To generate inputs with larger ranks, we use a \defn{\linepattern{}} pattern generator that draws $A_i$ as $t\cdot i+s_i$ for a sequence $A_{1\dots n}$, where $s_i$ is an independent random variable chosen from a uniform distribution.
We vary $t$ and $s_i$ to achieve different ranks.
For the weighted LIS problem, we always use random weights from a uniform distribution.

%This generator involves generating a sequence with a linear pattern of increasing values, where the slope of the line is chosen such that any desired $\lislength$ smaller than $n$ can be achieved. We add a random offset to each element in the sequence, which is generated within a predetermined range, to add some randomness to this pattern.

\myparagraph{Baseline Algorithms.} We compare to standard sequential LIS algorithms and the existing parallel LIS implementation from \swgsimp{}~\cite{shen2022many}.
We also show the running time of our algorithm on one core to indicate the work of the algorithm.
\swgsimp{} works on both LIS and WLIS problems with $O(n\log^3 n)$ work and $\tilde{O}(\lislength)$ span,
and we compare both of our algorithms (\cref{algo:lis,algo:lis-weighted}) with it.
%We compare to \swgs{} in both weighted and unweighted LIS problems.
%Although \swgs{} also works on the weighted LIS problem, it is also the only parallel implementation of unweighted LIS we know of.
%Therefore, we include it for the comparison in both weighted and unweighted LIS problems.

\begin{figure}[t]
	\begin{minipage}{\columnwidth}
		\begin{tabular}{cc}
			\includegraphics[width=0.46\columnwidth]{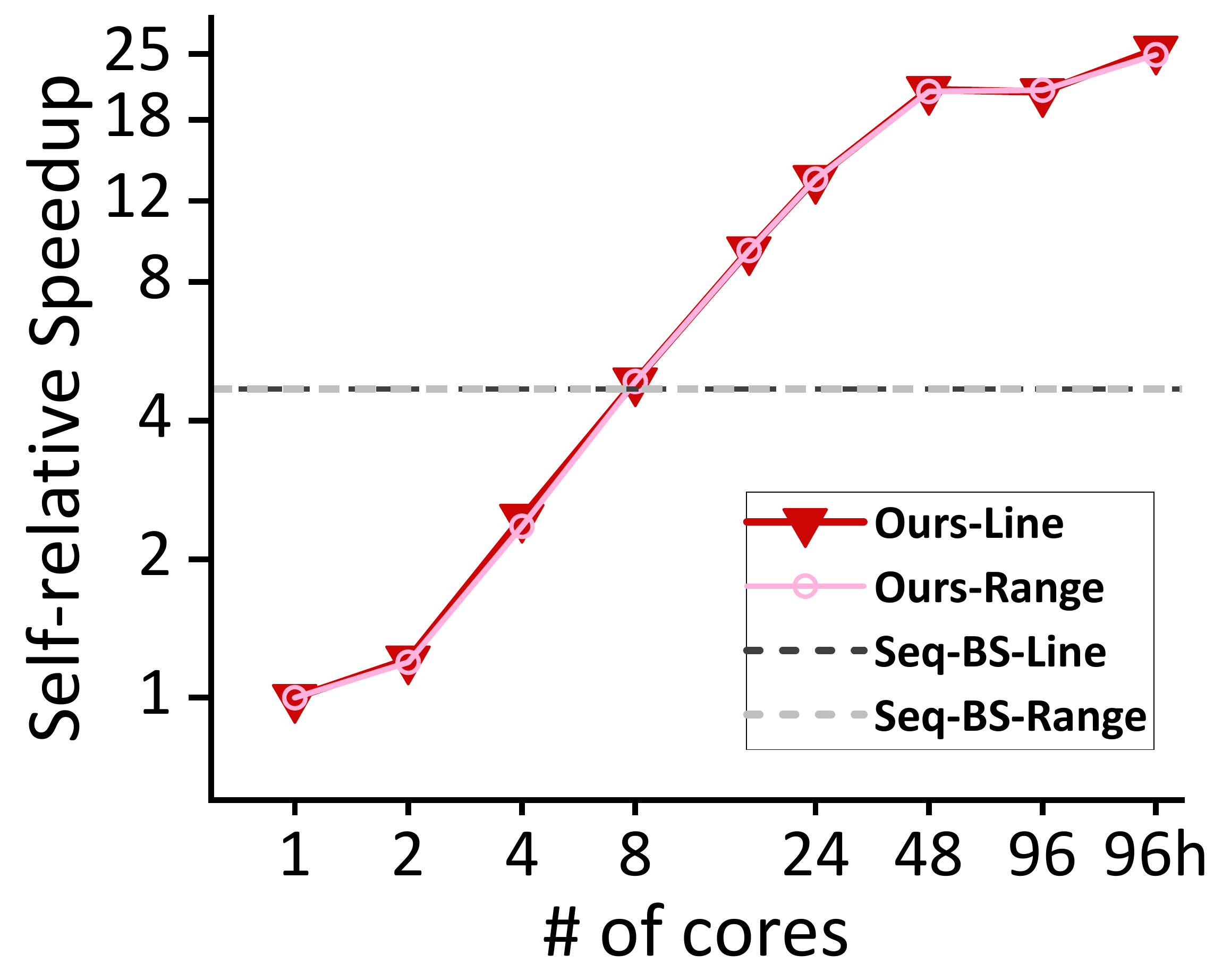}&
			\includegraphics[width=0.46\columnwidth]{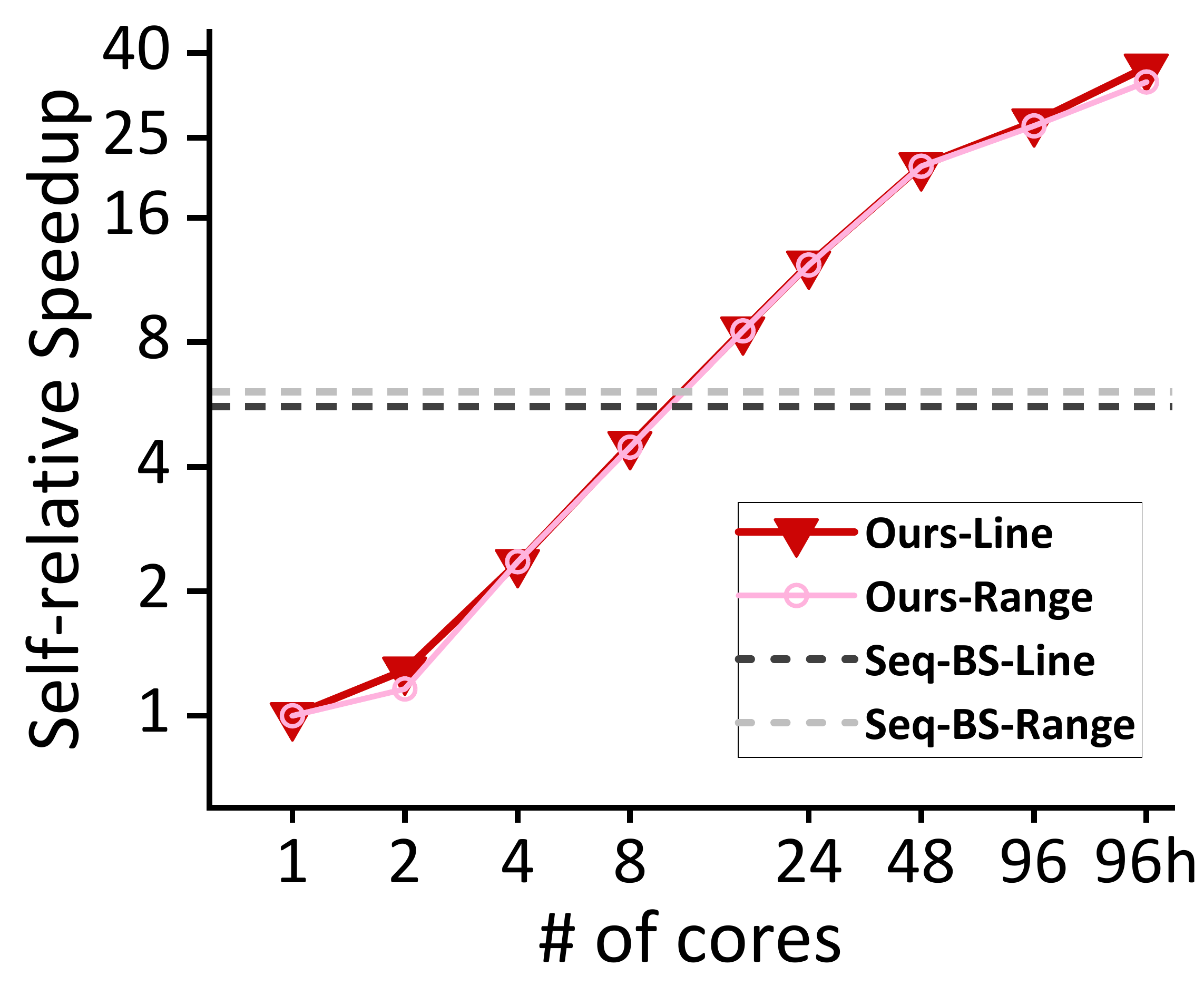}\\
			\bf (a). LIS. $\boldsymbol{k=10^2}$. & \bf (b). LIS. $\boldsymbol{k=10^4}$.\\
		\end{tabular}
	\end{minipage}

	\begin{minipage}{\columnwidth}
			\caption{
                \small \textbf{Experimental results of Self-relative Speedup.}
    			``Ours-Line''$=$ our LIS algorithm in \cref{algo:lis} using a \linepattern{} pattern generator.
    			``Ours-Range''$=$ our LIS algorithm in \cref{algo:lis} using a \randompattern{} pattern generator.
    			``Seq-BS-Line''$=$ \seqbs{} algorithm using a \linepattern{} pattern generator.
    %\zheqi{Do we have to redundantly explain Seq-BS and Seq-AVL?}\zijin{explaination of Seq-BS deleted, should we also delete the explaination for 'Line' and 'Range'?}
    			``Seq-BS-Range''$=$ \seqbs{} algorithm using a \randompattern{} pattern generator.
    The data generators are described at the beginning of \cref{sec:exp}.
                \label{fig:scal}
	}
	\end{minipage}%
	%\vspace{-1em}
\end{figure} 
For the LIS problem, we also use a highly-optimized sequential algorithm from~\cite{Knuth73vol3}, and call it \seqbs{}.
\seqbs{} maintains an array~$B$, where $B[r]$ is the smallest value of $A_i$ with rank $r$.
Note that $B$ is monotonically increasing.
Iterating $i$ from $1$ to $n$, we binary search $A_i$ in $B$, and if $B[r] < A_i \le B[r+1]$, we set $dp[i]$ as $r+1$.
By the definition of $B[\cdot]$, we then update the value $B[r+1]$ to $A_i$ if $A_i$ is smaller than the current value in $B[r+1]$.
The size of $B$ is at most $\lislength$, and thus this algorithm has work $O(n\log \lislength)$.
This algorithm only works on the unweighted LIS problem.

For WLIS, we implement a sequential algorithm and call it \seqavl{}.
This algorithm maintains an augmented search tree, which stores all input objects ordered by their values,
and supports range-max queries. %We query $A_i$ with the maximal DP value on the tree of which the value is less than $A_i$, and denote this DP value as $r$.
Iterating $i$ from $1$ to $n$, we simply query the maximum \dpvalue{} in the tree among all objects with values less than $A_i$, and update $\mathdp[i]$.
%According to the recurrence, $dp[i]=r+w_i$.
We then insert $A_i$ (with $\mathdp[i]$) into the tree and continue to the next object.
This algorithm takes $O(n\log n)$ work, and we implement it with an AVL tree.

\smallskip

%Due to work-efficiency and high parallelism, our parallel unweighted LIS algorithm is faster than a highly-optimized sequential algorithm by
%up to 9.1x; while our parallel weighted LIS algorithm is up to 7.3x faster than the sequential.
%Because of better work-bound and simple implementation, our algorithm is up to 200x faster than \swgs{} on the unweighted LIS problem,
%and up to 2.5x faster on the weighted version.
Due to better work and span bounds, our algorithms are always faster than the existing parallel implementation \swgsimp{}.
Our algorithms also outperform highly-optimized sequential algorithms up to reasonably large ranks (e.g., up to $k= 3\times 10^5$ for $n=10^9$).
%We note that the expected LIS length of a random sequence is $2\sqrt{n}$ (when $n\rightarrow \infty$)~\cite{johansson1998longest}.
For our tests on $10^8$ and $10^9$ input sizes, our algorithm outperforms the sequential algorithm on ranks from 1 to larger than $2\sqrt{n}$.
%Our algorithm is the first parallel algorithm that can, in expectation, outperform the sequential algorithm for a random sequence.
%Because of work- and space-efficiency, our algorithm can compute the LIS for a sequence with size $10^9$ and a low rank (up to thousands) within 10 seconds.
We believe this is the \defn{first parallel LIS implementation that can outperform the efficient sequential algorithm in a large input parameter space}.

%\subsection{Longest Increasing Subsequence (LIS)}
%\label{exp:uwLIS}

\myparagraph{{Longest Increasing Subsequence (LIS)}.} \cref{fig:result}(a) shows the results on input size $n=10^8$ with ranks from $1$ to $10^7$ using the \linepattern{} generator.
For our algorithm and \seqbs{}, the running time first increases with $\lislength$ getting larger because both algorithms have work $O(n\log \lislength)$.
When $k$ is sufficiently large, the running time drops slightly---larger ranks bring up better cache locality, as each object is likely to extend its LIS from an object nearby.
%We also compare our algorithm with the existing parallel implementation of \swgs{}, as well as the running time of our algorithm on one core.
Our parallel algorithm is faster than the sequential algorithm for $\lislength\le 3\times 10^4$ and gets slower afterward.
The slowdown comes from the lack of parallelism ($\tilde{O}(\lislength)$ span).
%As the rank increases, the work in each round is insufficient to fully utilize all the computing resources and hence, the obtained speedup cannot compensate the overhead of parallelism.
Our algorithm running on one core is only 1.4--5.5$\times$ slower than \seqbs{} due to work-efficiency.
%This is due to the work-efficiency ($O(n\log \lislength)$ work) of our algorithm.
With sufficient parallelism (e.g., on low-rank inputs), our performance is better than \seqbs{} by up to 16.8$\times$.

%And we can observe the similar trend on the time of the sequential algorithm for the same reason.

%Compared to \swgsimp{}, our new parallel algorithm outperforms all the ranks because of less work.
We only test \swgsimp{} on ranks up to $10^4$ because it costs too much time for larger ranks.
In the existing results, our algorithm is always faster than \swgsimp{} (up to 188$\times$) because of better work and span.
We believe the simplicity of code also contributes to the improvement.

We evaluate our algorithm on input size $n=10^9$ with varied ranks from $1$ to $10^8$ using \linepattern{} the generator (see \cref{fig:result}(b)) and with varied ranks from $1$ to $6 \times 10^4$ using the \randompattern{} generator (see \cref{fig:result}(c)).
%We show the performance in \cref{fig:result}(b).
We exclude \swgsimp{} in the comparison due to its space-inefficiency, since it ran out of memory to construct the range tree on $10^9$ elements.  %and long running time.
%but only compare to the above sequential algorithms.
For $\lislength\le 3\times 10^5$, our algorithm is consistently faster than \seqbs{} (up to 9.1$\times$).
When the rank is large, the work in each round is not sufficient to get good parallelism, and the algorithm behaves as if it runs sequentially.
Because of work-efficiency, even with large ranks, our parallel algorithm introduces limited overheads, and its performance is comparable to \seqbs{} (at most 3.4$\times$ slower).
We also evaluate the self-relative speedup of our algorithm on input size $n=10^9$ with rank $10^2$ and rank $10^4$ using both \linepattern{} and \randompattern{} generators.
In all settings from \cref{fig:scal}, our algorithm scales well to 96 cores with hyperthreads, reaching the self-speedup of up to 25.6$\times$ for $k=10^2$ and up to 37.0$\times$ for $k=10^4$.
With the same rank, our algorithm has almost identical speedup for both patterns in all scales.
% Our algorithm is scaling well to $192$ cores with in both ranks using both generators.
Our algorithm outperforms \seqbs{} (denoted as dash lines in \cref{fig:scal}) when using 8 or 16 cores.
%Additionally, using the performance of our algorithm on a single core as the reference, we compare the relative performance of our algorithm in scaling number of cores to that of \seqbs{},
%and it turns out that our algorithm outperform \seqbs{} on more than $16$ cores with both ranks.
% Our algorithm is scaling well to $192$ cores with in both ranks using both generators. Additionally, we compared the speedup of our algorithm (Ours) to that of \seqbs{} over Ours(Seq), and found that our algorithm is always faster than \seqbs{} with more than $16$ cores.

Overall, our LIS algorithm performs well with reasonable ranks,
achieving up to 41$\times$ self-speedup with $n=10^8$ and up to 70$\times$ self-speedup with $n=10^9$.
Due to work-efficiency, our algorithm is scalable and performs especially well on large data because
larger input sizes result in more work to utilize parallelism better.

\myparagraph{Weighted LIS.}
We compare our WLIS algorithm (\cref{algo:lis-weighted}) with \swgsimp{} and \seqavl{} on input size $n=10^8$.
%We use the same range tree implementation as in \swgsimp{}
%We fix the input size $n=10^8$ and randomly generate the input sequence with the same pattern in \cref{exp:uwLIS} with random weights.
%The weights are generated uniformly random.
We vary the rank from $1$ to $3000$, %We compare to \swgs{} and \seqavl{}.
%Notice that the sequential algorithm here differs from the previous experiment since \seqbs{} does not support a weighted version.
%\seqavl{} maintains an augmented tree that uses the input values as its key and augments the maximal DP values. Before computing $dp[i]$, the tree contains $A_{1..i-1}$ and supports querying the range-max of DP values. %We query $A_i$ with the maximal DP value on the tree of which the value is less than $A_i$, and denote this DP value as $r$.
%We can query the maximum DP value in the tree among all objects less than $A_i$, and update $dp[i]$.
%According to the recurrence, $dp[i]=r+w_i$.
%We then insert $A_i$ to the tree and continue to the next object. This algorithm takes $O(n\log n)$ work and we implement it with an AVL tree.
and show the results in \cref{fig:result}(d).
Our algorithm is always faster than \swgsimp{} (up to 2.5$\times$).
%The improvement comes from avoiding waking-up each object multiple times because
%instead, since our algorithm is capable of getting the ranks in $O(n\log n)$ work, the computing order is efficiently decided and the pivoting is omitted.
%our algorithm directly finds each frontier efficiently and does not need the wake-up scheme.
Our improvement comes from better work bound (a factor of $O(\log n)$ better, although in many cases \swgsimp{}'s work bound is not tight).
Our algorithm also outperforms the sequential algorithm \seqavl{} with ranks up to $100$.
%We note the running time of the sequential algorithm is relatively high with low ranks, and we attribute the such slowdown to the poor cache locality in this case.
The running time of the sequential algorithm decreases with increasing ranks $k$ because of better locality. In contrast, our algorithm performs worse with increasing $k$ because of the larger span.
%Therefore, our algorithIt gets slower as the rank increases mainly because less work in each round cannot obtain sufficient parallelism to compensate for the extra work, which also matches the theory.
%We believe our algorithm is scalable to many cores.

The results also imply the importance of work-efficiency in practice.
To get better performance, we believe an interesting direction is to design a work-efficient parallel algorithm for WLIS.
%We can further improve its performance if we can perform 2D range queries with a better work bound.

\hide{
	\begin{figure*}[ht]
		\centering
		\small
		\includegraphics[width=.35\columnwidth]{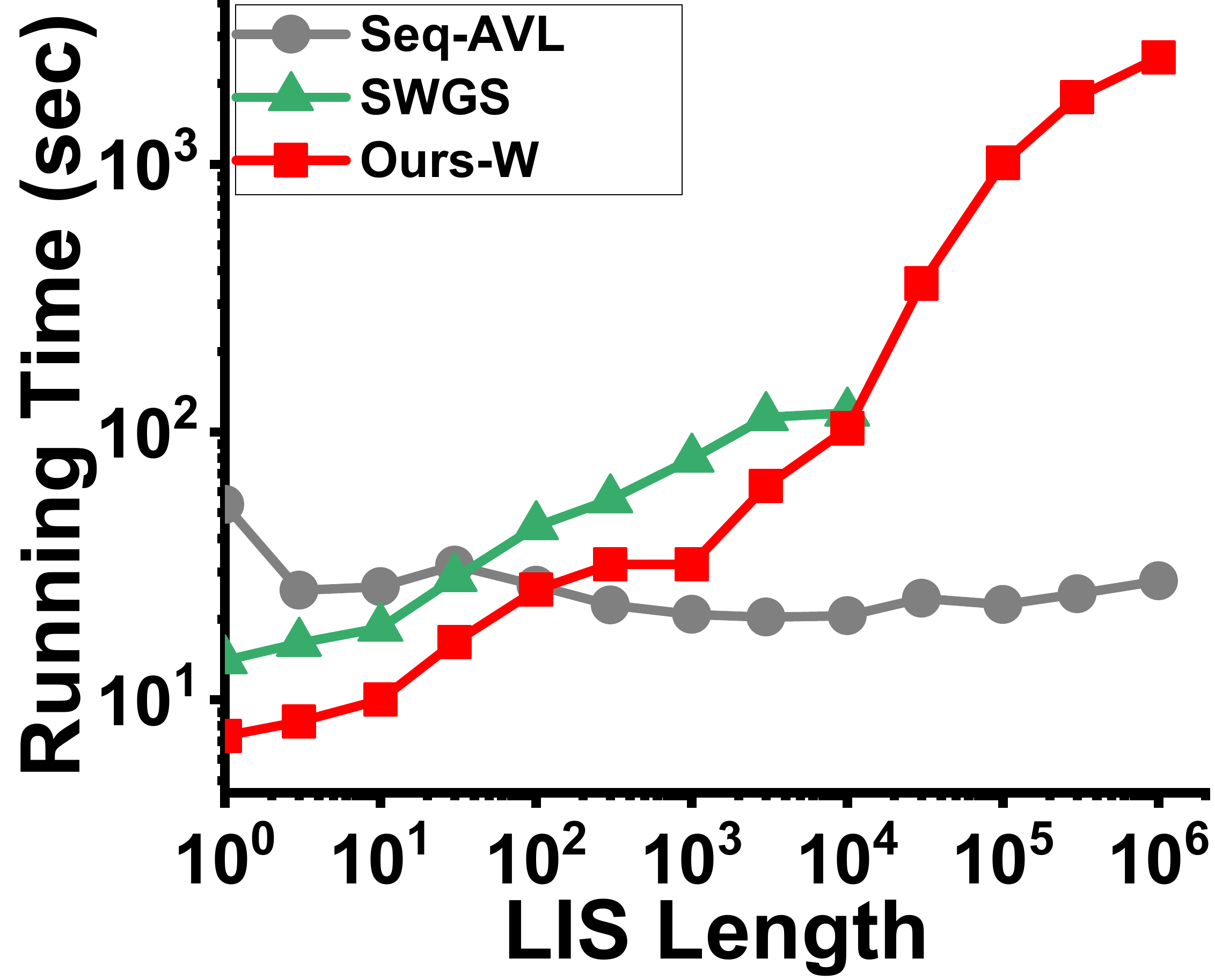}
		\caption{
			\small\textbf{Experiments on the weighted LIS.} \\
			We vary the output size, with the input size $n=10^8$.
			\label{fig:lis-weighted}
		}
	\end{figure*}
}

\section{Related Work}\label{sec:related}

%LIS is a textbook problem with extensive applications~\cite{delcher1999alignment,gusfield1997algorithms,crochemore2010fast,schensted1961longest,oprimer}.
LIS is widely studied both sequentially and in parallel.
%LIS is in the category of sparse dynamic programming~\cite{eppstein1992sparse,eppstein1992sparse2,galil1992dynamic} and
Sequentially, various algorithms have been proposed~\cite{yang2005fast,bespamyatnikh2000enumerating,fredman1975computing,Knuth73vol3,schensted1961longest,crochemore2010fast}.
and the classic solution uses $O(n\log n)$ work.
This is also the lower bound~\cite{fredman1975computing} w.r.t.\ the number of comparisons.
%Here we refer to the general inputs, and if input objects are integers in $[0,n]$, the work can be reduced to $O(n\log \log n)$ using \veb{} trees.
%Note that our WLIS problem uses a \veb{}  tree but still works for general inputs using $O(n\log n\log \log n)$ work.
In the parallel setting, LIS is studied both as general dynamic programming~\cite{BG2020,chowdhury2008cache,tang2015cache,galil1994parallel} or on its own~\cite{krusche2009parallel,seme2006cgm,thierry2001work,nakashima2002parallel,nakashima2006cost,alam2013divide,tiskin2015fast}.
However, we are unaware of any work-efficient LIS algorithm with non-trivial parallelism ($o(n)$ or $\tilde{O}(\lislength)$ span).
Most existing parallel LIS algorithms introduced a polynomial overhead in work~\cite{galil1994parallel,krusche2009parallel,seme2006cgm,thierry2001work,nakashima2002parallel,nakashima2006cost},
and/or have $\tilde{\Theta}(n)$ span~\cite{alam2013divide,BG2020,chowdhury2008cache,tang2015cache} (many of them~\cite{BG2020,chowdhury2008cache,tang2015cache} focused on improving the I/O bounds).
%Many of them~\cite{BG2020,chowdhury2008cache,tang2015cache} focuses on achieving I/O bounds for LIS algorithms with $O(n^{1+\epsilon})$ work for $\epsilon>0$.
The algorithm in~\cite{krusche2010new} translates to $O(n\log^2 n)$ work and $\tilde{O}(n^{2/3})$ span, but it relies on complicated techniques for Monge Matrices~\cite{tiskin2015fast}.
%All these algorithms work on both LIS and weighted LIS.
%Because of the iterative essence of the algorithm, many existing parallel algorithms are based on matrix operations and introduce
Most of the parallel LIS algorithms are complicated and have no implementations.
We are unaware of any parallel LIS \emph{implementation} with competitive performance to the sequential $O(n\log \lislength)$ or $O(n\log n)$ algorithm.

Many previous papers
propose general frameworks to study dependencies in sequential iterative algorithms to achieve parallelism~\cite{blelloch2020optimal,blelloch2016parallelism,blelloch2012internally,shen2022many}.
Their common idea is to (implicitly or explicitly) traverse the \DG{}.
There are two major approaches, and both have led to many efficient algorithms. The first one is edge-centric~\cite{blelloch2016parallelism,blelloch2018geometry,blelloch2020optimal,blelloch2020randomized,jones1993parallel,hasenplaugh2014ordering,BFS12,fischer2018tight,shen2022many}, which identifies the ready objects  by %(i.e., all its predecessors in the \dg{} have finished)
processing the successors of the newly-finished objects.
%Such examples include convex hull~\cite{blelloch2020randomized}, graph coloring~\cite{jones1993parallel,hasenplaugh2014ordering}, maximum independent set (MIS)~\cite{fischer2018tight,iterative}, Delaunay Triangulation~\cite{blelloch2018geometry,blelloch2016parallelism}, list ranking~\cite{blelloch2020optimal}, etc.
The second approach is vertex-centric~\cite{shun2015sequential,blelloch2012internally,pan2015parallel,tomkins2014sccmulti,shen2022many},
which checks all unfinished objects in each round to process the ready ones.
%Such examples include random permutation~\cite{shun2015sequential}, list ranking~\cite{shun2015sequential}, Kruskal's algorithm~\cite{blelloch2012internally}, etc.
However, none of these frameworks directly enables work-efficiency for parallel LIS.
The edge-centric algorithms evaluate all edges in the \dg{}, giving $\Theta(n^2)$ worst-case work for LIS.
The vertex-centric algorithms check the readiness of all remaining objects in each round and require $\lislength$ rounds,
meaning $\Omega(n\lislength)$ work for LIS.
The \swgs{} algorithm~\cite{shen2022many} combines the ideas in edge-centric and vertex-centric algorithms.
\swgs{} has $O(n\log^3 n)$ work \whp{} and is round-efficient ($\tilde{O}(k)$ span) using $O(n\log n)$ space.
It is sub-optimal in work and space.
%The overhead in work and space still also its performance in experiments~\cite{iterative}.
Our algorithm improves the work and space bounds of \swgs{} in both LIS and WLIS.
Our algorithm is also simpler and performs much better than \swgs{} in practice.

%Benefit from the doubly logarithmic work bound, the
The \veb{} tree was proposed by van Emde Boas in 1977~\cite{van1977preserving},  and has been widely used in sequential algorithms, such as dynamic programming~\cite{eppstein1988speeding, galil1992dynamic, hunt1977fast, chan2007efficient,inoue2018computing,narisada2017computing}, computational geometry~\cite{snoeyink1992two,chiang1992dynamic, claude2010range, afshani2017independent}, data layout~\cite{bender2000cache, ha2014models, umar2013deltatree,van1976design}, and others~\cite{lipski1981efficient, gawrychowski2015efficiently, koster2001treewidth, lipski1983finding, akbarinia2011best}. %However, to the best of our knowledge, there is no work supporting parallelism on \veb{} trees.
However, to the best of our knowledge, there was no prior work on supporting parallelism on \veb{} trees.
%there was no adaption trying to integrate parallelism with vEB tree yet. 

\section{Conclusion}
In this paper, we present the first work-efficient parallel algorithm for the longest-increasing subsequence (LIS) problem
that has non-trivial parallelism ($\tilde{O}(k)$ span for an input sequence with LIS length~$k$).
%Our algorithm is much simpler than existing parallel LIS algorithms.
Theoretical efficiency also enables a practical implementation with good performance.
We also present algorithms for parallel \veb trees and show how to use them to improve the bounds for the weight LIS problem.
As a widely-used data structure, we believe our parallel \veb tree is of independent interest, and we plan to explore other applications as future work.
Other interesting future directions include
achieving work-efficiency and good performance for WLIS in parallel
and designing a work-efficient parallel LIS algorithm with $o(n)$ or even a polylogarithmic span. 

\section*{Acknowledgement}

This work is supported by NSF grants CCF-2103483, IIS-2227669, NSF CAREER award CCF-2238358, and UCR Regents Faculty Fellowships.
We thank Huacheng Yu for some initial discussions on this topic, especially for the vEB tree part.
We thank anonymous reviewers for the useful feedbacks.

\bibliographystyle{ACM-Reference-Format}
\balance
\bibliography{../bib/strings,../bib/main}

%This work is supported by NSF grant CCF-2103483.
%\iffullversion{
%	\appendix
%	\input{appendix.tex}}\fi
\iffullversion{
\clearpage
\appendix

\section{Output the LIS from Algorithm \ref{algo:lis}}
\label{app:outputlis}
Following the terminology in DP algorithms, we call each $\mathdp[i]$ a \defn{state}.
When we attempt to compute state $i$ from state $j<i$,
we say $j$ is a \defn{decision} for state $i$.
We say $j$ is the \defn{best decision} of the state $i$ if $j=\arg \max_{t:t<i,A_{t}<A_i}\mathdp[t]$.

To report a specific LIS of the input sequence,
we can slightly modify \cref{algo:lis} to compute the \emph{best decision} $d[i]$ for the $i$-th object.  %where
%$$d[i]=\arg\max_{j:j<i,a_j<a_i}\mathdp[j]$$
%In other words, $p[i]$ is the best decision at state $i$, which means that we can choose
Namely, $A_{d[i]}$ is $A_i$'s previous object in the LIS ending at $A_i$.
Then starting from an object with the largest rank, we can iteratively find an LIS in $O(\lislength)$ work and span.
Our idea is based on a simple observation:

\begin{lemma}\label{lem:findlis}
For an object $A_i$ with rank $r$,
let $A_{d[i]}$ be the smallest object with rank $r-1$ before $A_i$,
%then $d[i]$ is the best decision of $A_i$
then $d[i]$ is the best decision for state $i$, i.e., $d[i]=\arg\max_{j:j<i,A_j<A_i}\mathdp[j]$.
\end{lemma}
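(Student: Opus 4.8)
The plan is to reduce the claim to two sub-facts and verify each directly from the DP recurrence \eqref{eqn:lis}. Throughout, fix an object $A_i$ with $\mathdp[i]=r$ and assume $r\ge 2$ (for $r=1$ there is no valid decision and $A_i$ begins an LIS, so the statement is vacuous). Call an index $j<i$ with $A_j<A_i$ a \emph{valid predecessor} of $i$. I want to establish: (i) the maximum $\mathdp$-value over all valid predecessors of $i$ equals exactly $r-1$; and (ii) the index $d[i]$ --- the smallest-valued object of rank $r-1$ occurring before position $i$ --- is itself a valid predecessor. Together these give $\mathdp[d[i]]=r-1=\max_{j<i,\,A_j<A_i}\mathdp[j]$, which is precisely the assertion that $d[i]$ is a best decision.

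For (i), I would read the statement straight off \eqref{eqn:lis}. Since $\mathdp[i]=r\ge 2$, the recurrence forces $\max_{j<i,\,A_j<A_i}\mathdp[j]+1=r$, so the maximum $\mathdp$-value among valid predecessors is exactly $r-1$. This simultaneously yields the two facts used below: the \emph{upper bound} that every valid predecessor has rank at most $r-1$, and the \emph{existence} of at least one valid predecessor $j^\ast$ with $\mathdp[j^\ast]=r-1$ and $A_{j^\ast}<A_i$.

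For (ii), the key step is a short monotonicity argument, and this is the only part carrying any subtlety. By construction $d[i]<i$ and $\mathdp[d[i]]=r-1$, so it only remains to verify $A_{d[i]}<A_i$, i.e.\ that $d[i]$ is genuinely valid rather than merely a rank-$(r-1)$ object lying before $A_i$. Here I invoke the witness $j^\ast$ from (i): $j^\ast$ is an object of rank $r-1$ occurring before position $i$, hence one of the candidates over which $d[i]$ minimizes value. Therefore $A_{d[i]}\le A_{j^\ast}<A_i$, giving the required strict inequality. The fact that we take the \emph{smallest} rank-$(r-1)$ object before $A_i$, rather than an arbitrary one, is exactly what makes validity automatic: its value can never exceed that of the guaranteed valid witness $j^\ast$, whereas a larger-valued (e.g.\ the latest) rank-$(r-1)$ object could fail $A_j<A_i$.

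Combining (i) and (ii), $d[i]$ attains the maximum $\mathdp$-value $r-1$ among valid predecessors, so $d[i]\in\arg\max_{j<i,\,A_j<A_i}\mathdp[j]$, completing the argument. I expect the main obstacle to be purely conceptual --- recognizing that ``smallest'' is precisely what guarantees validity via the existence of a rank-$(r-1)$ witness --- after which the verification is immediate and requires no nontrivial calculation. (One could instead phrase (ii) through the prefix-min characterization of \cref{lem:prefixmin}, but the direct recurrence argument above is the cleanest route.)
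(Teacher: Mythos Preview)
Your proof is correct and follows essentially the same approach as the paper's. The only cosmetic difference is that you verify $A_{d[i]}<A_i$ directly by exhibiting the witness $j^\ast$ and using $A_{d[i]}\le A_{j^\ast}<A_i$, whereas the paper wraps the same observation in a contradiction (if $A_{d[i]}\ge A_i$, then every rank-$(r-1)$ object before $i$ is $\ge A_i$, so $A_i$ could not attain rank $r$); your organization into (i) and (ii) is arguably a bit cleaner.
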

\begin{proof}
  Since $\rank(A_i)=r$ and $\rank(A_{d[i]})=r-1$, clearly $\mathdp[d[i]]=\max_{j:j<i,A_j<A_i}\mathdp[j]$.
  We just need to show that $A_{d[i]}<A_i$,
  such that it is a candidate of $A_i$'s previous object in the LIS.
  Assume to the contrary that $A_{d[i]}\ge A_i$.
  Note that $A_{d[i]}$ is the smallest object with rank $r-1$ before $A_i$.
  %all objects with rank $r-1$ before $A_i$ are no less than $A_{d[i]}$,
  %and thus are no less than $A_i$.
  Therefore, for any $A_j$ where $\rank(A_j)=r-1$ and $j<i$, we have $A_j\ge A_{d[i]}\ge A_i$.
  Hence, $A_i$ cannot obtain a DP value of $r$, which leads to a contradiction.
\end{proof}

We then show how to identify the best decision $d[i]$ for each $A_i$.
First of all, when executing \processfrontier{} in round $r$,
we can also output all objects of rank $r$ into an array in parallel.
This can be performed by traversing $T$ twice, similar to the function \prefixminfunc{}.
In the first traversal, we mark all \prefixmin{} objects to be extracted in the frontier.
On the way back of the recursive calls, we also compute the number of such \prefixmin{} objects in each subtree.
We call this the \defn{effective size} of this subtree.
The effective size at the root is exactly the frontier size $m_r$, and we can allocate an array $\ff_r[1..m_r]$ for the frontier.
%Based on the effective size at the root, we can allocate an array $\ff_r$ of size $m_r$, where $m_r$ is the frontier size.
Then we traverse the tree once again.
We recursively put objects in the left and right subtrees into $\ff_r[\cdot]$ in parallel.
Let the effective size of the left subtree be $s$.
Then the right tree can be processed in parallel to put objects in $\ff_r$ from the $(s+1)$-th slot.
%we know the \prefixmin{} objects in the right subtree can be output to $\ff_r$ from the $(s+1)$-th slot.
%Therefore, the left and right subtrees can be processed in parallel.
We then show that the objects in each frontier $\ff_r$ are non-increasing.

\begin{lemma}\label{lem:frontierdec}
Given a sequence $A$ and any integer $r$, let $\ff_r$ be the subsequence of $A$ with all objects with rank $r$.
Then $\ff_r$ is non-increasing for all $r$.
\end{lemma}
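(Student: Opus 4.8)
The plan is to prove the statement directly by contradiction, using only the defining property that $\rank(A_i)$ equals the length of the LIS ending at $A_i$ (the \dpvalue{} from \cref{eqn:lis}). Fix a rank $r$ and consider any two objects $A_i, A_j \in \ff_r$ with $i < j$ (so $A_i$ precedes $A_j$ in $A$). To show that $\ff_r$ is non-increasing, it suffices to establish that $A_i \ge A_j$ for every such pair.

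The core argument is a one-line extension. Suppose for contradiction that $A_i < A_j$. Since $\rank(A_i) = r$, there is a strictly increasing subsequence $A_{s_1} < A_{s_2} < \cdots < A_{s_r} = A_i$ with indices $s_1 < s_2 < \cdots < s_r = i$. Because $i < j$ and $A_i < A_j$, appending $A_j$ yields $A_{s_1} < \cdots < A_{s_r} = A_i < A_j$ with indices $s_1 < \cdots < s_r = i < j$, an increasing subsequence of length $r+1$ ending at $A_j$. By \cref{eqn:lis}, this forces $\rank(A_j) \ge r+1 > r$, contradicting $A_j \in \ff_r$. Hence $A_i \ge A_j$, and since $i < j$ were arbitrary, the objects of $\ff_r$ appear in non-increasing order.

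I expect there to be no serious obstacle here: the result is essentially immediate from the DP recurrence, and the only point requiring care is the \emph{strictness} of the comparison. The extension step relies on $A_i < A_j$ being a strict inequality so that $A_j$ genuinely lengthens the increasing subsequence; this is exactly what the contradiction hypothesis supplies, so equal values cause no difficulty. I would also remark that this monotonicity is consistent with \cref{lem:prefixmin}: within the residual sequence obtained by deleting all objects of rank less than $r$, the rank-$r$ objects are precisely the \prefixmin{} objects, which by \cref{def:prefixmin} form a non-increasing subsequence of that residual sequence, and therefore of $A$ as well. Either phrasing gives the claim, but the direct extension argument above is the cleanest to write.
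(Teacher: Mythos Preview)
Your proof is correct and follows essentially the same approach as the paper: both argue by contradiction that if $i<j$, $A_i<A_j$, and $\rank(A_i)=\rank(A_j)=r$, then appending $A_j$ to an LIS ending at $A_i$ yields $\rank(A_j)\ge r+1$, a contradiction. Your write-up is slightly more explicit in constructing the extended subsequence, but the argument is identical in substance.
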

\begin{proof}
Assume to the contrary that there exist $A_i$ and $A_j$,
s.t. $\rank(A_i)=\rank(A_j), i<j, A_i<A_j$.
This means that we can add $A_j$ after $A_i$ in an LIS, so $\mathdp[j]$ is at least $\mathdp[i]+1$.
This leads to a contradiction since $A_i$ and $A_j$ have the same rank (DP values).
Therefore, each frontier $\ff_r$ is non-increasing.
\end{proof}

Based on \cref{lem:frontierdec}, the \emph{smallest} object with rank $r-1$ before $A_i$ is also the \emph{last} object with rank $r-1$ before $A_i$.
%Therefore, let $r=\rank(A_i)$, to compute the best decision $p[i]$, we can use a binary search in $\ff_{r-1}$.
Therefore, after we find the frontier $\ff_r$,
we can merge $\ff_r$ with $\ff_{r-1}$ based on the index, such that each object in $\ff_r$ can find the last object before it with rank $r-1$.
%All values of $d[i]$ can be computed in parallel.
Using a parallel merge algorithm~\cite{JaJa92}, this part takes $O(\log n)$ span in each round and $O(n)$ total work in the entire algorithm.

\section{Additional Proofs}
\subsection{Proof of \cref{lem:prefixmin}}
\label{app:prefixminproof}

\begin{proof}
  We will prove the theorem inductively.

  We first show that the base case is true.
  We start with the ``if'' direction.
  For a \prefixmin{} object $A_i$, $A_i$ is the smallest object among $A_{1..i}$.
  Thus, there exists no $A_j$ such that $a_j<a_i, j<i$. Based on \cref{eqn:lis}, $\mathdp[i]=1$.

  For the ``only-if'' direction, note that if $\mathdp[i]=1$, the LIS ending at $A_i$ has length 1.
  Assume to the contrary that there exists $j<i$ such that $A_j<A_i$. Then the LIS ending at $A_i$ is at least 2,
  which contradicts the assumption. Therefore, $\mathdp[i]=1$ also indicates that $A_i$ is the smallest element among $A_{1..i}$.

  Assume for all $r<t$, Lemma \ref{lem:prefixmin} is true. We will prove that the lemma is true for $r=t$.
  We first show the ``if'' direction.
  Based on the inductive hypothesis, after removing all objects with rank smaller than $t$, a (remaining) object $A_i$ must have $\rank(A_i)\ge t$.
  Since $A_i$ is the smallest object among all \emph{remaining} objects in $A_{1..i}$,
  all objects in $A_{1..i}$ smaller than $A_i$ must have been removed and thus have rank at most $t-1$.
  From \cref{eqn:lis}, $\rank(A_i)\le t-1+1=t$. Therefore, $\rank(A_i)=t$.

  For the ``only-if'' direction, note that if $\mathdp[i]=t$, $A_i$ has rank $t$ and must be remaining after removing objects with ranks smaller than $t$.
  We will then prove that $A_i$ is a \prefixmin{} object.
  Let $S=\{A_j : A_j<A_i, j<i\}$.
  We first show that $\rank(x)<t$ for all $x\in S$.
  This is because $A_i$ depends on all objects in $S$---if any object $x\in S$ has $\rank(x)\ge t$, $A_i$ must have rank at least $t+1$.
  This means that all objects in $S$ must have been removed. In this case, $A_i$ must be no larger than all remaining objects before it.
  Therefore, $\mathdp[i]=t$ indicates that $A_i$ is a \prefixmin{} object after removing all objects with rank smaller than $t$.
\end{proof}

\subsection{Solving the Recurrences in Thm. \ref{insertTheorem}}
\label{app:solverecurrence}

%\yihan{change to $1+\sqrt{u}$ recursions.}\\
%\ziyang{the change is trivial since $\sum_{i=0}^{\sqrt u}m_i=m$ by definition}

\begin{lemma}
	Recurrence \ref{eqn:vebwork} solves to $O(m\log\log U)$.
\end{lemma}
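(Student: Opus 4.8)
The plan is to bound the work by a simple level-by-level accounting on the recursion tree of \batchinsert{}, relying on the two facts established in the setup of \cref{insertTheorem}: within a single call the subproblem batch sizes satisfy $\sum_{i=0}^{\sqrt u} m_i = m$, and every recursive call shrinks the universe from $u$ to $\sqrt u$. First I would use the second fact to pin down the recursion depth. Starting from the universe $U=2^{w}$, the bit-length of the universe halves at each level, so after $t$ levels it has roughly $w/2^{t}$ bits; the recursion bottoms out at constant universe size once $t=\Theta(\log w)=\Theta(\log\log U)$. Hence the recursion tree of \batchinsert{} has depth $O(\log\log U)$.

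Next I would charge the additive $O(m)$ term of each call to the level at which that call sits. The key structural fact from \cref{vebinsert} is that each original key descends into exactly one recursive subproblem: either it supplies its low-bit to a single cluster recursion, or, being the smallest key of a newly-created cluster, it is installed directly as that cluster's $\mmin/\mmax$ while its high-bit becomes the representative placed in the summary batch $H$. Consequently, for any fixed level $\ell$ the batch sizes of all calls at that level sum to at most $m$ (this is the relation $\sum_i m_i = m$ applied recursively, accounting for keys already consumed as $\mmin/\mmax$ at shallower levels). Since a call with batch size $m'$ does $O(m')$ non-recursive work, the total additive work at level $\ell$ is $O(m)$, and summing $O(m)$ over the $O(\log\log U)$ levels yields total work $O(m\log\log U)$, as claimed.

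Equivalently, and matching the inductive style the paper sketches, one can verify the closed form by guessing $W(u,m)\le C\,m\log\log u$ and inducting on $u$:
\begin{align*}
W(u,m) &\le \sum_{i=0}^{\sqrt u} C\,m_i\log\log\sqrt u + c\,m \\
&= C(\log\log u - 1)\sum_{i=0}^{\sqrt u} m_i + c\,m \\
&= C\,m\log\log u - (C-c)\,m,
\end{align*}
which is at most $C\,m\log\log u$ once $C$ is chosen to dominate the hidden constant $c$ in the $O(m)$ term; the base case of constant universe size holds since there $m=O(1)$ and $W=O(1)$. The main point to get right — and the only place the argument could fail — is that the branching factor is $\sqrt u + 1$, not a constant, so a naive estimate would multiply the work by $\sqrt u$ at every level; what rescues the bound is precisely that the $m_i$ \emph{partition} $m$ rather than each being $m$, so it is the \emph{sum}, not the number, of subproblems that governs the per-level cost. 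A minor technical caveat is that under the paper's convention $\log n := 1+\log_2(n+1)$ the telescoping identity $\log\log\sqrt u = \log\log u - 1$ holds only up to an additive constant, and the odd-total-bits case noted after \cref{insertTheorem} produces $\sqrt{u/2}+1$ subproblems of size $\sqrt{2u}$; both effects alter only the constant $C$ and leave the asymptotic bound $O(m\log\log U)$ intact.
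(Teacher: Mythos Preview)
Your proposal is correct. The inductive verification you give is essentially the paper's own proof: the paper also guesses $W(u,m)\le c\,m\log\log u$, plugs in, and uses $\log\log\sqrt u+1=\log\log u$ together with $\sum_i m_i=m$ to close the induction (with the same remark that the $m=1$ case reduces to the sequential single-insert analysis and that the base case $u\le 2$ is trivial because $m\le u$). Your separation of the two constants $C$ and $c$ is slightly cleaner than the paper's, which reuses the same symbol for both.

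Your level-by-level accounting is a valid alternative presentation that the paper does not give explicitly: observing that the recursion has depth $O(\log\log U)$ and that the constraint $\sum_i m_i=m$ propagates so the total batch size at each level is at most $m$ immediately yields $O(m)$ work per level and hence $O(m\log\log U)$ overall. This is arguably more transparent, since it avoids guessing the closed form, but it proves exactly the same bound by the same mechanism (the partition property of the $m_i$) and so is really the same argument unrolled rather than a genuinely different route.
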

\begin{proof}
	%Let the solution be $(C\cdot m\log\log U)$ without loss of generality, where $C$ is a constant.
    We will inductively prove that $w(u,m)=c\cdot m\log\log u$ for some constant $c$.
    %the lemma is correct. %The proof starts by induction on $u$.
	When $m=1$, the recurrence is the same as the one for single element insertion, which solves to $O(\log \log U)$.

    We now consider $m>1$.
	%Base case is validated by picking $C=0$.
    We can easily check that the conclusion is true for $u=1$ or $u=2$.
    Now assume the solution holds for $\sqrt{u}\leq t$, where $t\leq U$, then it holds for $\sqrt{u}\leq t^2$ inductively, since:

	\begin{align*}
		W(u,m)&=c\cdot \sum_{i=0}^{\sqrt{u}}m_i\log\log\sqrt u + c\cdot m\\
		&= c\cdot m (\log\log \sqrt u +1)\\
		&= c\cdot m \log\log u
	\end{align*}

	Picking $u=U$ gives the solution to Recurrence \ref{eqn:vebwork}.
\end{proof}

\begin{lemma}
	%Polynomial ($\log U\log \log U$) is a feasible solution to recurrence \ref{eqn:vebspan}.
%	Recurrence \ref{eqn:vebspan} solves to $O(\log U\log\log U)$.
	Recurrence \ref{eqn:vebspan} solves to $O(\log U)$.
\end{lemma}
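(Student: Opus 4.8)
The plan is to unroll Recurrence \cref{eqn:vebspan} and observe that the additive $O(\log u)$ terms decay geometrically down the recursion. Since the span does not depend on the batch size, I write $S(u)$ for $S(u,\cdot)$. First I would set $u_0=U$ and $u_{i+1}=\sqrt{u_i}$; taking logarithms gives $\log u_{i+1}=\tfrac12\log u_i$, hence $\log u_i = 2^{-i}\log U$. The recursion reaches a constant universe size once $\log u_i=O(1)$, i.e.\ after $O(\log\log U)$ levels, matching the depth of the \veb{} tree.

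Next I would sum the per-level costs. Writing the additive term at level $i$ as $c\log u_i$ for a suitable constant $c$, the total span is
$$S(U)=\sum_{i=0}^{O(\log\log U)} c\log u_i = c\log U\sum_{i\ge 0}\frac{1}{2^i}\le 2c\log U = O(\log U).$$
The key observation is that square-rooting $u$ at each level halves $\log u$, so the per-level costs form a geometric series dominated by its first term $O(\log U)$. Note that this is precisely why the $O(\log u)$ overhead at the root (rather than the recursion depth) governs the span, in contrast to the work recurrence of \cref{eqn:vebwork}.

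There is no genuine obstacle here; the only points requiring care are confirming that the recursion depth is finite so the geometric sum is well-defined, and handling the odd-total-bits boundary case flagged in the proof of \cref{insertTheorem}. In that case the subtree universe size is $\sqrt{2u}$ rather than $\sqrt{u}$, which perturbs $\log u$ by at most an additive constant per level and therefore leaves both the geometric decay and the final $O(\log U)$ bound intact.
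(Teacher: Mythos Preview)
Your proof is correct and follows essentially the same approach as the paper: both exploit that taking square roots halves $\log u$, so the per-level costs form a geometric series summing to $O(\log U)$. The only cosmetic difference is that the paper substitutes $m=\log u$ to rewrite the recurrence as $T(m)=T(m/2)+O(m)$ and then invokes the Master Theorem, whereas you unroll the recursion and sum the geometric series directly; the underlying argument is identical.
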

\begin{proof}
%	Let $u=2^m$, so $m=\log u$. Since $S(u,\cdot)=2S(\sqrt{u},\cdot)+O(\log u)$, we have $S(2^m,\cdot)=2S\left(2^\frac{m}{2},\cdot\right)+O(m)$.
%    Let $T(m,\cdot)=S(2^m,\cdot)$, we have $T(m,\cdot)=2T\left(\frac{m}{2},\cdot\right)+O(m)$.
%    Using Master Theorem, it can be deduced that $T(m,\cdot)=O(m\log m)$. Thus, $S(2^m,\cdot)=O(m\log m)$, and $S(u,\cdot)=O(\log u\log \log u)$.

	Let $u=2^m$, so $m=\log u$. Since $S(u,\cdot)=S(\sqrt{u},\cdot)+O(\log u)$, we have $S(2^m,\cdot)=S\left(2^\frac{m}{2},\cdot\right)+O(m)$.
	Let $T(m,\cdot)=S(2^m,\cdot)$, we have $T(m,\cdot)=T\left(\frac{m}{2},\cdot\right)+O(m)$.
	Using Master Theorem, it can be deduced that $T(m,\cdot)=O(m)$. Thus, $S(2^m,\cdot)=O(m)$, and $S(u,\cdot)=O(\log u)$.

	Picking $u=U$ gives the solution to Recurrence \ref{eqn:vebspan}.
\end{proof}

\subsection{Proof of Thm. \ref{deletetheorem} }
\label{app:deleteproof}

\begin{proof}[Proof of \cref{deletetheorem}]
  %We will show that the work and span of \cref{vebdelete} still satisfy recurrences in Eqn. $\ref{eqn:vebwork}$ and $\ref{eqn:vebspan}$.
  First note that the process of \cref{vebdelete} is almost the same as \cref{vebinsert}, and the span recurrence (Eqn. $\ref{eqn:vebspan}$) still holds.
  Therefore the span of \cref{vebdelete} is also $O(\log U\log\log U)$.
  We then analyze the work.
  %In each recursive call of \textsc{\batchdeletehelper},
  We will show that in each recursive call of \textsc{\batchdeletehelper}, we will spend constant work on each key in $B$,
  and each key in $B$ will be involved in $O(\log\log u)$ recursions.

  We first analyze the process to restore the $\min/\max$ values.
  If we need to update the $\min/\max$ values, we may need to delete a key $k$ sequentially from the current subtree (\cref{line:seqdelete}).
  Note that this costs $O(\log\log u)$ work, and must indicate that one key has been excluded from the batch (in \cref{line:remove}).
  Although this key $B.\mmin$ will not occur in any recursive calls, we will charge the $O(\log \log u)$ cost for sequential deletion to this key as if it is involved in
  $O(\log \log u)$ levels of recursion, and the conclusion holds.
  In function \textsc{SurvivorRedirect}, \cref{line:redirect:find} has $O(\log \log u)$ work. As mentioned, we will charge this cost to the previous sequential deletion.
  The rest part of \textsc{SurvivorRedirect} has work $O(m)$ total work, which is $O(1)$ work per key in $B$ (excluding the one that has been removed in \cref{line:seqdelete}).
  Lastly, we note that any removal of the keys from $B$ must be removing either the minimum or maximum key in $B$.
  Since $B$ is a sorted array, we can do this in constant time by moving the head or tail pointer of the array.

  We then analyze the cost to recursively dealing with high- and low-bits. Note that to construct the survivor mappings for low- and high-bits,
  the two functions \textsc{SurvivorLow} and \textsc{SurvivorHigh} cost $O(m)$ work in total, which is also $O(1)$ per key in $B$ (excluding the possible one that has been removed in \cref{line:seqdelete}). Therefore, in each recursion will spend constant work on each key in $B$.

  We then analyze the recursion depths.
  There are two types of recursive calls: \cref{line:recursionhigh} on the high-bits and \cref{line:recursionlow} ($O(\sqrt{u})$ of them in total) on the low-bits.
  We will carefully charge the work such that each key in $B$ is involved in at most one of the recursive calls.
  For the recursive call on high-bits (\cref{line:recursionhigh}), note that only $O(|H'|)$ high-bits are involved, where $H'$ is the set of high-bits
  which have all keys in their clusters deleted. For any $h\in H'$, we will charge this work on the minimum key in $L[h]$ (i.e., the minimum key with high-bit $h$).
  Therefore, each key $x\in B$ will satisfy one of the following two conditions:
  %On the other hand, for each of the function call on \cref{line:recursionlow}, we will delete

  \begin{enumerate}
    \item $x$ is the minimum value among all keys in $B$ with the same high-bit $h=\high(x)$, and all keys with high-bit $h$ in $\vt$ will be deleted. In this case, $x$ will
    go through the recursive call on \cref{line:recursionhigh}. It will also be involved in the corresponding recursive call on \cref{line:recursionlow},
    but it will be handled by Lines \ref{line:updateminstart}--\ref{line:updateminend}, and no further recursive calls are involved.
    %We will charge this recursive call back to its parent.
    \item Otherwise, $x$ will only go through the corresponding recursive call on \cref{line:recursionlow}.
  \end{enumerate}

  Therefore, if we consider the current recursive call has the universe size $u$, for each key $x\in B$,
  the relevant recursive call of the next level must have universe size $\sqrt{u}$.
  Let $D(u)$ be the number of recursive calls a key $x\in B$ is involved in, we have the recurrence

  $$D(u)=D(\sqrt{u})+O(1)$$

  Solving this recurrence we can get $D(u)=\log\log u$.

  Therefore, in each recursive call of \batchdeletehelper{}, we spend $O(1)$ work on average per key in $B$, and each key in $B$ will be involved in $O(\log\log u)$ recursive calls.
  Combining them we know that the total work of \batchdelete{} itself is $O(m\log\log U)$.
\end{proof}

\begin{algorithm}[t]
	\fontsize{8pt}{9pt}\selectfont
	\caption{The \rangequery{} query for \veb{} tree\label{vebBatchRangeQuery}}
    \SetKwProg{MyFunc}{Function}{}{end}
	\SetKwFor{parForEach}{parallel\_for\_each}{do}{endfor}
    \SetKwFor{inParallel}{in parallel:}{}{}
    \SetKwInOut{Note}{Note}
	\KwIn{A \veb tree $\vt$, a range $[k_L,k_R]$}
	\KwOut{Batch of sorted elements $B$ }
    \Note{For simplicity, here we view the return values of functions $\vebsucc()$ and $\vebpred()$ 
    as just the key of the \veb{} tree, i.e., without the score (\dpvalue{}).
    }
	\vspace{0.5em}
	\SetKw{MIN}{min}
	\SetKw{MAX}{max}
	\SetKw{AND}{and}
    \SetKw{NOT}{not}
	\SetKw{OR}{or}
    \DontPrintSemicolon
	
    \tcp{\rangequery{} returns a batch of sorted keys $B$ in $\vt$ with keys in range $[s,e]$}
    \MyFunc{\upshape{\textsc{RangeQuery($\vt,\kl,\kr$)}}}{
    \label{beginRangeQuery}
        \lIf{$\kl\notin\vt$}{$\kl\gets\vebsucc(\vt,\kl)$\label{line:veb:range:start1}}
        \lIf{$\kr\notin\vt$}{$\kr\gets\vebpred(\vt,\kr)$\label{line:veb:range:start2}}
        $\treenode\gets$\Call{BuildTree}{$\vt,\kl,\kr$}\label{line:range:returntree}\\
        $B\gets$\Call{Flatten}{$\treenode$}\tcp*[f]{Flatten the binary tree into a sorted array}\label{line:range:flatten}\\
        \Return $B$
    }
    \smallskip
    \tcp{\textsc{BuildTree} returns a binary tree containing all keys in $\vt$ in range $[\kl,\kr]$}
    \MyFunc{\upshape{\textsc{BuildTree($\vt,\kl,\kr$)}}}{
        \lIf{$\kl > \kr$}{
            \Return $\texttt{NIL}$
        }
        Let $\treenode$ be a tree node\\
        \lIf{$\kl = \kr$}{
            $\treenode.\val\gets \kl$
        }
        \Else{
            $\midd \gets \vebpred(\vt,\lceil (\kl+\kr)/2\rceil)$\\
            $\treenode.\val{} \gets \midd$\\
            \inParallel{} {
            $\treenode.\mathit{left\_child} \gets$\Call{BuildTree}{$\vt,\kl,\vebpred(\vt,\midd)$}\\
            $\treenode.\mathit{right\_child} \gets$\Call{BuildTree}{$\vt,\vebsucc(\vt,\midd),\kr$}
            }
            %$\treenode.\mathit{left\_child} \gets$\Call{BuildTree}{$\vt,s,\vebpred(\vt,\middle)$}$||$
            %$\treenode.\mathit{right\_child} \gets$\Call{BuildTree}{$\vt,\vebsucc(\vt,\middle),e$}\\
        }
        \Return $\treenode$
    }
    \label{endRangeQuery}
\end{algorithm}

\hide{
\begin{algorithm}
	\fontsize{8pt}{9pt}\selectfont
	\caption{Batch Range Query for vEB tree\label{vebBatchRangeQuery}}
    \SetKwProg{MyFunc}{Function}{}{end}
	\SetKwFor{parForEach}{parallel\_for\_each}{do}{endfor}
	\KwIn{A veb Tree $\vt$, $m$ non-overlapping ranges $\{[l[i],u[i]]|i\in 1..m\}$}
	\KwOut{Batch of sorted elements $B$ }
	\vspace{0.5em}
	\SetKw{MIN}{min}
	\SetKw{MAX}{max}
	\SetKw{AND}{and}
	\SetKw{OR}{or}
	
    \MyFunc{\upshape{\textsc{BatchRangeQuery($\vt,m,l[1...m],u[1...m]$)}}}{
        \parForEach{$i\in 1...m$}{
            $D[i]\leftarrow$\Call{RangeQuery}{$\vt,l[i],u[i]$}
        }
        $B$ = \Call{Combine}{$D[1...m]$}\tcp*[f]{Combine $D[1...m]$ into array $B_{out}$}\\
        \Return $B$
    }
    \smallskip
    \tcp{\textsc{RangeQuery} returns a batch of sorted elements $B$ in $\vt$ with keys in range $[s,e]$}
    \MyFunc{\upshape{\textsc{RangeQuery($\vt,s,e$)}}}{
    \label{beginRangeQuery}
        $TreeNode\leftarrow$\Call{BuildTree}{$\vt,s,e$}\\
        \tcp{Flatten binary tree into sorted array}
        $B.insert$(\Call{Flatten}{$TreeNode$ })\\
        \Return $B$
    }
    \smallskip
    \tcp{\textsc{BuildTree} returns a binary tree which contains a batch of elements in $\vt$ with keys in range $[s,e]$}
    \MyFunc{\upshape{\textsc{BuildTree($\vt,s,e$)}}}{
        \lIf{$s > e$}{
            \Return $\texttt{NIL}$
        }
        \lIf{$s = e$}{
            $TreeNode.val\leftarrow s$
        }
        \Else{
            $mid \leftarrow \vebpred(\vt,\lceil (s+e)/2\rceil)$\\
            $TreeNode.val \leftarrow mid$\\
            $TreeNode.l \leftarrow$\Call{BuildTree}{$\vt,s,\vebpred(\vt,mid)$}$||$
            $TreeNode.r \leftarrow$\Call{BuildTree}{$\vt,\vebsucc(\vt,mid),e$}\\
        }
        \Return $TreeNode$
    }
    \label{endRangeQuery}
\end{algorithm}
} 
\section{Range Query for \veb{} Trees}
\label{app:range}

\cref{vebBatchRangeQuery} finds a batch of sorted keys $B \subseteq \univ$ in vEB tree $\vt$ with keys in range $[k_L,k_R]$. The process is similar to binary search except that we need to store the result of each recursive call in a binary tree, call it a \defn{result tree}, and flatten it in the end.
%This process is similar to the algorithm we use in \cref{algo:lis} to collect each frontier after we search them in the tournament tree.
%\yihan{[...] describe result tree}.
We first set $\kl$ and $\kr$ as the first and last key in the range and start the binary search to build the result tree in function \textsc{BuildTree}.
In the base cases, if $k_L$ equals to $k_R$, the algorithm simply returns $k_L$ in a tree node.
If $k_L$ is greater than $k_R$, return an empty tree node \texttt{NIL}.
Otherwise, we first find the predecessor of the middle key $\midd=\vebpred(\vt,\lceil (\kl+\kr)/2\rceil)$ and include it in the result.
To do this, we create a tree node $\tau$ and store the value of $\midd$, and then recursively deal with the two ranges $[\kl,\midd)$ and $[\midd,\kr)$ in parallel, and store them as the left and right children of $\tau$.
Note that to handle the two sub-ranges, we need to set the ranges as $[k_L,\vebpred(\vt,\midd)]$ and $[\vt,\vebsucc(\vt,\midd),k_R]$,
to ensure that the endpoints must present in $\vt$.
Finally, the return value from \cref{line:range:returntree} is a tree node, which is a root
to a tree connecting all keys in the queried range.
We can simply flatten the tree to an array as the output (\cref{line:range:flatten}).

Next, we analyze the cost of the \rangequery{} function. We present the result in the following theorem.

\begin{theorem}\label{thm:veb:rangequery}
	Given a range $[k_L,k_R]$, finding a batch of sorted keys in a \veb{} tree with keys in this range can be finished in $\Theta((1+m)\log\log U)$ work and $\Theta(\log U\log\log U)$ span, where $m$ is the size of output array and $U=|\univ|$ is the universe size.
\end{theorem}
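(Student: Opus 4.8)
The plan is to analyze the recursion tree induced by \textsc{BuildTree} in \cref{vebBatchRangeQuery}, since \rangequery{} reduces to one call of \textsc{BuildTree} followed by a \textsc{Flatten}. First I would fix correctness and the node count: each non-\texttt{NIL} invocation records exactly one key (either $\midd$, or the shared endpoint when $\kl=\kr$), and its two recursive calls recurse on $[\kl,\vebpred(\vt,\midd)]$ and $[\vebsucc(\vt,\midd),\kr]$, which partition the remaining in-range keys into those strictly below and strictly above $\midd$. By induction this reports every key of $\vt$ in $[\kl,\kr]$ exactly once, in sorted order after flattening, so the number of non-\texttt{NIL} nodes is precisely the output size $m$.

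Next I would bound the total number of invocations. Every node with $\kl<\kr$ is internal and spawns exactly two recursive calls, while leaf nodes ($\kl=\kr$) and \texttt{NIL} nodes spawn none. Writing $I$ for the number of internal nodes and $N$ for the number of \texttt{NIL} nodes, counting invocations two ways gives $1+2I=m+N$, hence $N=1+2I-m\le 1+m$ because $I\le m$. Thus the whole recursion tree has $O(m)$ nodes. For the work, a \texttt{NIL} or leaf invocation costs $O(1)$, whereas each internal invocation performs a constant number of $\vebpred$/$\vebsucc$ queries (one for $\midd$, and one each for the two recursive endpoints), each costing $O(\log\log U)$. Summing over the $O(m)$ nodes, and adding the $O(\log\log U)$ cost of normalizing $\kl,\kr$ at the start of \rangequery{} and the $O(m)$ cost of \textsc{Flatten}, gives the claimed $O((1+m)\log\log U)$ work; the matching lower bound is immediate, since the algorithm must emit $m$ keys and run at least one $\Theta(\log\log U)$ predecessor query.

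For the span, the crucial observation is that although the split can be arbitrarily unbalanced in the \emph{number} of elements, it always roughly halves the \emph{key range}: the left subproblem lies below $\midd$ and the right above it, with $\midd\approx\lceil(\kl+\kr)/2\rceil$, so each level at least halves $\kr-\kl$ up to an additive constant. The recursion therefore has depth $O(\log(\kr-\kl))=O(\log U)$, and since each level along a root-to-leaf path contributes $O(\log\log U)$ span from its vEB queries, \textsc{BuildTree} has span $O(\log U\log\log U)$. The result tree also has depth $O(\log U)$, so \textsc{Flatten}---realized as a parallel bottom-up subtree-size computation followed by an offset-based scatter---runs in $O(\log U)$ span and $O(m)$ work, which is subsumed. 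Combining these yields the stated $\Theta(\log U\log\log U)$ span.

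The main obstacle is the work analysis rather than the span: the danger is that the per-invocation $O(\log\log U)$ cost could be multiplied by the $O(\log U)$ recursion depth. Avoiding this requires exactly the two amortization facts above---that the \texttt{NIL} invocations number $O(m)$ (not $O(m\log U)$), and that each vEB endpoint computation is charged once to its generating node rather than once per level---so that each of the $m$ reported keys absorbs only a constant number of $\log\log U$-cost operations, independent of how deep or skewed the split becomes. Once the key-range-halving invariant is isolated, the span side is comparatively routine.
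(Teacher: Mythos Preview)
Your proposal is correct and follows essentially the same approach as the paper: bound the number of \textsc{BuildTree} invocations by $O(m)$ via the result-tree node count, charge $O(\log\log U)$ per invocation, and bound the recursion depth by $O(\log U)$ via key-range halving. Your treatment is in fact more explicit than the paper's (which simply asserts both bounds); one small imprecision is that $\midd$ itself need not be close to $\lceil(\kl+\kr)/2\rceil$, but the conclusion still holds because $\vebsucc(\vt,\midd)\ge\lceil(\kl+\kr)/2\rceil$ whenever $\midd<\lceil(\kl+\kr)/2\rceil$, so both subranges are at most half.
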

\begin{proof}
We first analyze the span. The cost of calling the predecessor and successor in \veb{} tree is $O(\log{\log{U}})$.
Since \textsc{BuildTree} uses divide-and-conquer (deal with the two subproblems in parallel) and binary search the range to find the subproblems,
the recursion depth for \textsc{BuildTree} is $O(\log U)$. Therefore the span of the algorithm is $O(\log U\log\log U)$

We then analyze the work. In each recursive call of \textsc{BuildTree}, we will call $\vebpred{}$ and $\vebsucc{}$ a constant number of times.
Therefore, we just need to analyze the number of invocations to the \textsc{BuildTree} function during the entire \rangequery{} algorithm.
Note that every recursive all at least create one tree node, either a tree node with a value stored in it (a value included in the queried range),
or a \texttt{NIL} node, which should be an external node of the result tree.
Considering that the result tree has size $m$, the number of invocations to \textsc{BuildTree} should be $O(m)$.
Therefore, the total work of the algorithm is $O(m\log\log U)$.
Considering the cost of calling the predecessor and successor in \cref{line:veb:range:start1,line:veb:range:start2} is $O(\log{\log{U}})$,
the total work is $O((1+m)\log\log U)$.
\hide{we can derive the following recurrence of the span of Range Query:
\begin{align}
	S(n)=S(n/2)+\Theta(\log{\log{U}})
\end{align}

Since the largest possible value of $n$ is $U$, we can derive that the span of \cref{vebBatchRangeQuery} is $\Theta(\log{U}\log{\log{U}})$. Moreover, because at least one node is inserted into the binary tree every time we call the node's predecessor in \textsc{BuildTree($\vt,\kl,\kr$)}, the work of \cref{vebBatchRangeQuery} is bounded by $\Theta(m'\log{\log{U}})$, where $m'$ is the size of the output.
}
\end{proof}

\begin{algorithm}[t]
	\fontsize{8pt}{9pt}\selectfont
	\caption{The \domby{} Algorithm for vEB tree\label{vebdominatedby}}
    \SetKwProg{MyFunc}{Function}{}{end}
	\SetKwFor{parForEach}{parallel-foreach}{do}{endfor}
    \SetKwInput{Note}{Note}
	\KwIn{Batch of sorted keys $\bin$, A \veb{} Tree $\vt$}
	\KwOut{Batch of sorted keys $\bout$ \dominated{} by $\bin$}
    \Note{Recall that each point in $\bin$ is a point $\langle x_i,y_i,\mathdp_i \rangle$, ordered by $y_i$.
    The \veb{} tree $\vt$ is a \monoveb{} tree keyed on $y$ coordinates of the points, and the $\mathdp$ values in $\vt$ are increasing. 
    For simplicity, here we view the return values of functions $\vebsucc()$ and $\vebpred()$
    as just the key of the \monoveb{} tree, i.e., without the score (\dpvalue{}). 
    Recall that $\mathdp[s]$ is the \dpvalue{} of object with index $s$, which is also the score of the key $s$ in \monoveb{} tree. 
    }
	\vspace{0.5em}
	\SetKw{MIN}{min}
	\SetKw{MAX}{max}
	\SetKw{AND}{and}
	\SetKw{OR}{or}
    \DontPrintSemicolon
	
    \MyFunc{\upshape{\textsc{\domby($\bin,\vt$)}}}{
        $b\leftarrow |\bin|$\\
        $\bin[b+1]\gets U$\\
        \parForEach{$i\gets 1$ to $b$}{
            $s\gets\vebsucc(\vt,\bin[i].y)$\\
            $e\gets\vebpred(\vt,\bin[i+1].y)$\\
            $e'\gets$\Call{FindIndex}{$\vt, \bin[i].\mathdp{},s,e$}\label{line:findupperbound}\\
            $D[i]\gets$\Call{Range}{$\vt,s,e'$}
        }
        %$s[m]\leftarrow\vebsucc(\vt,B_{in}[m].x)$\\
        %$e[m]\leftarrow$  \Call{FindIndex}{$\vt, \bin[m].dp,s,U$}\\
        %$B_{out}$ = \Call{BatchRangeQuery}{$\vt,m,s[1...m],e[1...m]$}\\
        %\parForEach{$i\in 1...m$}{
            %$D[i]\leftarrow$\Call{RangeQuery}{$\vt,s[i],e[i]$}
        %}
        %$\bout$ = \Call{Combine}{$D[1...m]$}\tcp*[f]{Combine $D[1...m]$ into array $\bout$}\\
        $\bout$ = {$\bigcup_{i=1}^{b} D[i]$}\tcp*[f]{Combine $D[1...b]$ into array $\bout$}\\
        \Return $\bout$
    }
    \smallskip
    \tcp{\textsc{FindIndex} returns the index of the last key in range $[s,e]$ whose $\mathdp{}$ value is smaller than $\mathdp{}^{*}$}
    \MyFunc{\upshape{\textsc{FindIndex($\vt,\mathdp^{*},s,e$)}}}{
        \lIf{$\mathdp[s]>\mathdp^{*}$}{
            \Return $\texttt{NIL}$
        }
        \lIf{$s = e$}{
            \Return $s$
        }
        \label{beginSeqFindIndex}
        \For{$i\gets 1$ to $\log{U}$\label{line:domby:chaselogu}}{
            %$ans\leftarrow s$,
            $s\leftarrow\vebsucc(\vt,s)$\\
            \lIf{$\mathdp[s]>\mathdp^{*}$}{
                \Return $\vebpred(\vt,s)$\label{line:domby:earlyreturn1}
            }
            \lIf{$s = e$}{
                \Return $s$\label{line:domby:earlyreturn2}
            }
        }
        \label{endSeqFindIndex}
        %$ans\leftarrow$\Call{BinarySearch}{$\vt,dp,s,e$}\\
        \Return {\Call{BinarySearch}{$\vt,\mathdp,s,e$}}
    }
    \smallskip
    \tcp{\textsc{BinarySearch} uses binary search to find the index of the last key in range $[s,e]$ whose $\mathdp{}$ value is smaller than $\mathdp{}^{*}$}
    \MyFunc{\upshape{\textsc{BinarySearch($\vt,dp^{*},s,e$)}}}{
        \lIf{$s = e$}{
            \Return $s$
        }
        $\midd \leftarrow \vebpred(\vt,\lceil (s+e)/2\rceil)$\\
        \If{$\mathdp[\midd]\leq \mathdp^{*}$ }{
            \Return{\Call{BinarySearch}{$\vt,\mathdp^{*},\midd,e$}}
        }
        \Else{
            \Return{\Call{BinarySearch}{$\vt,\mathdp^{*},s,\vebpred(\vt,\midd)$}}
        }
        %\Return $ans$
    }
\end{algorithm} 

\SetKw{MIN}{min}
\section{The \domby{} Function for \rangeveb{} Trees}\label{app:dominatedby}
\hide{
\begin{theorem}\label{BatchDominatedByTheorem}
	Finding a batch of sorted elements dominated by another batch of sorted elements in a \veb{} tree can be finished in $\Theta(m'\log\log U)$ work and $\Theta(\log m\log U\log\log U)$ span, where $m$ is the size of input batch, $m'$ is the size of output batch and $U=|\univ|$ is the universe size.
\end{theorem}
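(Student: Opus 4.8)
The plan is to first establish correctness of \cref{vebdominatedby} and then bound its work and span, with the output-sensitive amortization inside \textsc{FindIndex} being the crux.

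First I would prove that \domby{} reports exactly the keys of the \monoveb{} tree $\vt$ that are \dominated{} by some point of $\bin$. The key structural fact is \emph{double monotonicity}: both $\bin$ (a refined staircase) and $\vt$ (a \monoveb{} tree) have scores $\mathdp$ that increase with the $y$-coordinate. Fix a key $\kappa\in\vt$ and let $\bin[i]$ be the predecessor of $\kappa$ with respect to $y$ in $\bin$. Among all batch points lying below $\kappa$ in $y$, the monotonicity of $\bin$ forces $\bin[i]$ to carry the largest score; hence $\kappa$ is \dominated{} by \emph{some} point of $\bin$ iff $\bin[i].\mathdp\ge\kappa.\mathdp$. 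The loop assigns to each $\bin[i]$ the disjoint $y$-range $[s,e]$ of keys strictly between $\bin[i].y$ and $\bin[i+1].y$ (computed by $\vebsucc$/$\vebpred$ against consecutive batch points, using the sentinel $\bin[b+1]=U$), and by monotonicity of $\vt$ the \dominated{} keys inside that range form a contiguous prefix $[s,e']$, which is precisely what \textsc{FindIndex} locates. Since the ranges partition the keys of $\vt$, each \dominated{} key is emitted exactly once and no other key is emitted.

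Next I would bound the work, showing each subproblem $i$ costs $O((1+d_i)\log\log U)$, where $d_i$ is the number of keys it reports. The framing $\vebsucc$/$\vebpred$ calls cost $O(\log\log U)$. The subtle part is \textsc{FindIndex}: its sequential scan advances the successor at most $\log U$ times at $O(\log\log U)$ each, and splits into two cases. If the scan halts after $t\le\log U$ steps, then $t=\Theta(d_i)$ \dominated{} keys have been confirmed, and the $O(t\log\log U)$ scan cost is charged to those keys. If the scan exhausts all $\log U$ steps without halting, it has certified $d_i\ge\log U$ \dominated{} keys, so the fall-back \textsc{BinarySearch}, which costs $O(\log U\log\log U)$, amounts to only $O(\log\log U)$ per reported key. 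The concluding \textsc{Range} call costs $O((1+d_i)\log\log U)$ by \cref{thm:veb:rangequery}. Because the ranges are disjoint, the $d_i$ sum to the output size $m'$ and these charges never overlap, giving total work $O((b+m')\log\log U)$ with $b=|\bin|$; the residual $O(b\log\log U)$ term from empty subproblems is absorbed into the batch-insertion accounting of $\bin$ in the caller (\cref{algo:rangeveb}), leaving the stated $O(m'\log\log U)$.

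For the span, the parallel-foreach forks $b$ subproblems in $O(\log m)$ span; within a subproblem the cost is dominated by the sequential scan and the binary search in \textsc{FindIndex} (each $O(\log U\log\log U)$ span) and by the range query ($O(\log U\log\log U)$ span via \cref{thm:veb:rangequery}), while concatenating the $D[i]$ through prefix sums adds $O(\log m)$. This yields span $O(\log m+\log U\log\log U)$, comfortably within the claimed bound. I expect the main obstacle to be the \textsc{FindIndex} amortization: justifying the doubling-scan-then-binary-search scheme and, crucially, arguing that the per-subproblem charges are \emph{disjoint} across the partition so that the work is $O(m'\log\log U)$ rather than $O(m'\log U\log\log U)$, together with the bookkeeping needed to discharge the $\Omega(b\log\log U)$ overhead of empty-output subproblems against the insertion cost in the surrounding algorithm.
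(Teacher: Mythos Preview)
Your approach is essentially the same as the paper's: the same two-case amortization of \textsc{FindIndex} (scan halts within $\log U$ steps versus falls through to \textsc{BinarySearch}), the same per-subproblem bound $O((1+d_i)\log\log U)$, and the same summation over disjoint ranges to $O((|\bin|+|\bout|)\log\log U)$. You additionally supply a correctness argument via double monotonicity, which the paper omits.

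The one substantive divergence is how the residual $O(|\bin|\log\log U)$ term is handled. The paper does \emph{not} discharge it to the caller as you propose; instead it simply restates the theorem with $m=|\bin|+|\bout|$ so that the input-batch overhead is part of the bound, and the purely output-sensitive formulation you were given is abandoned. Your idea of charging the empty-subproblem overhead to the subsequent \batchinsert{} of $\bin$ in \cref{algo:rangeveb} is a legitimate amortization across the surrounding algorithm, but it proves a statement about the combined cost of \domby{} plus \batchinsert{}, not about \domby{} in isolation---so strictly speaking it does not establish the theorem as stated. This is precisely the ``main obstacle'' you flagged, and the paper resolves it by weakening the claim rather than by a cross-operation charging argument. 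On span, both you and the paper arrive at $O(\log U\log\log U)$ per subproblem; your added $O(\log m)$ for the fork/combine is correct and the paper silently drops it (it is dominated anyway), while the multiplicative $\log m$ factor in the stated bound is looser than what either analysis actually yields.
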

}

\begin{figure}[t]
	\centering
	\includegraphics[width=.85\columnwidth]{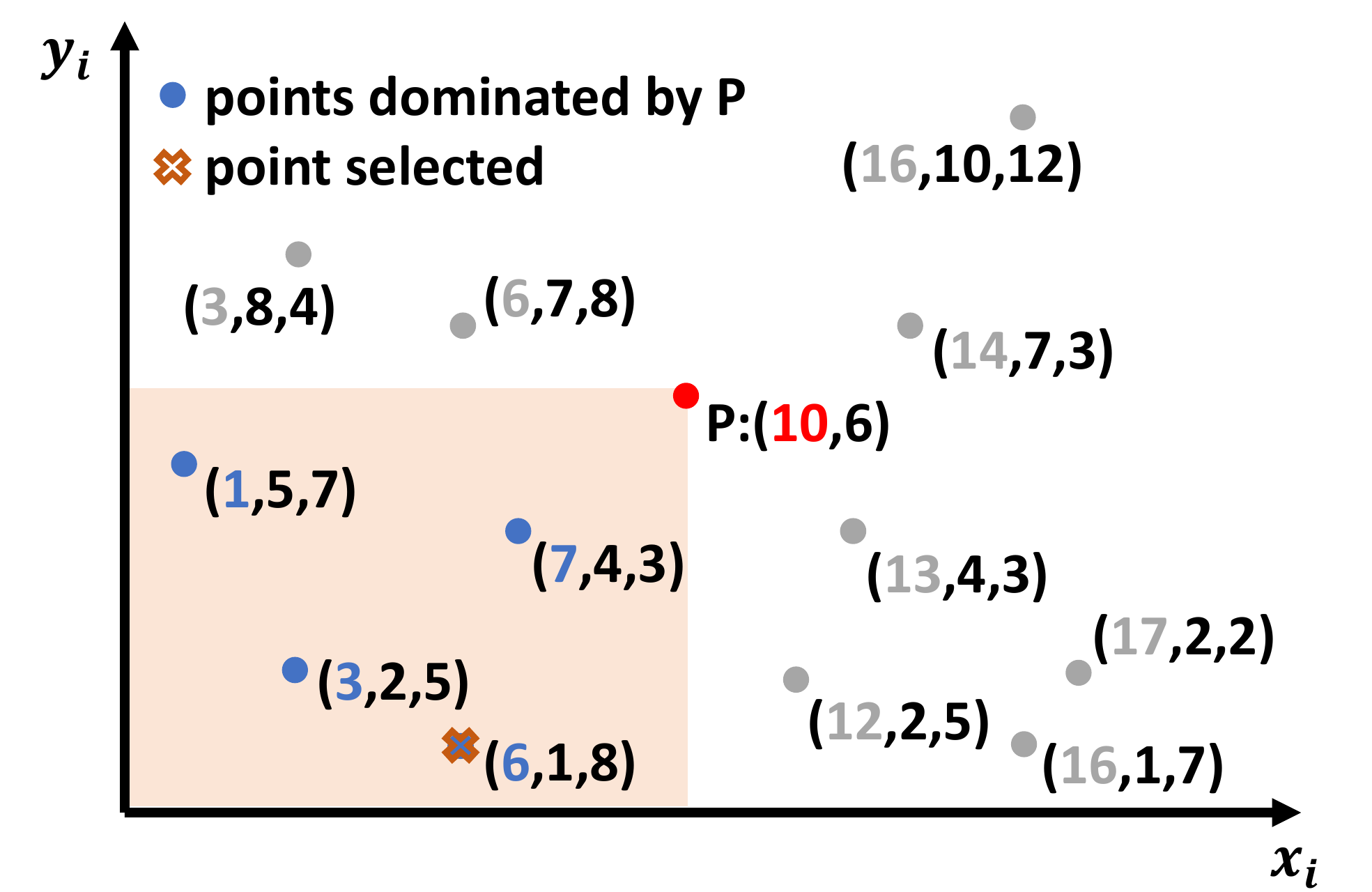}
	\caption{\small
		\textbf{An illustration of \dommax.} The red dot $P$ represents the query point. The blue dots represent the set of points within the query range of \dommax{}. The point noted by ``x'' represents the point with the highest score (\dpvalue{}) returned by this query. All points except $P$ are denoted as $\langle x_i,y_i,\mathdp_i\rangle$.
	}\label{fig:dominantmax}
\vspace{-.5em}
\end{figure}
\begin{figure*}[t]
\begin{minipage}{.85\columnwidth}
	\centering
	\includegraphics[width=\columnwidth]{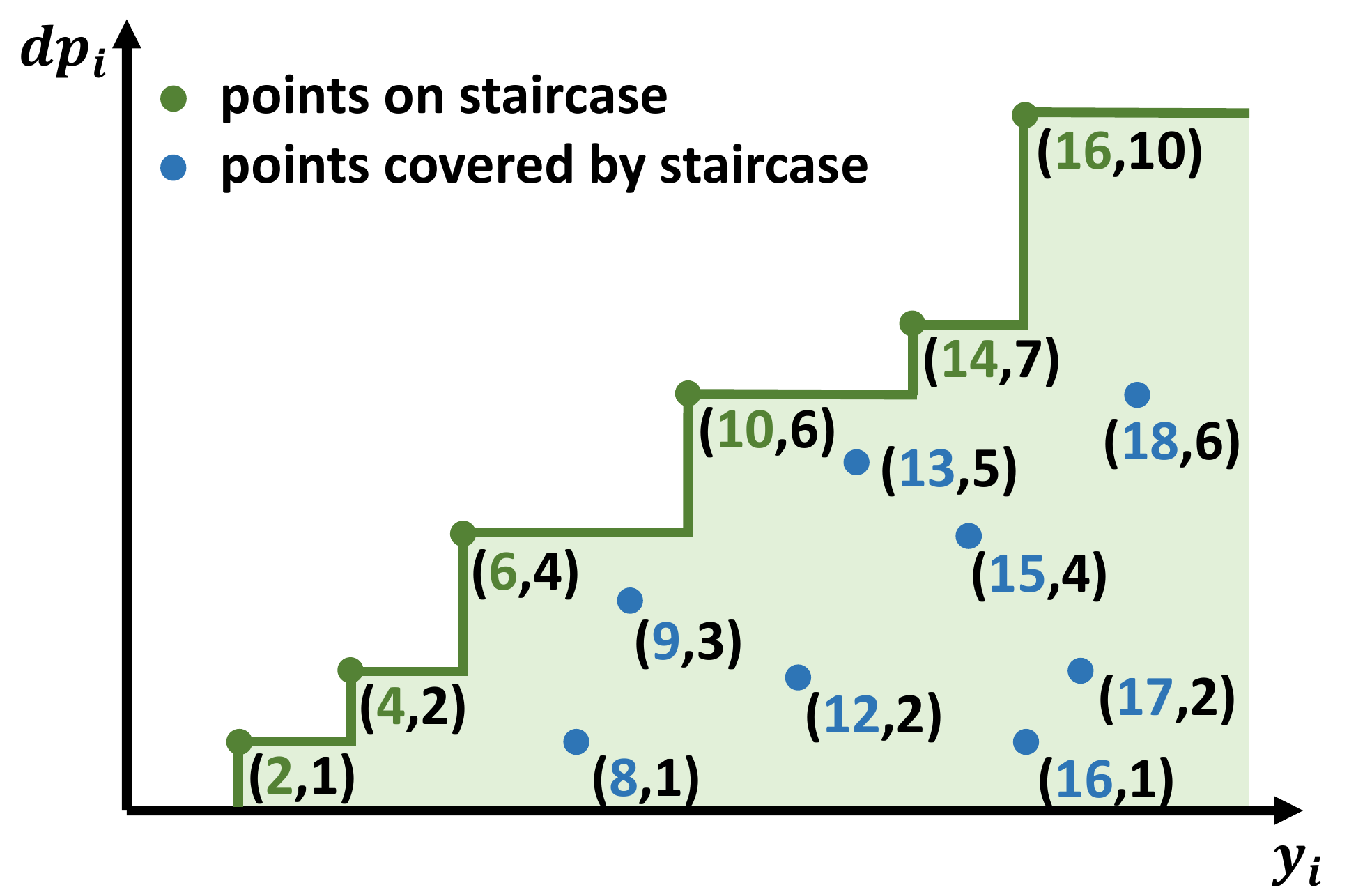}
	\caption{\small
		\textbf{An illustration of  staircase.} The green dots represent the set of points on staircase, which will be maintained by a \monoveb{} tree. The blue dots represent the set of points covered by the staircase, and will not be in the \monoveb{} tree.
	}\label{fig:staircase}
\end{minipage}\hfill
\begin{minipage}{1.15\columnwidth}
	\centering
	\includegraphics[width=.765\columnwidth]{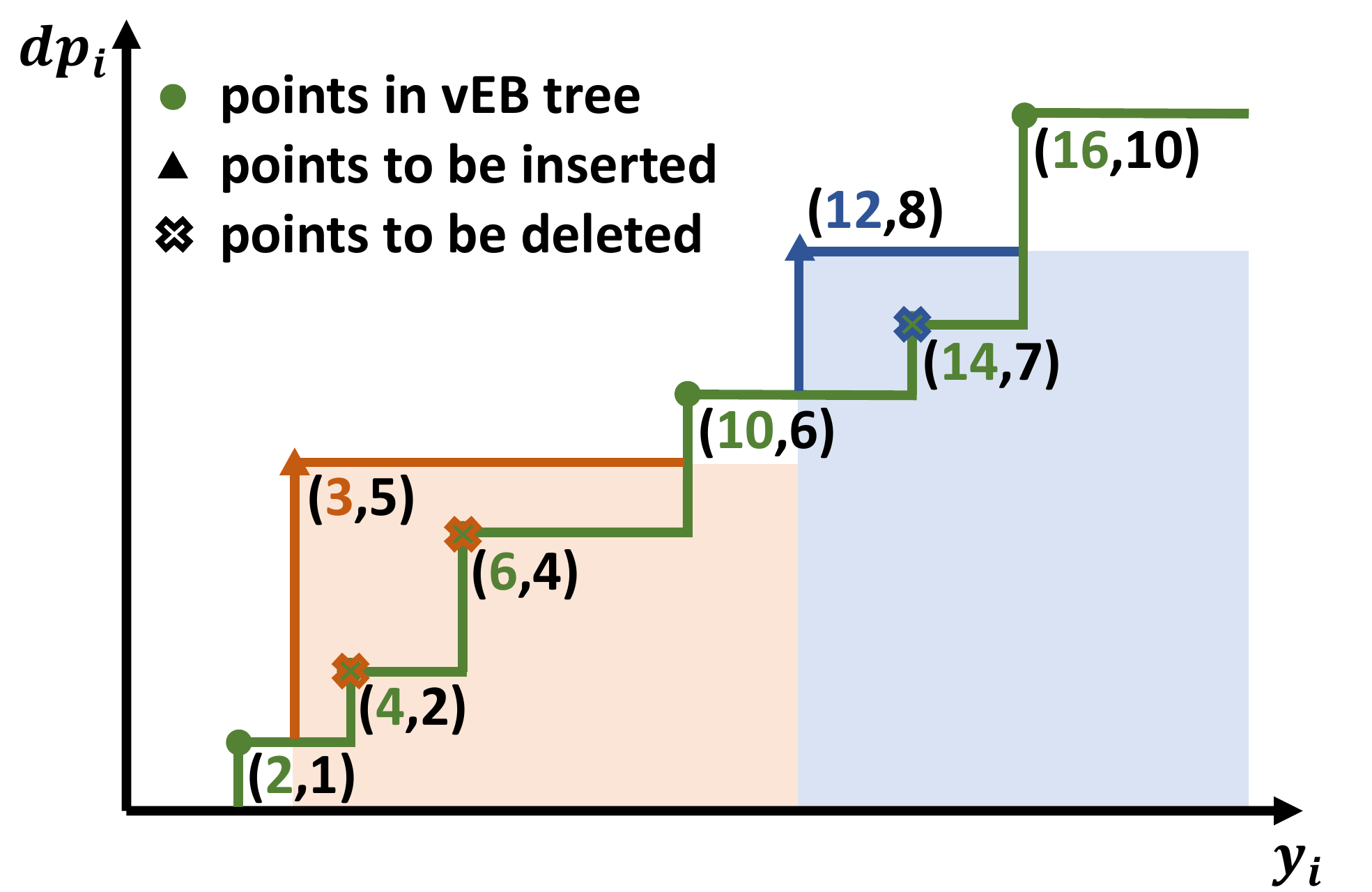}
	\caption{\small
		\textbf{An illustration of  \cref{vebdominatedby}.} The green dots represent the set of points $S_v$ in the \veb{} tree $\vt$ and the lines connecting them represents the staircase of $S_v$. The triangle points represent the points that will be inserted.  The shaded regions correspond to the points on the staircase that will be deleted from the \monoveb{} tree.}\label{fig:coveredby}
\end{minipage}
\vspace{-.5em}
\end{figure*}
Now we describe the \domby{} function. We will implement it using the \rangequery{} function introduced in \cref{app:range}.
We present the pseudocode in \cref{vebdominatedby}.
The function \textsc{\domby($\bin,\vt$)} finds a batch of sorted elements $\bout\subseteq \vt$, which are \dominated{} by a batch of sorted points $\bin\subseteq \univ$.
Recall that each point $\langle x_i, y_i, \mathdp_i\rangle$ is a triple. The \veb{} tree $\vt$ is a \monoveb{} tree. The points in $\vt$ are keyed on $y_i$.
For two points $p_1=\langle x_1,y_1,\mathdp_1\rangle$ and $p_2=\langle x_2,y_2,\mathdp_2\rangle$, we say $p_1$ \defn{\dominates} $p_2$
if $y_1<y_2$ and $\mathdp_1\geq \mathdp_2$.
%For a sequence of pairs $S$, the \defn{staircase} of $S$ is the maximal subset $S'\subseteq S$ such that for any $p\in S'$, $p$ is not dominated by any points in $S$.
The \monoveb{} tree maintains a set of points that do not \dominate{} each other. In other words, with increasing $y$ values, their \dpvalue{s} must also be increasing.

To find all points \dominated{} by the points in $\bin$, we first observe that each point in $\bin$ \dominates{} a set of consecutive points in $\vt$.
Take the $i$-th element $\bin[i]$ as an example, it \dominates{} all points in $\vt$ such that they have keys ($y$ coordinates) larger than $\bin[i].y$,
but have \dpvalue{s} smaller than or equal to $\bin[i].\mathdp$. Since \dpvalue{s} in $\vt$ are monotonically increasing with the key ($y$ coordinates),
such points \dominated{} by $\bin[i]$ must be a range of points. Such range (if any) starts with the successor $s$ of $\bin[i].y$ (if the \dpvalue{} of $s$ is no larger than $\bin[i].\mathdp$).
Also, for any point in $\vt$ that is \dominated{} by more than one points in $\bin$, we only need to consider it once.
Therefore, when finding the points \dominated{} by $\bin[i]$, we can just consider the range from $\bin[i]$ to $\bin[i+1]$.
More precisely, we should consider range $[s,e]$, where $s=\vebsucc(\vt,\bin[i].y)$, and $e=\vebpred(\vt,\bin[i+1].y)$.
%$[\vebsucc(\vt,\bin[i].y),\vebpred(\vt,\bin[i+1].y)]$.
For the last point in $\bin$, the upper bound of the range can be set to $U$.
This gives an initial range of points that may be \dominated{} by $\bin[i]$.
Note that the lower bound $s$ of the range is tight - either the range starts from $s$ (if the \dpvalue{} of $s$ is no larger than $\bin[i].\mathdp$), or $\bout$ is empty.
We only need to find a tight upper bound $e'\le e$, such that $e'$ is the last point in $\vt$ that is \dominated{} by $\bin[i]$.
%Because the ranges that each element in $B_{in}$ needs to find are non-overlapping now, the corresponding elements dominated by each element in $B_{in}$ can be found in parallel.

For each point in $\bin[i]$, the range that it is responsible for searching is independent with other ranges. Therefore, we can deal with all such ranges in parallel.
For $\bin[i]$, we first use function \textsc{FindIndex} to find the tight upper bound $e'$,
which is the last point with \dpvalue{} smaller than $\bin[i].\mathdp$.
In particular, \textsc{FindIndex}$(\vt,\mathdp^*,s,e)$ means to search the key range $[s,e]$ in \monoveb{} tree $\vt$,
and report the last point in this range with \dpvalue{} no larger then $\mathdp^{*}$.
Note that the $\mathdp$ values in the \monoveb{} tree $\vt$ is monotonically increasing, we can use an binary-search-based algorithm to find $e'$.
Instead of directly starting a binary search, this function will first repeatedly apply $\vebsucc$ on the starting point $s$ for $\log U$ times. %, until a point with \dpvalue{} larger than $\mathdp^*$ is found.
%We apply \vebsucc{} first for $\log U$
We do this to guarantee work-efficiency of the algorithm (see discussions later).
We stop this process of applying $\vebsucc{}$ to $s$ when a point with \dpvalue{} larger than $\mathdp^*$ is found, and the previous point must be the last point with \dpvalue{} smaller than $\mathdp^{*}$.
Another possible case is that when we call $\vebsucc$, we reach the upper bound $e$. In this case, $e$ is the last point with \dpvalue{} smaller than $\mathdp^{*}$, and we can return $e$.
If after chasing $\vebsucc{}$ for $\log U$ times, the \dpvalue{} of $s$ is still smaller than $\mathdp^{*}$, we can start a regular binary search to find $e'$, which is presented as function \textsc{BinarySearch}.
Finally, the function \textsc{FindInterval} will give the tight upper bound of the range of points \dominated{} by $\bin[i]$. We will store this return value as $e'$ (\cref{line:findupperbound}).

Finally, to get the list of points \dominated{} by $\bin[i]$, we can directly use the upper- and lower-bounds $s$ and $e'$, and call the \rangequery{} function to collect them in list $D[i]$. The final result will be obtained by combine all such lists $D[1..|\bin|]$.
\cref{fig:coveredby} present an example of the \domby{} function.

We now analyze the cost of \domby{}.

\begin{theorem}
  The \domby{}$(\bin, \vt)$ function has $O(m\log \log U)$ work and $O(\log U\log \log U)$ span, where $U$ is the universe size of the \veb{} tree $\vt$, and $m$ is the total size of the input and output batch $|\bin|+|\bout|$, i.e., the total number of points in the input $\bin$ and those \dominated{} by any point in $\bin$.
\end{theorem}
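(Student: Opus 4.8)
The plan is to decompose the cost of \domby{} into three parts and bound each separately: (a) the constant number of $\vebsucc{}/\vebpred{}$ calls made per input point $\bin[i]$ to set up the initial search interval $[s,e]$; (b) the cost of the $\textsc{FindIndex}$ calls that locate the tight upper bound $e'$ for each iteration; and (c) the cost of the $\rangequery{}$ calls that collect the \dominated{} points. The first structural fact I would establish is that distinct iterations search disjoint key ranges: iteration $i$ only examines keys strictly between $\bin[i].y$ and $\bin[i+1].y$ (with $\bin[b+1].y$ capped at $U$), so every key of $\vt$ is reported in at most one list $D[i]$, and hence $\sum_i |D[i]| = |\bout|$. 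Since the $y$-coordinates are distinct indices, there is no overlap at the endpoints.

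Parts (a) and (c) are routine. Each of the $b=|\bin|$ iterations issues $O(1)$ predecessor/successor calls, giving $O(b\log\log U)$ work and $O(\log\log U)$ span for (a). For (c), \cref{thm:veb:rangequery} states that the $i$-th range query costs $O((1+|D[i]|)\log\log U)$ work and $O(\log U\log\log U)$ span; summing over iterations and using $\sum_i|D[i]|=|\bout|$ yields $O((b+|\bout|)\log\log U)=O(m\log\log U)$ total work for (c).

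The main obstacle is bounding the work of $\textsc{FindIndex}$, because a naive count permits up to $\log U$ successor steps per call, i.e. $O(b\log U\log\log U)$ overall, which is too large. The argument I would give is an amortization keyed to the $\log U$ threshold that triggers the switch from linear successor-chasing to binary search. The key observation is that every key $s'$ visited by the chase with $\mathdp[s']\le \mathdp^{*}$ satisfies $s'>\bin[i].y$ and is therefore \dominated{} by $\bin[i]$ and reported in $D[i]$. I then split into two cases. If the chase returns within the loop, after $t<\log U$ steps, then $t=O(1+|D[i]|)$, so the chase costs $O((1+|D[i]|)\log\log U)$, charged to the output of this iteration plus the single input point $\bin[i]$. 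If instead the chase exhausts all $\log U$ steps without returning, then every one of the $\log U$ visited keys had $\mathdp\le \mathdp^{*}$, so $|D[i]|\ge \log U$; in this case both the $O(\log U\log\log U)$ chase cost and the subsequent $\textsc{BinarySearch}$ (recursion depth $O(\log U)$ with an $O(\log\log U)$ predecessor query per level, hence $O(\log U\log\log U)$ work) are charged to the $\ge \log U$ output points of this iteration, giving $O(\log\log U)$ amortized work per output point. Summing the two cases, the total $\textsc{FindIndex}$ work is $O((b+|\bout|)\log\log U)=O(m\log\log U)$, so the overall work is $O(m\log\log U)$.

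Finally, for span: the $b$ iterations are independent and run in parallel. Within a single iteration, $\textsc{FindIndex}$ has span $O(\log U\log\log U)$, since the successor chase is an inherently sequential loop of at most $\log U$ steps each costing $O(\log\log U)$, and $\textsc{BinarySearch}$ has recursion depth $O(\log U)$ with $O(\log\log U)$ per level; the range query contributes $O(\log U\log\log U)$ span by \cref{thm:veb:rangequery}. Merging the per-iteration lists $D[1..b]$ into $\bout$ requires a prefix-sum over the output offsets, adding only $O(\log b)$ span. Hence the total span is $O(\log U\log\log U)$, which completes the proof.
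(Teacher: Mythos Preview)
Your proposal is correct and follows essentially the same approach as the paper: decompose into the setup predecessor/successor calls, the \textsc{FindIndex} cost, and the \rangequery{} cost, then bound \textsc{FindIndex} by the same two-case amortization (early return charges to the $|D[i]|$ output points visited; full chase plus binary search only occurs when $|D[i]|\ge \log U$ and is charged to those output points). Your write-up is slightly more explicit than the paper's about the disjointness of the per-iteration search intervals (to justify $\sum_i |D[i]|=|\bout|$) and about the final combining step, but the argument is the same.
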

\begin{proof}
  We start with the analysis of the work. We will show that, when processing $\bin[i]$,
  the work of the \textsc{FindIndex} function and the corresponding \rangequery{} function is $O((1+m')\log\log U)$,
  where $m'=|D[i]|$ is number of points returned by the \rangequery{} function.

  From \cref{thm:veb:rangequery}, we know that \rangequery{} has $O(m'\log\log U)$ work, so we will focus on analyzing the work of \textsc{FindIndex}.
  We analyze two cases.
  \begin{itemize}
    \item $m'\le \log U$. In this case, \textsc{FindIndex} will return from \cref{line:domby:earlyreturn1} or \cref{line:domby:earlyreturn2}. In this case, we have called $\vebsucc{}$
    $m'$ times, and the total work is $O(m'\log\log U)$.
    \item $m'>\log U$. In this case, \textsc{FindIndex} will first call $\vebsucc{}$ for $\log U< m'$ times, and the total work is $O(\log U\log\log U)=O(m'\log\log U)$.
    The algorithm then performs a binary search. Note that each step in \textsc{BinarySearch} needs to call $\vebsucc{}$ and $\vebpred$, costing $O(\log\log U)$ work.
    In total, the \textsc{BinarySearch} function takes $O(\log U\log\log U)=O(m'\log\log U)$ work.
  \end{itemize}
  The work for calling $\vebpred$ and $\vebsucc$ is also $O(\log\log U)$.
  In summary, the work of one \textsc{FindInterval} has work $O((1+m')\log\log U)$ for output size $m'$.
  Adding all the $|\bin|$ loop rounds together, the total work is $O((|\bin|+|\bout|)\log\log U)$.

  We then analyze the span. The longest dependence chain appears in either the $O(\log U)$ invocations to $\vebsucc$ (the sequential for-loop
  on \cref{line:domby:chaselogu}), or the \textsc{BinarySearch} ($\log U$ steps, each calling $\vebsucc$ and $\vebpred$). Both of them and \rangequery{} have span $O(\log U\log\log U)$.
  Therefore, the span of the algorithm is $O(\log U\log\log U)$.
\end{proof}

\hide{
Denote the $i$-th element in $B_{in}$ as $B_{in}[i]$. Since the $dp$ value of the elements in batch $B_{in}$ is increasing, for element $B_{in}[i]$, the elements that it dominates in range $[\vebsucc(\vt,B_{in}[i+1].x), U]$ must also be dominated by element $B_{in}[i+1]$. Therefore, when finding the elements dominated by $B[i]$, we can only consider range $[\vebsucc(\vt,B_{in}[i].x),\\\vebpred(\vt,B_{in}[i+1].x)]$ (For the $m$-th element, the corresponding range is $[\vebsucc(\vt,B_{in}[i].x),U]$). Because the ranges that each element in $B_{in}$ needs to find are non-overlapping now, the corresponding elements dominated by each element in $B_{in}$ can be found in parallel.

The process of finding the corresponding elements dominated by $B_{in}[i]$ can be divided into two parts. Firstly, \textsc{FindIndex} finds the exact range of the elements dominated by $B_{in}[i]$. Then, \textsc{RangeQuery} finds the batch of sorted elements in this exact range. Assuming the number of elements dominated by $B_{in}[i]$ is $m'[i]$. Here we only need to analyze the cost of \textsc{FindIndex}. From \cref{beginSeqFindIndex} to \cref{endSeqFindIndex}, if $m'[i]\leq \log U$, we will look for this element in turn. Otherwise, we perform a binary search by calling \textsc{BinarySearch}. Therefore, the cost of \textsc{FindIndex} is $\Theta(\MIN(m'[i],\log U)\log\log U)$. According to \cref{BatchRangeQueryTheorem} and the cost of \textsc{FindIndex}, we can get that processing the $i$-th element can be finished in $\Theta(m'[i]\log\log U)$ work and $\Theta(\log U\log\log U)$ span. Since all elements in $B_{in}$ are processed in parallel, it can be obtained that the work of \cref{vebdominatedby} is $\Theta(m'\log\log U)$, and the span is $O(\log m\log U\log\log U)$.
}

\section{Making \rangeveb{} Space-Efficient}
\label{app:space}

The space bound for a plain \veb{} tree is $O(U)$.
For a \rangeveb{} tree, since the universe size for each inner \veb{} tree is $O(U)=O(n)$ and $n$ inner trees in total, the total size will be $O(n^2)$.
This is space-inefficient and even initialize the tree may make the WLIS algorithm itself work-inefficient.
However, note that the total number of possible elements in \rangeveb{} is at most $O(n\log n)$, the same as a regular range tree.
In this section, we discuss solutions to make our \rangeveb{} space-efficient.

Many approaches can make a \veb{} tree space-efficient sequentially, i.e., using $O(n)$ space when there are $n$ keys present in the tree.
One of the approaches is to store the cluster arrays using a size-varying hash table instead of an array.
As such, the total space consumption will be proportional to the number of existing keys in the \veb tree.
However, this makes our algorithms and bounds randomized, and also complicates our parallel algorithms.

Here we propose a solution based on \emph{relabeling} all keys in each inner tree.
In particular, we will make the outer tree a perfectly balanced tree, and its shape and organization are fixed and never change during the algorithm.
Once we build the outer tree of the \rangeveb{} tree, %for each inner tree in the \rangeveb{} tree,
we know exactly which keys should present in each of the inner tree, and thus its size upper bound.
For a specific tree node $\tau$ in the outer tree with subtree size $n^*$,
its inner tree size is at most $n^*$.
Let $S$ be the set of the $n^*$ points in $\tau$'s subtree.
Note that unlike a range tree, where the inner tree of $\tau$ is an index on the entire set $S$,
the inner tree in \rangeveb{} tree is a \monoveb{} tree.
It maintains the staircase of $S$, which can only be a subset of $S$.
In this case, we will relabel the keys in $S$ to be $0$ to $n^*-1$, and the existing keys in the inner tree.
To do this, we have to preprocess all points, and build the outer tree in the \rangeveb{} tree first.
Then for each point (each object in the input sequence $A$), we find all the $O(\log n)$ inner trees it appears in.
In turn, this gives the set of points for each inner tree, and we relabel each of them.
Based on the results, we will maintain $O(\log n)$ lookup tables for each input object,
which is the new label in each of the inner tree containing it.
For each of the inner tree, we also maintain a lookup table that maps a label to the original input object.
In all the algorithms described on the inner trees of \rangeveb{} tree, we use the new labels as the key, instead of the original key in universe $O(n)$.

For an inner tree at level $l$ (the root is at level $0$), the universe of the inner tree is $O(n/2^{l})$, i.e., at most $O(n/2^{l})$ points are included in this inner tree.
Therefore, the total size of the inner trees is $O(n\log n)$.

%In this case, we can collect all
 
}

\end{document}